\makeatletter \@addtoreset{equation}{section} \makeatother
\newtheorem{Theorem}{Theorem}[section]
\newtheorem{Proposition}{Proposition}[section]
\newtheorem{Example}{Example}[section]
\def\be{\begin{equation}}
\def\ee{\end{equation}}
\def\ba{\begin{eqnarray}}
\def\ea{\end{eqnarray}}
\newcommand\nn{\nonumber}
\newcommand\q{\quad}
\def\Nl{{\mathchoice
{\setbox0=\hbox{$\displaystyle\rm N$}\hbox{\hbox to0pt
{\kern0.4\wd0\vrule height0.9\ht0\hss}\box0}}
{\setbox0=\hbox{$\textstyle\rm N$}\hbox{\hbox to0pt
{\kern0.4\wd0\vrule height0.9\ht0\hss}\box0}}
{\setbox0=\hbox{$\scriptstyle\rm N$}\hbox{\hbox to0pt
{\kern0.4\wd0\vrule height0.9\ht0\hss}\box0}}
{\setbox0=\hbox{$\scriptscriptstyle\rm N$}\hbox{\hbox to0pt
{\kern0.4\wd0\vrule height0.9\ht0\hss}\box0}}}}
\def\Zl{{\mathchoice
{\setbox0=\hbox{$\displaystyle\rm Z$}\hbox{\hbox to0pt
{\kern0.4\wd0\vrule height0.9\ht0\hss}\box0}}
{\setbox0=\hbox{$\textstyle\rm Z$}\hbox{\hbox to0pt
{\kern0.4\wd0\vrule height0.9\ht0\hss}\box0}}
{\setbox0=\hbox{$\scriptstyle\rm Z$}\hbox{\hbox to0pt
{\kern0.4\wd0\vrule height0.9\ht0\hss}\box0}}
{\setbox0=\hbox{$\scriptscriptstyle\rm Z$}\hbox{\hbox to0pt
{\kern0.4\wd0\vrule height0.9\ht0\hss}\box0}}}}
\def\Ql{{\mathchoice
{\setbox0=\hbox{$\displaystyle\rm Q$}\hbox{\hbox to0pt
{\kern0.4\wd0\vrule height0.9\ht0\hss}\box0}}
{\setbox0=\hbox{$\textstyle\rm Q$}\hbox{\hbox to0pt
{\kern0.4\wd0\vrule height0.9\ht0\hss}\box0}}
{\setbox0=\hbox{$\scriptstyle\rm Q$}\hbox{\hbox to0pt
{\kern0.4\wd0\vrule height0.9\ht0\hss}\box0}}
{\setbox0=\hbox{$\scriptscriptstyle\rm Q$}\hbox{\hbox to0pt
{\kern0.4\wd0\vrule height0.9\ht0\hss}\box0}}}}
\def\Rl{{\mathchoice
{\setbox0=\hbox{$\displaystyle\rm R$}\hbox{\hbox to0pt
{\kern0.4\wd0\vrule height0.9\ht0\hss}\box0}}
{\setbox0=\hbox{$\textstyle\rm R$}\hbox{\hbox to0pt
{\kern0.4\wd0\vrule height0.9\ht0\hss}\box0}}
{\setbox0=\hbox{$\scriptstyle\rm R$}\hbox{\hbox to0pt
{\kern0.4\wd0\vrule height0.9\ht0\hss}\box0}}
{\setbox0=\hbox{$\scriptscriptstyle\rm R$}\hbox{\hbox to0pt
{\kern0.4\wd0\vrule height0.9\ht0\hss}\box0}}}}
\def\Cl{{\mathchoice
{\setbox0=\hbox{$\displaystyle\rm C$}\hbox{\hbox to0pt
{\kern0.4\wd0\vrule height0.9\ht0\hss}\box0}}
{\setbox0=\hbox{$\textstyle\rm C$}\hbox{\hbox to0pt
{\kern0.4\wd0\vrule height0.9\ht0\hss}\box0}}
{\setbox0=\hbox{$\scriptstyle\rm C$}\hbox{\hbox to0pt
{\kern0.4\wd0\vrule height0.9\ht0\hss}\box0}}
{\setbox0=\hbox{$\scriptscriptstyle\rm C$}\hbox{\hbox to0pt
{\kern0.4\wd0\vrule height0.9\ht0\hss}\box0}}}}
\def\Hl{{\mathchoice
{\setbox0=\hbox{$\displaystyle\rm H$}\hbox{\hbox to0pt
{\kern0.4\wd0\vrule height0.9\ht0\hss}\box0}}
{\setbox0=\hbox{$\textstyle\rm H$}\hbox{\hbox to0pt
{\kern0.4\wd0\vrule height0.9\ht0\hss}\box0}}
{\setbox0=\hbox{$\scriptstyle\rm H$}\hbox{\hbox to0pt
{\kern0.4\wd0\vrule height0.9\ht0\hss}\box0}}
{\setbox0=\hbox{$\scriptscriptstyle\rm H$}\hbox{\hbox to0pt
{\kern0.4\wd0\vrule height0.9\ht0\hss}\box0}}}}
\def\Ol{{\mathchoice
{\setbox0=\hbox{$\displaystyle\rm O$}\hbox{\hbox to0pt
{\kern0.4\wd0\vrule height0.9\ht0\hss}\box0}}
{\setbox0=\hbox{$\textstyle\rm O$}\hbox{\hbox to0pt
{\kern0.4\wd0\vrule height0.9\ht0\hss}\box0}}
{\setbox0=\hbox{$\scriptstyle\rm O$}\hbox{\hbox to0pt
{\kern0.4\wd0\vrule height0.9\ht0\hss}\box0}}
{\setbox0=\hbox{$\scriptscriptstyle\rm O$}\hbox{\hbox to0pt
{\kern0.4\wd0\vrule height0.9\ht0\hss}\box0}}}}
\newcommand{\cc}{\mathcal C}
\newcommand{\ci}{\mathcal I}
\newcommand{\cm}{\mathcal M}
\newcommand{\cp}{\mathcal P}
\newcommand{\cq}{\mathcal Q}
\newcommand{\ct}{\mathcal T}
\newcommand{\fh}{\mathfrak{h}}
\def\nn{\nonumber}
\newcommand{\eqa}{\begin{eqnarray}}
\newcommand{\neqa}{\end{eqnarray}}
\newcommand{\p}{\partial}
\def\f{\frac}
\def\q{{\quad}}
\definecolor{bianca}{rgb}{0,0.,0.8}
\begin{document}

{\renewcommand{\thefootnote}{\fnsymbol{footnote}}

\title{Canonical linearized Regge Calculus: counting lattice gravitons with Pachner moves}
\author{Philipp A H\"ohn\footnote{e-mail address: {\tt phoehn@perimeterinstitute.ca}} \\
\small   Perimeter Institute for Theoretical Physics,\\
 \small 31 Caroline Street North, Waterloo, Ontario, Canada N2L 2Y5
}

\date{}

}

\setcounter{footnote}{0}
\maketitle

\vspace{-.9cm}

\begin{abstract}

We afford a systematic and comprehensive account of the canonical dynamics of 4D Regge Calculus perturbatively expanded to linear order around a flat background. To this end, we consider the Pachner moves which generate the most basic and general simplicial evolution scheme. The linearized regime features a vertex displacement (`diffeomorphism') symmetry for which we derive an abelian constraint algebra. This permits to identify gauge invariant `lattice gravitons' as propagating curvature degrees of freedom. The Pachner moves admit a simple method to explicitly count the gauge and `graviton' degrees of freedom on an evolving triangulated hypersurface and we clarify the distinct role of each move in the dynamics. It is shown that the 1--4 move generates four `lapse and shift' variables and four conjugate vertex displacement generators; the 2--3 move generates a `graviton'; the 3--2 move removes one `graviton' and produces the only non-trivial equation of motion; and the 4--1 move removes four `lapse and shift' variables and trivializes the four conjugate symmetry generators. It is further shown that the Pachner moves preserve the vertex displacement generators. These results may provide new impetus for exploring `graviton dynamics' in discrete quantum gravity models.

\end{abstract}

\section{Introduction}

The canonical formulation of a physical theory usually offers convenient tools for extracting its dynamical content and, at the same time, gives a clear picture of the time evolution of relevant degrees of freedom. Specifically, in gravitational physics, a Hamiltonian formulation -- with initial value problem and `equal time surfaces' -- allows for an intuitive picture of the dynamics and simplifies the identification and counting of physical degrees of freedom. Suggestively, the seminal paper \cite{Arnowitt:1962hi} by Arnowitt, Deser and Misner, introducing their canonical formulation of general relativity, carries the unequivocal title `The dynamics of general relativity'. The beauty of this formulation of general relativity lies in the fact that it gives the latter the interpretation of describing the dynamics of spatial hypersurfaces in spacetimes.

In this spirit, we shall attempt to explore the dynamics of Regge Calculus \cite{Regge:1961px,Williams:1996jb,Regge:2000wu}, the most well-known simplicial discretization of general relativity. More specifically, by building up on the general canonical formulation of Regge Calculus developed in \cite{Dittrich:2011ke,Dittrich:2013jaa}, we shall systematically investigate, in canonical language, the dynamics of perturbative Regge Calculus to linear order in an expansion around flat background solutions \cite{Rocek:1982fr,Rocek:1982tj,barrett1987fundamental,barrett1988convergence,barrett1988convergence2,Dittrich:2009fb,Bahr:2009ku}. This linearized regime is governed by an expansion of the Regge action to quadratic order around a flat background. The motivation is the same: to give an intuitive picture of the linearized Regge dynamics and to identify and clearly distinguish propagating `lattice graviton' from lapse and shift type gauge degrees of freedom.

The presence of curvature in a Regge triangulation generically breaks the continuum diffeomorphism symmetry \cite{Bahr:2009ku,Hamber:1996pj,Morse:1991te}. More precisely, for flat Regge solutions there exists a continuous gauge symmetry corresponding to displacements, within the flat embedding space, of vertices in the bulk of a triangulation which leave the geometry flat. This vertex displacement symmetry can be interpreted as the incarnation of diffeomorphisms in triangulated spacetimes. It persists in the linearized theory because in this regime solutions to the equations of motion are additive such that (linearized) solutions corresponding to displacements in flat directions can be added to solutions with linearized curvature -- without changing boundary data. However, to higher order in the expansion around flat triangulations, the symmetry becomes broken \cite{Dittrich:2009fb}. This has crucial consequences: since the presence of the vertex displacement gauge symmetry is configuration dependent, so is the dynamical content of Regge Calculus. 

This severely complicates a detailed and explicit account of the dynamics in full Regge Calculus beyond general aspects \cite{Dittrich:2011ke}. It is therefore instructive to restrict the dynamics to a well-defined sub-regime which permits an explicit exploration. This is where the linearized sector of Regge Calculus assumes a special role: it is the only regime in which there is gauge symmetry and at the same time a non-trivial propagation of geometric degrees of freedom. And it permits to solve the equations of motion. Furthermore, this near-flat sector of the theory may be relevant for the continuum limit where the diffeomorphism symmetry of classical general relativity ought to be restored and geometries are locally flat \cite{barrett1987fundamental,barrett1988convergence,barrett1988convergence2}.

Since the gauge symmetries are generically broken for curved Regge triangulations, first class symmetry generators for full 4D Regge Calculus do not arise \cite{Dittrich:2011ke,Dittrich:2013jaa}. However, the linearized theory, as we shall see, does feature proper symmetry generating constraints (see also \cite{Dittrich:2009fb}). These, finally, will help us to shed light on the concept of propagating lattice `gravitons' and their dynamics in Regge Calculus -- although their relation to the gravitons in continuum general relativity remains to be clarified. 

In order to afford a systematic and comprehensive account of canonical linearized Regge Calculus, we shall consider the simplicial dynamics generated by Pachner moves \cite{pachner1,pachner2,Dittrich:2011ke,Dittrich:2013jaa,Dittrich:2013xwa}. The Pachner moves are elementary and ergodic moves which locally change the spatial triangulation and constitute the most basic and general simplicial evolution scheme. As they change the spatial triangulation, they also lead to a varying number of degrees of freedom in discrete time which requires the notion of evolving phase and Hilbert spaces \cite{Dittrich:2011ke,Dittrich:2013jaa,Hoehn:2014fka,Hoehn:2014wwa,Hoehn:2014aoa}. As such, these moves can also be nicely interpreted as defining canonical coarse graining/lattice shrinking or refinement/lattice growing operations and thereby be used to study embeddings of coarser into finer phase or Hilbert spaces \cite{Dittrich:2012jq,Dittrich:2013jaa,Dittrich:2013xwa,Hoehn:2014fka,Hoehn:2014wwa}. Specifically, for loop quantum gravity such considerations have led to the discovery of a new geometric vacuum for the theory \cite{Dittrich:2014wpa} and a proposal for constructing both its dynamics and continuum limit \cite{Dittrich:2014ala,DG2}.

For 4D linearized Regge Calculus, the Pachner moves offer a simple and systematic method to count and describe the `generation' and `annihilation' of `lapse and shift' gauge modes and `graviton' degrees of freedom on the evolving spatial triangulated hypersurface. In summary, in this manuscript we shall
\begin{itemize}
\item elucidate the origin of the vertex displacement gauge symmetry in linearized Regge Calculus,
\item derive the (first class) constraints generating this symmetry for arbitrary triangulated hypersurfaces,
\item show that these constraints are preserved by the linearized dynamics,
\item identify `gravitons' as (potentially) propagating curvature degrees of freedom that are invariant under the vertex displacement symmetry,
\item demonstrate how to count such `gravitons' via Pachner moves, and
\item study the distinct role of each of the Pachner moves in the dynamics.
\end{itemize}

These results may provide new impetus for the discussion of `graviton propagators' in spin foam models of quantum gravity \cite{Rovelli:2005yj,Alesci:2007tg} -- which yield the Regge action in semiclassical expansions of the transition amplitudes \cite{Perez:2012wv,Conrady:2008mk,Barrett:2009gg} --, and, more generally, for a better understanding of discretization changing dynamics in discrete models of (quantum) gravity \cite{Thiemann:1996ay,Thiemann:1996aw,Thiemann:2007zz,Alesci:2010gb,Bonzom:2011jv,Bonzom:2011hm,Dittrich:2011ke,Dittrich:2013xwa,Hoehn:2014fka,Hoehn:2014wwa,Ziprick:2014kla}.

The remainder is organized as follows. In order to make this manuscript as self-contained as possible, we begin by reviewing Regge Calculus in section \ref{sec_regcal}, the Pachner move evolution scheme in section \ref{sec_pachner}, the general canonical formulation of Regge Calculus \cite{Dittrich:2011ke} in section \ref{sec_can}, and an argument from \cite{Dittrich:2009fb}, elucidating the relation between the contracted Bianchi identities and the vertex displacement symmetry of linearized Regge Calculus, in section \ref{bianchi}. (The acquainted reader may skip these sections.) Subsequently, in section \ref{sec_deg}, we discuss degeneracies of the Hessian and the Lagrangian two-form on flat background solutions, resulting from vertex displacement symmetry. Section \ref{sec_lincanon} introduces the canonical variables and constraints of linearized Regge Calculus, while section \ref{obs} identifies `lattice gravitons' as gauge invariant curvature degrees of freedom. In section \ref{sec_gravcount}, a general counting of `gravitons' and gauge degrees of freedom is carried out using the Pachner moves. A procedure to disentangle the gauge from the `graviton' variables of the linearized theory is provided in section \ref{sec_tmatrix}, before we finally discuss the canonical Pachner move dynamics of 4D linearized Regge Calculus in detail in section \ref{sec_linpach}. Section \ref{sec_disc} closes with a discussion and an outlook. Technical details of this paper have been moved to appendices \ref{app_deg} and \ref{app_linpach}.

\section{Synopsis of Regge Calculus}\label{sec_regcal}

The principal idea underlying Regge Calculus \cite{Regge:1961px,Williams:1996jb,Regge:2000wu} is to replace a given smooth four--dimensional space--time $(\cm,\mathbf{g})$ with $C^2$ metric $\mathbf{g}$ by a piecewise--linear metric living on a triangulation $\ct$ which is comprised of {\it flat} 4--simplices $\sigma$. The metric on $\ct$ is piecewise--linear because it is flat on every simplex $\sigma$ and simplices are glued together in a piecewise--linear fashion. In fact, Regge Calculus is usually considered on a fixed triangulation $\mathcal{T}$. This discrete (triangulated) space--time is commonly viewed as an approximation to the continuum space--time, but may equally well be taken as a regularization thereof. 

While the geometry within continuum general relativity can be entirely encoded in the metric $\mathbf{g}$, 
the length variables associated to all edges $e$ in the triangulation $\ct$, $\{l^e\}_{e \in \mathcal{T}}$, completely specify the (piecewise--linear) geometry of the triangulation (assuming generalized triangle inequalities are satisfied). 
In other words, the edge lengths $l^e$ of the triangulation completely encode the piecewise--linear flat metric living on $\ct$ and are therefore the configuration variables of standard (length) Regge Calculus. 
For formulations using other geometric variables (e.g., areas and angles) see, for instance, \cite{Dittrich:2008va,Bahr:2009qd}. 

The challenge in formulating the Regge action is to translate the Ricci curvature term $R$ from the Einstein--Hilbert action into the triangulation. To this end, note that from the edge lengths one can compute any dihedral angle $\theta^\sigma_t$ around any triangle $t$ in $\ct$. $\theta^\sigma_t$ is the inner angle in the 4--simplex $\sigma$ between the two tetrahedra sharing the triangle $t$. Curvature in the discrete arises as follows: take a triangle $t$ in the bulk of $\ct$. $t$ will be contained in many 4--simplices. (Levi--Civita) parallel transporting a vector along a closed path around $t$ rotates the vector by the deficit angle
\ba\label{eps}
\epsilon_t=2\pi-\sum_{ \sigma \supset t} \theta^\sigma_t
\ea
in the plane perpendicular to $t$ (the sum ranges over all 4--simplices $\sigma$ containing $t$). $\epsilon_t$ measures the deviation from $2\pi$ of the sum of the dihedral angles around the triangle and thereby the intrinsic curvature concentrated at $t$.\footnote{This notion of curvature is distributional and only has support on the hinges because any closed path which is contractible (i.e., does not wind around a triangle) will yield a trivial holonomy.} That is, it is the bulk triangles which carry the curvature of the triangulation in the form of deficit angles in 4D Regge Calculus. 

The Regge action, without a cosmological constant term (and in Euclidean signature), defining a discrete spacetime dynamics for
a 4D triangulation $\ct$ with boundary $\partial \ct$ and interior
$\ct^\circ:=\ct\backslash \partial \ct$ is given by summing over all curvature contributions and consists of a bulk and a boundary term \cite{Regge:1961px,Hartle:1981cf}\footnote{We work in units of $c=8\pi G=1$.}
\begin{eqnarray}\label{regge1}
S_R\;=\;\sum_{t\subset
\ct^\circ}A_t\,\epsilon_t\;+\;\sum_{t\subset\partial \ct}A_t\,\psi_t\,,
\end{eqnarray}
where $A_t$ denotes the area of the triangle $t$ and appears because $\epsilon_t$ only has support on $t$. The boundary term is in shape identical to the bulk term, except that deficit angles are replaced by extrinsic angles, 
\ba\label{regge2}
\psi_t= \pi\;-\;\sum_{\sigma\subset
t}\theta_t^\sigma\qquad\text{for }t\subset\partial \ct
\ea
which measure the deviation from $\pi$ of the sum of dihedral angles around the triangles in the boundary of $\ct$. Notice that all quantities appearing in (\ref{regge1}) are functions of $\{l^e\}_{e \in \mathcal{T}}$.

Thanks to the boundary term in (\ref{regge1}) the action is {\it additive} in the following sense: if we glue two pieces of triangulations together then the total action contribution of the resulting glued triangulation is simply the sum of the action contributions of the two pieces of triangulation that we glued together.\footnote{This is not true for an action $\sum_t\sin(\epsilon_t)A_t$  although it converges to the Regge action in the limit of small deficit angles.} 
This will be important in the canonical Pachner move evolution below where we glue a single 4--simplex onto a 3D triangulated hypersurface $\Sigma$ during each move. The action of a 4--simplex $\sigma$ is a pure boundary term
\ba\label{regge3}
S_\sigma=   \sum_{t \subset \sigma} A_t \left(k_t \pi - \theta^\sigma_t \right)\,.
\ea
The coefficient $k_t$ depends on the gluing process: if $t$ is a new triangle in $\Sigma$ $k_t=1$. If, on the other hand, it is already present in $\Sigma$ before the gluing of $\sigma$, only the new dihedral angle of the simplex must be subtracted from the already present extrinsic angle (\ref{regge2}) so that in this case $k_t=0$.

Varying the action (\ref{regge1}) with respect to the lengths $l^e$ of the bulk edges $e\subset\ct^\circ$ and fixing the lengths of the boundary edges $e\subset\partial \ct$ yields the Regge equations of motion. To this end, the Schl\"afli identity,
\ba\label{schlaefli}
\sum_{t \subset \sigma} A_t\, \delta \theta^\sigma_t=0\,,
\ea
is essential which shows that the variation of the deficit angles vanishes. (It is the four--dimensional generalization of the two--dimensional fact that the sum of the dihedral angles in a triangle is constant.) 
This gives the Regge equation of motion the following form:
\begin{eqnarray}\label{regge4}
\sum_{t\supset e}\frac{\partial A_t}{\partial l^e}\,\epsilon_t=\;0  \,.
\end{eqnarray}

In this article we will be primarily concerned with perturbations (to linear order in an expansion parameter) around special Regge solutions, namely around flat solutions with vanishing deficit angles $\epsilon_t=0$, $\forall\,t\in\ct^\circ$. Abstractly, flat Regge triangulations only occur for special boundary configurations, but concretely, of course, any triangulation of flat 4D space will be such a solution. Flat solutions are not unique because vertices in the bulk can be displaced within the flat 4D embedding space without changing flatness. This is ultimately the reason for a vertex displacement gauge symmetry for flat configurations and linear perturbations around them. We shall discuss this in detail in the sequel. On the other hand, there is strong numerical evidence \cite{Bahr:2009ku} that curved solutions to (\ref{regge4}) are generally unique. In consequence, the gauge symmetries of the flat and linearized sector of Regge Calculus are broken to higher orders \cite{Dittrich:2009fb}.

We shall henceforth work in Euclidean signature to avoid subtleties arising from causal structure which would only unnecessarily cloud the main results. Note, however, that the canonical formulation below makes {\it a priori} no assumption about the signature and is equally applicable to Lorentzian signature.

\section{Canonical evolution from Pachner moves}\label{sec_pachner}

We shall review crucial ingredients for a canonical evolution scheme for simplicial gravity. A key hurdle to be overcome in a general formulation of canonical Regge Calculus is {\it the problem of foliations}: different 3D triangulated hypersurfaces of a generic foliation of a 4D Regge triangulation support different numbers of edges and thus configuration variables (see figure \ref{foliation}). A generic Regge triangulation implies varying numbers of degrees of freedom in `time' and requires the notion of {\it evolving phase spaces} \cite{Dittrich:2011ke,Dittrich:2013jaa,Hoehn:2014aoa}. Furthermore, hypersurfaces will generically overlap.
\begin{SCfigure}
\centering
\psfrag{t}{`time'}
{\includegraphics[scale=.3]{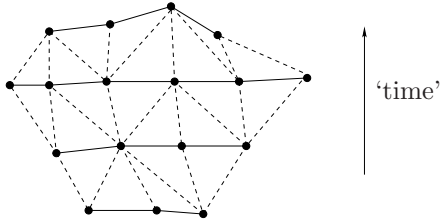}}
\hspace*{2cm}\caption{\small The {\it  problem of foliations}: in a generic triangulation, hypersurfaces overlap and are comprised of different numbers of simplices $\sigma$ which carry the variables of the theory. }\label{foliation}
\end{SCfigure}

First attempts \cite{Piran:1985ke,Friedman:1986uh} to circumvent this problem were based on the continuum 3+1 splitting and triangulating the spatial manifold $\Sigma$, but keeping a continuous time. Such an approach has two major problems: (1) The related discretization of the continuum first class constraint algebra leads to second class constraints (which are not automatically preserved under the dynamics) \cite{Piran:1985ke,Friedman:1986uh,Loll:1997iw}. (2) More importantly, the end result of such a continuous canonical evolution of a spatial triangulation clearly is not a four--dimensional spacetime triangulation and thus not compatible with the covariant picture. 

Instead, consistency requires that the canonical dynamics be equivalent to the covariant dynamics, directly following from the Regge action. In particular, a canonical dynamics consistent with the discrete action must produce a {\it discrete time evolution}. In contrast to the continuum, such a dynamics {\it cannot} be generated by a set of constraints via a Poisson bracket structure which necessarily has an infinitesimal action.\footnote{The exception are topological models for which a continuous time evolution can be recovered as a symmetry generated by constraints, namely the translation of vertices in time direction.} Rather, a well defined set of {\it evolution moves} is required to generate such a discrete dynamics; and any constraints arising from the Regge action should be consistent with the evolution moves \cite{Dittrich:2013jaa}.

What are possible evolution moves in simplicial gravity? We shall henceforth label discrete time steps by $k\in\mathbb{Z}$. Let $\Sigma_k$ be the 3D `spatial'\footnote{Since we work in Euclidean signature, we shall write `spatial' in quotation marks.} triangulated hypersurface of step $k$, constituting the `future boundary' of the triangulation `to the past' of $\Sigma_k$. An {\it evolution move} evolves $\Sigma_k$ to $\Sigma_{k+1}$ by gluing a 4D piece of triangulation $\ct_{k+1}$ to $\Sigma_k$ such that part of the boundary of $\ct_{k+1}$, consisting of tetrahedra, is identified with a subset of the tetrahedra of $\Sigma_k$. In particular, we disallow {\it singular evolution moves} which do not preserve the simplicial manifold property by identifying only lower than 3D subsimplices (triangles, edges or vertices) of the new simplex and the hypersurface. 

Obviously, there are many possibilities for such simplicial evolution moves. Fortunately, we can systematically handle all of them if we impose an additional restriction on the canonical dynamics: recall that in canonical general relativity one restricts from the outset to globally hyperbolic spacetimes with topology $\cm=\mathbb{R}\times\Sigma$, where $\Sigma$ is the spatial manifold. Similarly, we shall also restrict ourselves to evolution moves which preserve the `spatial' topology such that the 4D triangulation will likewise be of topology $\ct=\ci\times\Sigma$, where $\ci$ is some (closed) interval. 

Clearly, we could consider {\it global evolution moves} by fat slices which evolve an entire hypersurface at once such that $\Sigma_k\cap\Sigma_{k+1}=\emptyset$ (see figure \ref{cdtslices} for a schematic representation). While such global moves are relevant for the canonical dynamics \cite{Dittrich:2011ke,Dittrich:2013jaa}, as we shall see below, they are neither {\it elementary} because they can involve arbitrarily many 4--simplices, nor do they admit a nice geometric interpretation within the `spatial' hypersurfaces $\Sigma_k$. It is therefore difficult to give a systematic and completely general account of the canonical dynamics by means of global moves alone. 

\begin{SCfigure}
\psfrag{g}{glue}
\psfrag{sk}{$\Sigma_k$}
\psfrag{sk1}{$\Sigma_{k+1}$}
\includegraphics[scale=0.4]{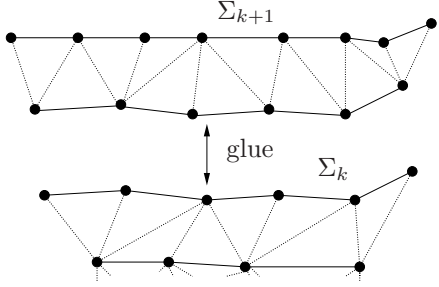}
 \hspace*{1.5cm} \caption{\small Global evolution by gluing fat slices at each step. This evolution is not elementary.}\label{cdtslices}
 \end{SCfigure}

Instead, a decomposition of the global moves into their smallest 4D building blocks, namely the 4-simplices, does allow for a clear and systematic description of the canonical dynamics under the above restrictions. We shall therefore consider the simplest evolution moves conceivable: gluing {\it locally} at each elementary time step $k$ a single 4-simplex $\sigma$ onto $\Sigma_k$ (see figure \ref{fig_glue}) \cite{Dittrich:2011ke,Hoehn:2011cm}. In this way, we evolve the hypersurface `forward in a multi-fingered time' through the full 4D Regge solution, in close analogy to canonical general relativity.
\begin{figure}[hbt!]
\begin{center}
\psfrag{g}{glue}
\psfrag{s}{$\sigma$}
\psfrag{s1}{$\Sigma_k$}
\psfrag{s2}{$\Sigma_{k+1}$}
\begin{subfigure}[b]{.22\textwidth}
\centering
\includegraphics[scale=.45]{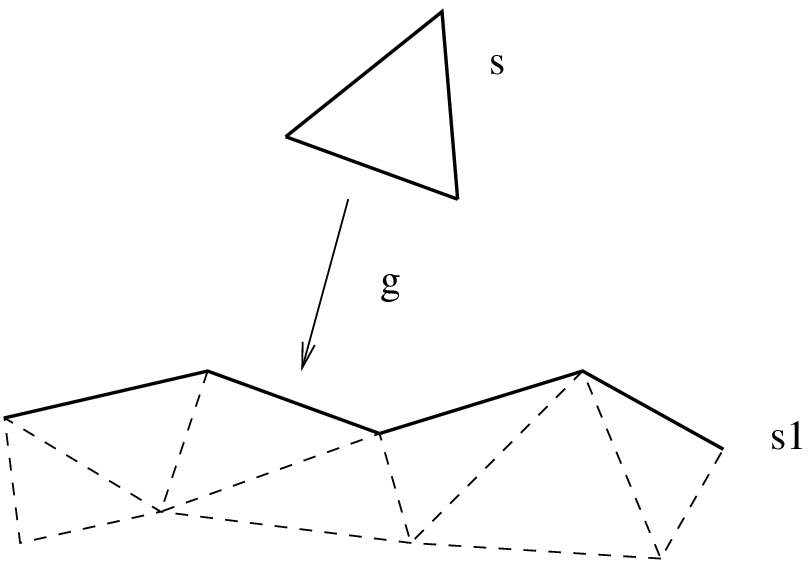}
\centering
\caption{\small }
\end{subfigure}
\hspace*{4.8cm}
\begin{subfigure}[b]{.22\textwidth}
\centering
\includegraphics[scale=.45]{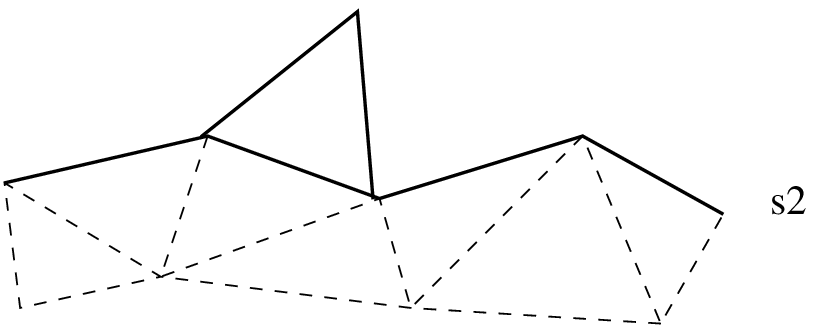}
\caption{\small }
\end{subfigure}
\caption{\small Schematic representation: at each time step $k$ a single 4--simplex $\sigma$ is glued to $\Sigma_k$ giving a new $\Sigma_{k+1}$.}\label{fig_glue}
\end{center}
\end{figure}
There are precisely four different possibilities for gluing a single 4--simplex $\sigma$ onto $\Sigma_k$: one can identify one, two, three or four of the five tetrahedra $\tau$ of $\sigma$ with tetrahedra in $\Sigma_k$, but clearly one cannot glue all five tetrahedra at once to $\Sigma_k$. These four different gluing types have the nice interpretation of 3D (or canonical) Pachner moves \cite{pachner1,pachner2} within $\Sigma_k$. More precisely, in order of increasing number of identified $\tau$ these are as follows (see \cite{Dittrich:2011ke} for further details).

\begin{description}
\item[1--4 Pachner move:] introduces 1 new vertex $v^*$ and 4 new boundary edges in $\Sigma_{k+1}$, but does not generate any internal triangles or bulk edges of $\ct$ (see figure \ref{14m}). Accordingly, this move neither produces curvature (deficit angles) nor a Regge equation of motion.
\item[2--3 Pachner move:] introduces 1 new boundary edge $n$ and 1 new bulk triangle $t^*$, but does not generate any internal edges (see figure \ref{23m}). Accordingly, this move produces curvature through a single new deficit angle, but does not give rise to a Regge equation of motion.
\item[3--2 Pachner move:] removes 1 old edge $o$ which becomes bulk of $\ct$ and introduces 3 new bulk triangles, but does not introduce new boundary edges (see figure \ref{23m}). Accordingly, this move produces one Regge equation of motion and three deficit angles.
\item[4--1 Pachner move:] removes 1 old vertex $v^*$ and 4 old edges and introduces 6 new bulk triangles, but does not introduce new boundary edges (see figure \ref{14m}). Accordingly, this move produces four Regge equations of motion and six deficit angles.
\end{description}
Notice that the 1--4 and 4--1 moves, as well as the 2--3 and 3--2 moves, are inverses of each other, respectively. All four Pachner moves are discretization changing evolution moves: they change the connectivity of $\Sigma$ and the number of edges contained in it and therefore produce a temporally varying discretization and number of degrees of freedom. In particular, the 1--4 and 2--3 moves introduce new but do not remove old edges and can thereby be interpreted as canonical refining (or lattice growing) moves. On the other hand, the 3--2 and 4--1 moves remove old but do not introduce new edges and may thus be viewed as canonical coarse graining (or lattice shrinking) moves \cite{Dittrich:2013xwa,Hoehn:2014fka,Hoehn:2014wwa,Hoehn:2014aoa}.
\begin{figure}[hbt!]
\begin{center}
\psfrag{v}{$v^*$}
\psfrag{1}{1--4}
\psfrag{2}{4--1}
\centering
\includegraphics[scale=.45]{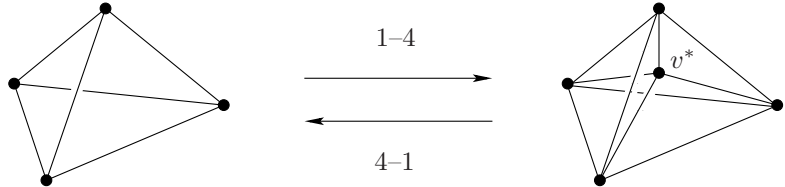}
\centering
\caption{\small The 1--4 and 4--1 Pachner moves within the 3D hypersurface $\Sigma$ are inverses of each other. }\label{14m}
\end{center}
\end{figure}

\begin{figure}[hbt!]
\begin{center}
\psfrag{t}{$t^*$}
\psfrag{n}{$n$}
\psfrag{1}{2--3}
\psfrag{2}{3--2}
\centering
\includegraphics[scale=.45]{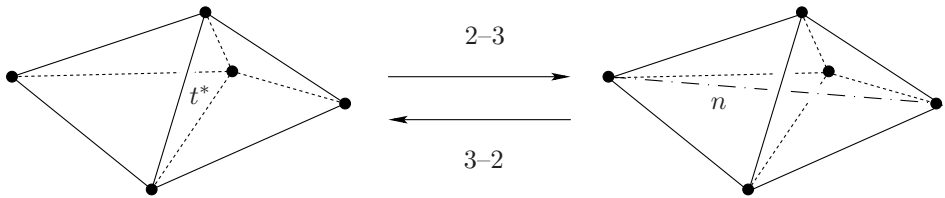}
\centering
\caption{\small The 2--3 and 3--2 Pachner moves within the 3D hypersurface $\Sigma$ are inverses of each other. For the 3--2 move replace the label $n$ by $o$ for the `old' removed edge.}\label{23m}
\end{center}
\end{figure}

The Pachner moves \cite{pachner1,pachner2} are exactly what is needed in order to construct the most general canonical evolution scheme for Regge Calculus. They are: 
\begin{itemize}
\item[(i)] {\it elementary} in that they involve only a single 4--simplex at the 4D level and between one and four tetrahedra within the 3D triangulated hypersurface $\Sigma$. That is, the moves are {\it local} in $\Sigma$.
\item[(ii)] {\it ergodic} piecewise-linear homeomorphisms; {\it any} two finite 3D triangulations $\Sigma$ and $\Sigma'$ of the same topology are connected via finite sequences of the Pachner moves. In particular, any other local or global evolution move can be decomposed into the latter.
\item[(iii)] applicable to arbitrary Regge triangulations.
\end{itemize}

Consequently, understanding the role of the four Pachner evolution moves in detail in canonical language means essentially understanding the full canonical dynamics of Regge Calculus. We shall review their canonical implementation and then focus on their specific role within linearized Regge Calculus.

\section{Review of canonical Regge Calculus}\label{sec_can}

In order to make this article self-contained, we shall summarize the canonical formulation of Regge Calculus, as developed in detail in \cite{Dittrich:2011ke,Dittrich:2013jaa,Hoehn:2011cm} and which we will later employ for the linearized theory.

Consider a {\it global evolution move} $0\rightarrow k$ from an initial triangulated hypersurface $\Sigma_0$ to another hypersurface $\Sigma_k$ as depicted in figure \ref{fig_global}. Denote by $S_{k}$ the piece of Regge action of the entire triangulation between $\Sigma_0$ and $\Sigma_k$. We shall label the edges in $\Sigma_k$ by $e$ and denote the corresponding lengths by $l^e_k$, while edges which are internal between $\Sigma_0$ and $\Sigma_k$ are labeled by $i$ and have lengths $l^i_k$. To distinguish the edges in the initial hypersurface $\Sigma_0$, we label these edges by $a$ and denote their lengths by $l^a_0$ such that $S_{k}=S_{k}(l^e_k,l^i_k,l^a_0)$. 
\begin{SCfigure}
\psfrag{sn0}{$\Sigma_0$}
\psfrag{sn1}{$\Sigma_k$}
\psfrag{s}{$S_{k}$}
\includegraphics[scale=.35]{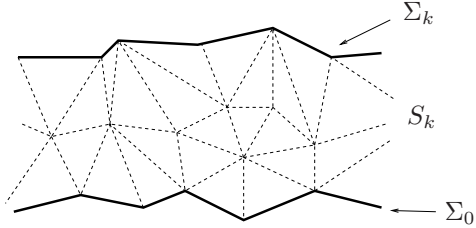} 
\hspace*{.5cm}\caption{\small Global evolution from $\Sigma_0$ to $\Sigma_k$. $S_k$ is the contribution of the Regge action to this piece of triangulation. For simplicity, we assume here $\Sigma_0\cap\Sigma_k=\emptyset$.}\label{fig_global} 
\end{SCfigure}
This piece of action can be taken as a Hamilton-Jacobi functional, i.e., generating function of the first kind, depending on `old' and `new' configuration variables. The canonical momenta conjugate to the lengths then read \cite{Dittrich:2011ke,Dittrich:2013jaa}
\ba
{}^+p^k_e=\f{\p S_k}{\p l^e_k},\q\q\q{}^+p^k_i=\f{\p S_k}{\p l^i_k}=0,\q\q\q{}^-p^0_a=-\f{\p S_k}{\p l^a_0}.\label{b6}
\ea
We emphasize that these equations are well defined regardless of the number of variables at the various time steps; they hold specifically for temporally varying `spatial' triangulations. The ${}^+p$ are called {\it post--momenta}, while the ${}^-p$ are called {\it pre--momenta}. In fact, the $(l^e_k,{}^+p^k_e)$ and $(l^i_k,{}^+p^k_i)$ form canonical Darboux coordinate pairs on the phase space $\cp_k:=T^*\cq_k$, where $\cq_k\simeq\mathbb{R}_+^{N_i+N_e}$ is the Regge configuration manifold of the edges labeled by $i$ and $e$ and $N_i,N_e$ are the numbers of edges labeled by $i,e$, respectively \cite{Dittrich:2013jaa}. Similarly, $(l^a_0,{}^-p^0_a)$ are canonical Darboux pairs on the phase space $\cp_0:=T^*\cq_0$, where $\cq_0\simeq\mathbb{R}_+^{N_a}$ is the Regge configuration manifold for $\Sigma_0$. In analogy to continuum general relativity, the pairs $(l^e_k,{}^+p^k_e)$ encode the intrinsic geometry of $\Sigma_k$ via $\{l^e_k\}_{e\subset\Sigma_k}$ and the extrinsic geometry of $\Sigma_k$ via $\{{}^+p^k_e\}_{e\subset\Sigma_k}$ --- an observation that becomes more apparent when writing $S_k$ in terms of the Regge action (\ref{regge1}) \cite{Dittrich:2011ke}. 

Notice that (\ref{b6}) defines an implicit Hamiltonian time evolution map $\fh_0:\cp_0\rightarrow\cp_k$; given initial data $(l^a_0,{}^-p^0_a)$ one can try to solve these equations for $(l^e_k,{}^+p^k_e)$ and $(l^i_k,{}^+p^k_i)$. This will clearly not give unique solutions in general. More precisely, if the coefficient matrices of the Lagrangian two-form on $\cq_0\times\cq_k$ (see especially appendix A of \cite{Dittrich:2013jaa}, but also \cite{Dittrich:2011ke,marsdenwest})
\ba
\Omega^k=-\frac{\partial^2\,S_{k}}{\partial l^e_k\partial l^a_0}\,dl^a_0\wedge dl^e_k-\frac{\partial^2\,S_{k}}{\partial l^{i}_k\partial l^a_0}\,dl^a_0\wedge d l^i_k\label{lagrange}
\ea
possess any left or right null vectors (degenerate directions) then the implicit function theorem implies that (\ref{b6}) does not define an isomorphism between $\cp_0$ and $\cp_k$. 

However, for a genuine canonical formulation of Regge Calculus we would rather like to have time evolution maps directly between the `equal time' hypersurfaces $\Sigma_0,\Sigma_k$. Noting that the second equation in (\ref{b6}) constitutes the Regge equations of motion for the bulk lengths $l^i_k$,\footnote{These lengths do not occur in any other action contribution.} one can solve these as a boundary value problem for $l^i_k(l^e_k,l^a_0,\kappa_k)$. This will generally not be uniquely possible in which case free parameters $\kappa_k$ must be chosen. Substituting this into the piece of Regge action $S_k$ yields an `effective' action (or Hamilton's principal function) $\tilde{S}_k(l^e_k,l^a_0)$ from which the $\kappa_k$ again drop out \cite{Dittrich:2013jaa}. This permits to define a new `effective' Hamiltonian time evolution map $\tilde{\fh}_0:\cp_0\rightarrow \tilde{\cp}_k$, where $\tilde{\cp}_k:=T^*\mathbb{R}_+^{N_e}$, via 
\ba
{}^+p^k_e=\f{\p\tilde{S}_k}{\p l^e_k},\q\q\q\q {}^-p^0_a=-\f{\p \tilde{S}_k}{\p l^a_0}\,,\label{b66}
\ea
i.e.\ for variables only associated to $\Sigma_0,\Sigma_k$. Note that $\dim\cp_0\neq\dim\tilde{\cp}_k$ is expressly allowed such that we have {\it evolving phase spaces}.

If the coefficient matrices of the `effective' Lagrangian two-form associated to $\tilde{S}_k$ \cite{Dittrich:2013jaa}
\ba
\tilde{\Omega}^k=-\frac{\partial^2\,\tilde{S}_{k}}{\partial l^e_k\partial l^a_0}\,d l^a_0\wedge d l^e_k\,,\label{lagrange2}
\ea
possess left or right null vectors, the implicit function theorem tells us that the left and right equations in (\ref{b66}) cannot each comprise an independent set such that there must exist primary constraint relations
\ba
{}^+C^k(l^e_k,{}^+p^k_e)=0,\q\q\q\q {}^-C^0(l^a_0,{}^-p^0_a)=0\label{pcons}
\ea 
among only variables of $\Sigma_k$ and $\Sigma_0$, respectively. These are the {\it post--} and {\it pre--constraints}, respectively, which form two first class sets of constraints \cite{Dittrich:2011ke,Dittrich:2013jaa}. In fact, the post--constraint surface $\cc_k^+\subset\tilde{\cp}_k$ is the image and the pre--constraint surface $\cc_0^-\subset\cp_0$ is the pre-image of $\tilde{\fh}_0$. It can be shown that $\tilde{\fh}_0$ preserves the symplectic structure restricted to these constraints. 

To every hypersurface $\Sigma_k$ there are associated {\it both} pre-- and post--momenta ${}^-p^k$ and ${}^+p^k$ and, if occurring, both pre-- and post--constraints ${}^-C^k,{}^+C^k$. For instance, if one continued the global evolution of figure \ref{fig_global} by another fat slice bounded by $\Sigma_k$ to the `past' and some $\Sigma_{k+X}$ to the `future', one likewise would have an `effective' Regge action contribution, call it $\tilde{S}_{k-}(l^e_k,l^{e'}_{k+X})$, associated to it. Equations (\ref{b66}) could then similarly be written for $\tilde{S}_{k-}$. It turns out that the Regge equations of motion for $e\subset\Sigma_k$, arising when gluing the new fat slice onto $\Sigma_k$, are equivalent to a {\it momentum matching} \cite{Dittrich:2011ke,Dittrich:2013jaa,Hoehn:2014aoa,marsdenwest}
\ba
p^k_e:={}^+p^k_e\overset{!}{=}{}^-p^k_e,\q\q\q\forall\,e\subset\Sigma_k\,.\label{mm}
\ea

As a consequence, on solutions to the equations of motion, both pre-- and post--constraints ${}^-C^k,{}^+C^k$ have to be implemented simultaneously. This has many non-trivial consequences, as the two sets may in general be independent. A thorough constraint analysis and classification has been devoted to this issue \cite{Dittrich:2013jaa,Hoehn:2014aoa} which we shall not repeat here. But in summary: (1) gauge symmetry generators are coinciding pre-- and post--constraints $C^k={}^-C^k={}^+C^k$ and first class (we shall see them as vertex displacement generators in linearized Regge Calculus below), (2) when taken together, ${}^-C^k,{}^+C^k$, can become second class, fixing free parameters. These two cases stand in close analogy to the continuum. However, there are two further possibilities which are specific to a temporally varying number of degrees of freedom in the discrete: (3) pre--constraints can be {\it coarse graining conditions} \cite{Dittrich:2013xwa,Hoehn:2014fka,Hoehn:2014wwa,Hoehn:2014aoa} which are first class (thus reducing the dynamical content by one canonical pair each) but no symmetry generators. These pre--constraints ensure that a finer 3D hypersuface $\Sigma_k$, carrying more dynamical data, can be consistently evolved into a coarser $\Sigma_{k+X}$ which can support less dynamical information than $\Sigma_k$. Similarly, (4) post--constraints can be {\it refining conditions} \cite{Dittrich:2013xwa,Hoehn:2014fka,Hoehn:2014wwa,Hoehn:2014aoa,Dittrich:2014ala,Dittrich:2014wpa,DG2} which are first class but not gauge symmetry generators. These guarantee that the smaller amount of dynamical information of a coarser $\Sigma_k$ can be consistently embedded in the larger phase space of a finer $\Sigma_{k+X}$.

The discretization changing dynamics in discrete gravity will generically generate such coarse graining and refining constraints and we shall also see them below in the Pachner moves. This is important because it entails that canonical constraints generically arise in Regge Calculus which are {\it not} the discrete analogue of Hamiltonian and diffeomorphism constraints, but which assume another crucial role. In particular, as mentioned in section \ref{sec_regcal}, the existence of the vertex displacement gauge symmetry (the discrete analogue of diffeomorphisms) is highly configuration dependent in Regge Calculus; this symmetry is broken for curved solutions \cite{Bahr:2009ku}. One can therefore not expect a discrete version of the Dirac hypersurface deformation algebra for generic Regge solutions \cite{Dittrich:2011ke,Bonzom:2013tna}. However, if the vertex displacement symmetry is present in the solution, the set of pre-- and post--constraints will contain the generators of these vertex displacements. In the sequel, we shall see this explicitly for linearized Regge Calculus for which these vertex displacement generators even satisfy an abelian algebra. This is as close as one comes to a consistent hypersurface deformation algebra in standard Regge Calculus with flat simplices.

In this article we will be concerned with the notion of propagating degrees of freedom on temporally varying discretizations. To this end, consider again the global evolution $0\rightarrow k$ of figure \ref{fig_global}. In the presence of pre-- and post--constraints (\ref{pcons}) there will generally not be sufficient equations in (\ref{b66}) for $\tilde{\fh}_0$ to be invertible. For every post--constraint ${}^+C^k$ there will exist an {\it a priori free} configuration datum $\lambda_k$ which cannot be {\it predicted} via $\tilde{\fh}_0$ \cite{Dittrich:2013jaa,Hoehn:2014aoa}. Similarly, for every pre--constraint ${}^-C^0$ there will exist an {\it a posteriori free} configuration datum $\mu_0$ that cannot be {\it postdicted} using $\tilde{\fh}_0$ and given `final' data $(l^e_k,{}^+p^k_e)$. {\it A priori} and {\it a posteriori free} data therefore do not correspond to degrees of freedom which propagate in the global move $0\rightarrow k$.  

Instead, the propagating data of the move $0\rightarrow k$ corresponds to those canonical data at $\Sigma_0$ and $\Sigma_k$ which can be uniquely mapped into each other, using the `effective' time evolution map $\tilde{\fh}_0$ (\ref{b66}). This set of propagating data is isomorphic to $\cp_0//\cc_0^-\simeq\tilde{\cp}_k//\cc_k^+$ and thus is incarnated at step $0$ as the set of Dirac {\it pre--observables} on $\cc_0^-$ which Poisson commute with all ${}^-C^0$ and at $k$ as the set of Dirac {\it post--observables} on $\cc^+_k$ which Poisson commute with all ${}^+C^k$ \cite{Dittrich:2013jaa,Hoehn:2014aoa}. In linearized Regge Calculus we shall see them as `lattice gravitons' below.

It is important to note that in Regge solutions with temporally varying `spatial' discretization $\Sigma$, the notion of a propagating degree of freedom as a pre-- or post--observable depends on the triangulated spacetime region and always requires {\it two} time steps -- in contrast to the continuum. For instance, the pre--observables at $\Sigma_0$ depend on the `future' hypersurface $\Sigma_k$. Gluing another fat slice onto $\Sigma_k$ which produces a new coarser `future' hypersurface $\Sigma_{k+X}$ and which thus comes with {\it coarse graining pre--constraints} at $k$, in fact, generates new `effective' coarse graining constraints at $\Sigma_0$ too in order to restrict the initial data on $\Sigma_0$ to the subset which consistently maps onto the coarse graining pre--constraints at $k$ \cite{Dittrich:2013jaa,Hoehn:2014fka,Hoehn:2014wwa,Hoehn:2014aoa}. That is, any pre--observable propagating from $0$ via $k$ to $k+X$ must also Poisson commute with the new `effective' pre--constraints on $\Sigma_0$. In this way, further evolution into the future can reduce the number of degrees of freedom which propagate from $\Sigma_0$ to the `future' hypersurface under consideration. Nevertheless, the dynamics is fully consistent -- albeit spacetime region dependent -- and reflects the fact that canonical coarse graining or refining operations change the dynamical content of a given 3D hypersurface $\Sigma_0$ (for further details, we refer the reader to \cite{Dittrich:2013jaa,Hoehn:2014fka,Hoehn:2014wwa,Hoehn:2014aoa,Dittrich:2013xwa}).

Finally, we review the implementation of the Pachner moves into canonical language. A Pachner move $\Sigma_k\rightarrow \Sigma_{k+1}$ can be viewed as locally updating the global move $\Sigma_0\rightarrow\Sigma_k$ to $\Sigma_0\rightarrow\Sigma_{k+1}$. Accordingly, the canonical data\footnote{We shall henceforth often drop the superscript ${}^+$ on the post--momenta, tacitly assuming that  (\ref{mm}) holds.} $(l^e_k,p^k_e)$ must be updated in the course of the move. To clarify this, we introduce a new notation specific to Pachner moves: edges in $\Sigma_k\cap\Sigma_{k+1}$ will be labeled by $e$, `new' edges introduced during the 1--4 or 2--3 moves which occur in $\Sigma_{k+1}$ but not in $\Sigma_k$ will be labeled by $n$, and `old' edges removed during the 2--3 and 4--1 moves which appear in $\Sigma_k$ but not in $\Sigma_{k+1}$ will be labeled by $o$. In order to describe the temporally varying number of variables, it is convenient to {\it extend} the phase spaces: for every pair $(l^n_{k+1},p^{k+1}_n)$ which appears at $k+1$ but not $k$ one can artificially introduce a new spurious pair $(l^n_{k},p^{k}_n)$ and thereby extend the phase space at $k$. Similarly, one can do the time reverse for canonical pairs labeled by $o$ such that one has {\it extended} phase spaces of equal dimension before and after the move.

The correct local time evolution equations are given by {\it momentum updating} which for the 1--4 and 2--3 Pachner moves (see figures \ref{14m} and \ref{23m}) is identical in shape and reads  \cite{Dittrich:2011ke,Dittrich:2013jaa}
\ba\label{4c6}
l^e_k&=&l^e_{k+1}\,,\q\q\, p^{k+1}_e\,=\,p^k_e+\frac{\partial S_\sigma(l^e_{k+1},l^n_{k+1})}{\partial l^e_{k+1}}\,,\\
p^{k}_n&=&0 \,,\q\q\q\,\,\, p^{k+1}_n\,=\, \frac{\partial S_\sigma(l^e_{k+1},l^n_{k+1})}{\partial l^n_{k+1}}\,,
\label{4c6c}
\ea
except that $n$ runs over four new edges for the 1--4 move and a single new edge in the 2--3 move. $S_\sigma(l^e_{k+1},l^n_{k+1})$ is the Regge action (\ref{regge3}) of the newly glued 4--simplex $\sigma$. (We refer to \cite{Dittrich:2011ke} for details on how to write (\ref{4c6}--\ref{c14c}) in terms of the Regge action.) The last equation in (\ref{4c6c}) defines four post--constraints ${}^+C^{k+1}_n:=p^{k+1}_n-\frac{\partial S_\sigma(l^e_{k+1},l^n_{k+1})}{\partial l^n_{k+1}}$ for the 1--4 and one similar post--constraint for the 2--3 move. In particular, there are no equations of motion involved in these moves such that $l^n_{k+1}$ is unpredictable and thus {\it a priori free}, given the data at $k$; notice that $l^n_{k+1}$ and ${}^+C^{k+1}_n$ are conjugate. (The spurious $l^n_k$ are pure gauge.) This has distinct consequences for the two types of moves:
\begin{description}
\item[1--4 move:] The four unpredictables $l^n_{k+1}$ are the lengths of the four new edges at the new vertex $v^*$. These can be interpreted as lapse and shift degrees of freedom. We shall see below that these are indeed gauge degrees of freedom in linearized Regge Calculus and the four corresponding ${}^+C^{k+1}_n$ turn into `discrete diffeomorphism' generators.

\item[2--3 move:] The deficit angle around the newly generated bulk triangle $t^*$ depends on the unpredictable $l^n_{k+1}$. Thus, the new curvature variable is itself {\it a priori free}. We shall see below that, in the linearized theory, this unpredictable deficit angle corresponds to a gauge invariant `graviton' and the corresponding single ${}^+C^{k+1}_n$ assumes the role of a {\it refinement consistency condition}.

\end{description}

On the other hand, for both 3--2 and 4--1 Pachner moves momentum updating is given by \cite{Dittrich:2011ke,Dittrich:2013jaa}
\ba\label{c14}
l^e_{k+1}&=&l^e_{k}\,,\q\q\q p^{k}_e\,=\,p^{k+1}_e-\frac{\partial S_\sigma(l^e_k,l^o_k)}{\partial l^e_{k}}\, ,\label{c14b}\\
p^{k+1}_o&=&0 \,,\q\q\q \,\,p^{k}_o\,=\,- \frac{\partial S_\sigma(l^e_k,l^o_k)}{\partial l^o_{k}}
  \, ,\label{c14c}
\ea
except that $o$ runs over a single edge for the 3--2 move and over four edges for the 4--1 move. The last equation in (\ref{c14c}) constitutes a single pre--constraint for the 3--2 and four pre--constraints for the 4--1 move. These pre--constraints ${}^-C^k_o:=p^k_o+\frac{\partial S_\sigma(l^e_k,l^o_k)}{\partial l^o_{k}}$ are equivalent to the Regge equations of motion for the new bulk edges labeled by $o$. (The spurious $l^o_{k+1}$ are pure gauge.) The repercussions are:
\begin{description}
\item[3--2 move:] The pre--constraint is generally non-trivial. It can be viewed as a {\it coarse graining consistency condition}. This will also be true in the linearized theory below where, furthermore, $l^o_k$ will correspond to an `annihilated lattice graviton'.

\item[4--1 move:] While the four ${}^-C^k_o$ may be non-trivial in full Regge Calculus, we shall see that they are automatically satisfied and `diffeomorphism' generators in the linearized theory. The $l^o_k$ are the lengths of the four edges adjacent to the removed vertex $v^*$, corresponding to lapse and shift.

\end{description}

The notion of propagation remains the same for Pachner moves: variables at $k+1$ which are predictable under momentum updating correspond to propagating degrees of freedom on the evolving phase spaces.

A final observation: the general pre-- and post--constraints (\ref{4c6c}, \ref{c14c}) of the Pachner moves are associated to {\it edges}. By contrast, the gauge symmetry of flat and linearized Regge Calculus corresponds to {\it vertex} displacements. We shall now turn to the linearized theory and explore how the vertex displacement generators can be produced from the edge pre-- and post--constraints.

\section{Bianchi identities and vertex displacement symmetry} \label{bianchi}

The contracted Bianchi identities, $\nabla^aG_{ab}=0$, of continuum general relativity follow from the diffeomorphism invariance of the Einstein--Hilbert action \cite{wald}. They constitute four differential relations among the ten Einstein field equations such that the latter are not fully independent and the ten metric components cannot be uniquely computed -- given enough boundary or initial data.

The situation is somewhat similar for Regge Calculus \cite{miller,Gentle:2008fy,Morse:1991te,Rocek:1982fr,Rocek:1982tj,Hamber:2001uq,Dittrich:2009fb}. In contrast to the continuum, however, the contracted Bianchi identities are satisfied as geometrical identities on rotation matrices rather than dynamical variables \cite{Hamber:2001uq,Freidel:2002dw} such that they generally do not render the Regge equations of motion interdependent. The exception is the regime of small deficit angles (and sufficiently `fat' simplices) where an approximate interdependence among the equations of motion arises thanks to the contracted Bianchi identities \cite{Morse:1991te}. In fact, this interdependence becomes exact in the linearized theory on a flat background triangulation. This is the origin of many of the special dynamical features of linearized Regge Calculus which shall be discussed in the core of this manuscript. For the sequel, it is therefore necessary to summarize the relation between the contracted Bianchi identities, the degeneracies of the Hessian of the flat background Regge action and the vertex displacement gauge symmetry of linearized Regge Calculus as originally clarified in \cite{Dittrich:2009fb}.

In a flat triangulation the bulk vertices can be freely displaced within the flat embedding space without affecting the (vanishing) deficit angles and thereby without violating the Regge equations of motion (\ref{regge4}). Such vertex displacements in flat directions are thus gauge symmetries of the Regge action {\it on flat solutions} \cite{Rocek:1982fr,Rocek:1982tj,Hamber:1992df,Dittrich:2008pw,Bahr:2009ku,Dittrich:2007wm}. Infinitesimally, for every vertex $v$, the corresponding length changes $\delta l^e$ of the adjacent edges are encoded in a set of four vectors $\delta l^e_{vI}=Y^e_{vI}$ ($I=1,\ldots ,4$), where
\be
Y^e_{vI}=\frac{\vec{B}_I\cdot \vec{E}_v^e}{\sqrt { \vec{E}_v^e \cdot  \vec{E}_v^e}}\,.
\ee
Every edge $e$ connected to $v$ is associated to a 4D vector $\vec{E}_v^e$ in the embedding space which points along the edge and whose length coincides with the edge length. The $\vec{B}_I$ comprise a basis in the 4D flat embedding space corresponding to four linearly independent displacement directions. Clearly, the components of $Y_{vI}^e$ for edges $e$ not connected to $v$ vanish. 

On flat backgrounds, the $Y^e_{vI}$ leave the deficit angles (\ref{eps}) around bulk triangles $t$ invariant
\ba\label{Yeps}
Y^e_{vI}\frac{\partial \epsilon_t}{\partial l^e}\,\,\Big|_\text{flat}=0\q \q \forall \q v,I,t\,.
\ea
Contracting with a factor $\partial A_t/\partial l^{e'}$ and summing over triangles, yields
\ba\label{bia3}
Y^e_{vI}  \sum_{t} \frac{\partial A_{t}}{\partial l^{e'}}  \frac{\partial \epsilon_t}{\partial l^e}\,\,\Big|_\text{flat}=0 \, .
\ea
Notice that $e'$ can also be a boundary edge.

On the other hand, thanks to the Schl\"afli identity (\ref{schlaefli}), the matrix of second partial derivatives of the Regge action (\ref{regge1}) on a flat triangulation, $\epsilon_t=0$, $\forall\,t$, reads
\ba\label{p2SR}
\frac{\partial^2 S_R}{\partial l^{e'}\partial l^e} =\sum_{t} \frac{\partial A_{t}}{\partial l^{e'}}  \frac{\partial \epsilon_t}{\partial l^e}\,\,\Big|_\text{flat}\,,
\ea
where $e,e'$ either are both bulk edges or one of them is a bulk and the other a boundary edge (if both edges were boundary edges this relation would not hold). Hence, (\ref{bia3}) shows that this matrix is degenerate with the vector fields $Y^e_{vI}$, $v\subset\ct^\circ$, defining the degenerate directions. Furthermore, (\ref{p2SR}) entails that the derivatives appearing in (\ref{bia3}) commute if both $e,e'$ are edges in the bulk $\ct^\circ$ of the triangulation. In particular, the Hessian of the Regge action is given by the matrix of second partial derivatives with respect to the bulk length variables, i.e.\
\ba
H_{ee'}:=\frac{\partial^2 S_R}{\partial l^{e'}\partial l^e} =\sum_{t} \frac{\partial A_{t}}{\partial l^{e'}}  \frac{\partial \epsilon_t}{\partial l^e}\,\,\Big|_\text{flat}\,\q\q e,e'\subset\ct^\circ\,,
\ea
such that (\ref{bia3}) implies that the Hessian is degenerate, $Y^e_{vI}H_{ee'}=0$, $v\subset\ct^\circ$. 

Next, in the linearized theory, the lengths $l^e={}^{(0)}l^e+ \varepsilon y^e+ O(\varepsilon^2)$ are expanded to linear order in an expansion parameter $\varepsilon$. The previous equations imply that the linearized Regge equations of motion are, in fact, {\it linearly dependent} with four relations per vertex $v$,
\ba\label{bia5}
Y^e_{vI}  \sum_{t} \frac{\partial A_{t}}{\partial l^{e}}  \frac{\partial \epsilon_t}{\partial l^{e'}}\,\,\Big|_\text{flat} \,\,y^{e'}=0 \, .
\ea
These equations constitute the linearized Bianchi identities and can be independently derived from a first order expansion of the `approximate Bianchi identities' \cite{miller,Gentle:2008fy,Morse:1991te}
\ba\label{001a}
Y^e_{vI}  \sum_{t} \frac{\partial A_{t}}{\partial l^{e}} \epsilon_t \approx 0 \, .
\ea
The linearized Bianchi identities (\ref{bia5}) and the degeneracy of the Hessian, $Y^e_{vI}H_{ee'}=0$, are
equivalent.

In conclusion, as a consequence of the vertex displacement gauge symmetry, in linearized Regge Calculus four length variables per vertex cannot be determined from the equations of motion. This symmetry is directly inherited from the flat background where, per bulk vertex, one obtains a four--parameter set of flat solutions so that the extremum of the action corresponding to flat solutions is not an isolated one, but admits four constant directions.

\section{Degeneracies of the Hessian and the Lagrangian two--form}\label{sec_deg}

In the sequel we shall frequently employ (effective) Hessian matrices and Lagrangian two--forms of flat Regge triangulations since these matrices of second partial derivatives of the Regge action evaluated on a flat background solution {\it define} the linearized dynamics. (In the linearized theory one expands the Regge action to quadratic order around a flat solution with the linear terms being zero on account of the background solution.) Consequently, we wish to examine the properties of these matrices for Regge Calculus evaluated on a flat solution and their relation with the vectors $Y_{vI}$ further.


To this end, consider again the global evolution $0\rightarrow k$ from section \ref{sec_can}. Pick a vertex $v\subset\Sigma_k$ and glue a piece $\ct_{\text{rest}}$ of flat triangulation onto $\Sigma_k$ such that this vertex $v$ becomes internal, i.e.\ such that the 4D star\footnote{The star of a vertex is the collection of all subsimplices in the triangulation which contain $v$ as a subsimplex.} of $v$ is completed and all edges adjacent to $v\subset\Sigma_k$ become bulk (see figure \ref{4dstarcomp}). We need the contribution from this additional piece of triangulation, in order to make use of the results of the previous section \ref{bianchi}. The particulars of $\ct_{\text{rest}}$ are not important; what matters is that $v$ becomes internal.
\begin{SCfigure}
\psfrag{sn0}{$\Sigma_0$}
\psfrag{sn1}{$\Sigma_k$}
\psfrag{t}{$\ct_{\text{rest}}$}
\psfrag{v}{$v$}
\psfrag{s}{$S_{k}$}
\includegraphics[scale=.35]{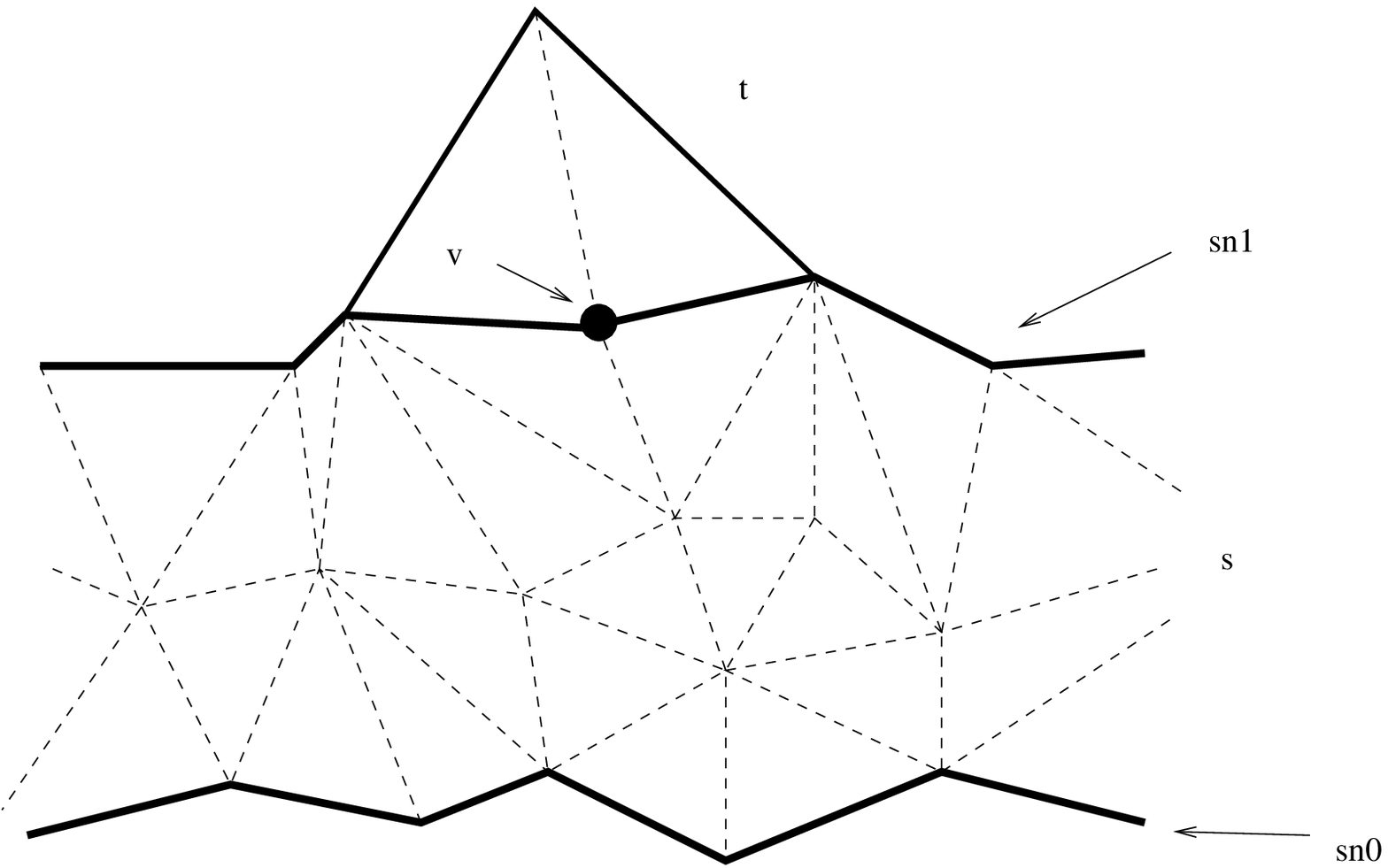} 
\hspace*{.5cm}\caption{\small Evolution from $\Sigma_0$ to $\Sigma_k$. Glue a piece of triangulation $\ct_{\text{rest}}$ onto $\Sigma_k$ in order to complete the 4D star of $v\subset\Sigma_k$.}\label{4dstarcomp} 
\end{SCfigure}

Using this and the results of section \ref{bianchi}, we show in appendix \ref{app_deg} that for a Hessian null vector $Y_{vI}$ at $v$, 
\ba
\Omega^{k}_{ae}Y^e_{vI}+\Omega^k_{ai}Y^i_{vI}=-Y^e_{vI}\frac{\partial^2\,S_{k}}{\partial l^e_k\partial l^a_0}-Y^{i'}_{vI}\frac{\partial^2\,S_{k}}{\partial l^{i'}_k\partial l^a_0}=0\,,
\ea
such that the relevant components of $Y_{vI}$ also define a {\it right null vector of the Lagrangian two--form} $\Omega^{k}$ (\ref{lagrange}) at step $k$. Furthermore, it is shown in appendix \ref{app_deg} that the `spatial' components of $Y^e_{vI}$ associated to the edges $e\subset\Sigma_k$ define also right null vectors of the `effective' Lagrangian two--form $\tilde{\Omega}^{k}$ (\ref{lagrange2}), 
\ba\label{omegadeg}
\tilde{\Omega}^{k}_{ae}Y^e_{vI}=-Y^e_{vI}\frac{\partial^2\,\tilde{S}_{k}}{\partial l^e_k\partial l^a_0}=0\,,
\ea
where (the $T_A$ are a maximal set of linearly independent non-degenerate directions of $\frac{\partial^2\,S_{k}}{\partial l^{i_1}_k\partial l^{i_2}_k}$ which are necessary in order to factor our the null direction of the latter matrix)
\ba
\tilde{\Omega}^k_{ae}&:=&-\frac{\partial^2\,\tilde{S}_{k}}{\partial l^e_k\partial l^a_0}=-\frac{\partial^2\,S_{k}}{\partial l^e_k\partial l^a_0}+\frac{\partial^2\,S_{k}}{\partial l^e_k\partial l^i_k}\,T^i_A\left(T^{i_1}_A\frac{\partial^2\,S_{k}}{\partial l^{i_1}_k\partial l^{i_2}_k}T^{i_2}_{A'}\right)^{-1}T^{i'}_{A'}\,\frac{\partial^2\,S_{k}}{\partial l^{i'}_k\partial l^a_0}\,.\label{eff2form}
\ea

In identical manner one shows that $Y_{vI}$ likewise defines a {\it left null vector of the Lagrangian two--form} $\Omega^{k+x}$ at step $k$ (and similarly of its `effective' version), where $\Omega^{k+x}$ arises from the action contribution $S_{k-}$ associated to the piece of `future' triangulation corresponding to the forward evolution from $\Sigma_k$ to some $\Sigma_{k+x}$ (see also the discussion in \cite{Dittrich:2011ke}).

In section \ref{bianchi} it was shown that the complete vectors $Y_{vI}$ (including the components associated to {\it all} bulk edges adjacent to $v$) are null vectors of the `bare' Hessian. In appendix \ref{app_deg} it is demonstrated that, similarly, the `spatial' components $Y^e_{vI}$ define degenerate directions of the `effective Hessian' $\tilde{H}_{ee'}$,
\ba\label{effYhesse}
Y^e_{vI}\tilde{H}_{ee'}=0\,, 
\ea
for which the lengths of all edges in the triangulation between $\Sigma_0$ and $\Sigma_k$ and in $\ct_{\text{rest}}$ are integrated out except the lengths $l^e_k$ of the `spatial' edges in $\Sigma_k$.

In the remainder of this article we shall work with the `effective' Hessian and Lagrangian two-forms of the Regge action. 
The remaining dynamical variables $l^e_k$ are associated to the hypersurface $\Sigma_k$ -- defining time step $k$ -- and thus the relevant variables for the canonical formulation. The matrices of second derivatives of the `effective' action will define the linearized canonical dynamics.

\section{Canonical variables and constraints of the linearized theory}\label{sec_lincanon}

The linearized theory of canonical Regge Calculus is given by an expansion of the canonical variables $l_k^e={}^{(0)}l_k^e+\varepsilon y_k^e+O(\varepsilon^2)$, $p^k_e={}^{(0)}p^k_e+\varepsilon\pi^k_e+O(\varepsilon^2)$, to linear order in some expansion parameter $\varepsilon$ around a flat background triangulation. ${}^{(0)}l^e_k,{}^{(0)}p^k_e$ denote the canonical variables of the flat background solution.

Consider again the evolution $0\rightarrow k$ from an initial triangulated hypersurface $\Sigma_0$ to another hypersurface $\Sigma_k$. Using this expansion, it follows from the expressions in (\ref{b6}) that the equations for the linearized pre-- and post--momenta now read\footnote{If $\Sigma_0\cap\Sigma_k\neq\emptyset$, we count any edges contained in this overlap simply to step $k$. That is, the corresponding linearized variables are among the $y^e_k,\pi^k_e$.}
\ba
{}^+\pi^k_e&=&\,\,\,\,\f{\p^2S_{k}}{\p l^e_k\p l^a_0}y^a_0+\f{\p^2 S_{k}}{\p l^e_k\p l^i_k}y^i_k +\f{\p^2S_{k}}{\p l^e_k\p l^{e'}_k}y^{e'}_k\,,\nn\\
{}^+\pi^k_i&=&\,\,\,\,\f{\p^2S_{k}}{\p l^i_k\p l^a_0}y^a_0+\f{\p^2 S_{k}}{\p l^i_k\p l^{i'}_k}y^{i'}_k +\f{\p^2S_{k}}{\p l^i_k\p l^{e'}_k}y^{e'}_k=0\,,\nn\\  
{}^-\pi^0_a&=&-\f{\p^2S_{k}}{\p l^a_0\p l^{a'}_0}y^{a'}_0-\f{\p^2 S_{k}}{\p l^a_0\p l^i_k}y^i_k -\f{\p^2S_{k}}{\p l^a_0\p l^{e'}_k}y^{e'}_k\,.
\ea
One can check that solving the equations of motion, ${}^+\pi^k_i=0$, for the bulk linearizations $y^i_k$ and inserting the solutions into the other two equations yields
\ba
{}^+{\pi}^k_e&=&\,\,\,\,\,\,N^k_{ee'}y^{e'}_k-\Omega^k_{ae}y^{a}_0\,,\nn\\
{}^-\pi^0_a&=&-M^0_{aa'}y^{a'}_0+\Omega^k_{ae}y^e_k\,.\label{reggelinmoms}
\ea
$\Omega^k_{ae}$ denotes here and henceforth the `effective' Lagrangian two-form (\ref{eff2form}) (for notational simplicity we drop the tilde from now on) and the matrices $N^k_{ee'},M^0_{aa'}$ are given by
\ba\label{matrixdefs}
N^k_{ee'}&:=&\,\,\,\frac{\partial^2\,\tilde{S}_{k}}{\partial l^e_k\partial l^{e'}_k}\,=\,\,\,\,\frac{\partial^2\,S_{k}}{\partial l^e_k\partial l^{e'}_k}-\frac{\partial^2\,S_{k}}{\partial l^e_k\partial l^i_k}\,T^i_A\left(T^{i_1}_A\frac{\partial^2\,S_{k}}{\partial l^{i_1}_k\partial l^{i_2}_k}T^{i_2}_{A'}\right)^{-1}T^{i'}_{A'}\,\frac{\partial^2\,S_{k}}{\partial l^{i'}_k\partial l^{e'}_k}\,,\nn\\
M^0_{aa'}&:=&\,\,\,\frac{\partial^2\,\tilde{S}_{k}}{\partial l^a_0\partial l^{a'}_0}\,=\,\,\,\,\frac{\partial^2\,S_{k}}{\partial l^a_0\partial l^{a'}_0}-\frac{\partial^2\,S_{k}}{\partial l^a_0\partial l^i_k}\,T^i_A\left(T^{i_1}_A\frac{\partial^2\,S_{k}}{\partial l^{i_1}_k\partial l^{i_2}_k}T^{i_2}_{A'}\right)^{-1}T^{i'}_{A'}\,\frac{\partial^2\,S_{k}}{\partial l^{i'}_k\partial l^{a'}_a}\,,
\ea
where, again, the $T_A$ define a maximal linearly independent set of non-degenerate directions of $\frac{\partial^2\,S_{k}}{\partial l^{i_1}_k\partial l^{i_2}_k}$ (see also appendix \ref{app_deg}). We emphasize that $N^k_{ee'}$ is {\it not} the `effective' Hessian $\tilde{H}^k_{ee'}$ (\ref{eff-hesse}) of the action at step $k$, such that generally $Y^e_{vI}N^k_{ee'}\neq0$. Instead, one would have $\tilde{H}^k_{ee'}=N^k_{ee'}+M^k_{ee'}$.

Contracting (\ref{reggelinmoms}) with the right and left null vectors $R_k,L_0$ of $\Omega^k$ yields the post-- and pre--constraints
\ba
{}^+C^k&=&(R_k)^e\left({}^+{\pi}^k_e-N^k_{ee'}y^{e'}_k\right)\,,\nn\\
{}^-C^0&=&(L_0)^a\left({}^-{\pi}^0_a+M^0_{aa'}y^{a'}_0\right)\,,\label{reggecon}
\ea
which must vanish and which only depend on the canonical variables from one time step. In particular, in section \ref{sec_deg} we saw that the `spatial' components $Y^e_{vI}$ define (i) right null vectors of the `effective' Lagrangian two-form ${\Omega}^k$ at $k$, (ii) left null vectors of the `effective' Lagrangian two-form ${\Omega}^{k+x}$ at $k$, and (iii) null vectors of the `effective' Hessian $\tilde{H}_{ee'}$ at $k$. This is important because it was shown in \cite{Hoehn:2014aoa} (see also the discussion in \cite{Dittrich:2013jaa}) that, in this case, the corresponding constraints
\ba
C^k_{vI}&=&(Y_k)^e_{vI}\left({}^+{\pi}^k_e-N^k_{ee'}y^{e'}_k\right)\,,\nn\\
C^0_{vI}&=&(Y_0)^a_{vI}\left({}^-{\pi}^0_a+M^0_{aa'}y^{a'}_0\right)\,,\label{CvI}
\ea
are, in fact, (1) simultaneously pre-- and post--constraints (accordingly we drop the ${}^\pm$ indices), (2) genuine gauge symmetry generators, (3) abelian first class,\footnote{$\{C^k_{vI},C^k_{v'I'}\}=0$ follows directly from $N^k_{ee'}=N^k_{e'e}$.} and (4) associated to genuine gauge variables $x^{vI}_k$.\footnote{In the classification of \cite{Hoehn:2014aoa}, these vectors and the corresponding symmetry generating constraints are of type (1)(A).} In section \ref{sec_linpach}, we shall discuss these gauge degrees of freedom further and show that these constraints are preserved by the linearized dynamics.

The two sets of constraints (\ref{CvI}), indeed, generate the displacement of the vertices in $\Sigma_0$ and $\Sigma_k$ in flat directions in the flat 4D embedding space: they lead precisely to the corresponding infinitesimal lengths changes of the edges $e\subset\Sigma_k$ or $a\subset\Sigma_0$ adjacent to the given vertex, 
\ba
\delta l^e_k=\{y^e_k,C^k_{vI}\}=(Y_k)^e_{vI}\,,\q\q\q\q \delta l^a_0=\{y^a_0,C^0_{vI}\}=(Y_0)^a_{vI}\,.
\ea
The contraction with the vectors $Y_{vI}$ associates these constraints invariably to {\it vertices} rather than {\it edges}.

For later purposes, let us now count how many linearly independent gauge generators $C^k_{vI}$ we have at step $k$. As seen in section \ref{bianchi}, there are exactly four vectors $Y_{vI}$ associated to each vertex $v$ in $\Sigma_k$ describing displacements of $v$ in four linearly independent flat directions. Accordingly, if there are $V$ vertices in $\Sigma_k$ there are $4V$ such constraints $C^k_{vI}$ at step $k$. However, these are not all independent: there are 10 independent global translations and SO(4) rotations which move the entire 3D hypersurface (and the underlying 4D triangulation) in the flat 4D embedding space without affecting the triangulation. Let $E$ be the number of edges in $\Sigma_k$. That is, there exist $A^{vI}_n\neq0$, $m=1,\ldots,10$ such that
\ba\label{globaltrans}
\{y^e_k,A^{vI}_mC^k_{vI}\}=A^{vI}_mY^e_{vI}=0\,,\q\q e=1,\ldots,E\,.
\ea
These ten conditions imply that $\text{rank}(Y^e_{vI})=4V-10$ and, therefore, that there rather exist $4V-10$ linearly independent constraints $C^k_{vI}$ which generate vertex displacements. The (vertex displacement) gauge orbit at step $k$ is therefore $(4V-10)$--dimensional.

The explicit form of the constraints (\ref{CvI}) in terms of areas, angles and lengths is not relevant for this article and will therefore not be exhibited here, however, has been derived in \cite{Dittrich:2009fb} with the help of so-called `tent moves'. It has also been shown in \cite{Dittrich:2009fb} that (a) the `spatial' components $Y^e_{vI}$ are fully determined by the background lengths of the edges in the 3D star of $v$ in $\Sigma_k$,\footnote{The 3D star of $v$ is the collection of all subsimplices in $\Sigma_k$ of dimension 3 or less that share the vertex $v$.} and (b) the contractions $Y_{vI}\cdot N^k$ and $Y_{vI}\cdot M^0$ depend on background variables from $\Sigma_k$ only. Accordingly, the constraints in the first (second) set in (\ref{CvI}) contain linearized as well as background variables from step $k$ (step $0$) only.



Finally, a few comments are in order about the symplectic form which we are working with in the linearized theory. Using the above expansion of the variables, it can be obtained from an expansion of the symplectic form of the full theory $\omega^k=dl^e_k\wedge d{}^+{p}^k_e$ to order $\varepsilon^2$ around a flat background solution, where the ${}^+{\pi}_e^k$ are given in (\ref{reggelinmoms}). Noting that the background variables are fixed, this yields 
\ba\label{linomega}
\delta\omega^k=dy^e_k\wedge d{}^+{\pi}^k_e
\ea
as the symplectic form of the linearized theory. On account of the post--constraints in (\ref{reggecon}), $\delta\omega^k$ is degenerate when restricted to the post--constraint surface.

The effective Lagrangian two--form of the linearized theory can be obtained from $\delta\omega^k$ by pull back under the `effective' discrete Legendre transforms \cite{Dittrich:2013jaa} and similarly reads (the exterior derivative $d$ does not affect the flat background variables)
\ba
\delta\Omega^k:=-\frac{\partial^2\,\tilde{S}_{k}}{\partial l^e_k\partial l^a_0}dy^a_0\wedge dy^e_k=\Omega^k_{ae}\,dy^a_0\wedge dy^e_k\,.\label{lag2formlin}
\ea
The degeneracies of the Lagrangian two--form $\delta\Omega^k$ (\ref{lag2formlin}) of the linearized theory are therefore identical to the ones from the background theory.

\section{Lattice `gravitons' in linearized Regge Calculus}\label{obs}


After discussing the vertex displacement gauge symmetries of linearized Regge Calculus, let us now identify the 
propagating degrees of freedom. Presuming a close link to the continuum and to simplify referring to them, we wish to call the propagating lattice degrees of freedom of linearized Regge Calculus hereafter by the name `gravitons'. However, we emphasize that their relation to the continuum gravitons under a continuum limit is unclear at this stage and we shall also not investigate this relation here.\footnote{This relation is clearly non-trivial because the continuum limit might be achieved via some coarse graining procedure \cite{Dittrich:2014ala,Dittrich:2012jq,Bahr:2009qc,Dittrich:2012qb,Dittrich:2011vz,Bahr:2010cq,Dittrich:2013voa}. This will generically change the dynamical content of the system.}  Nevertheless, we shall see shortly that the lattice `gravitons' correspond to curvature degrees of freedom -- just like their continuum analogues -- and, in fact, provide a linear basis of the  propagating lattice degrees of freedom. The dynamics of these `gravitons' is generated by the evolutions moves with respect to the background discrete time because there are no constraints generating the dynamics -- in contrast to the continuum where the graviton dynamics is generated by a quadratic global Hamiltonian. In this article we shall discuss the canonical Pachner move generated dynamics.\footnote{For tent moves it was shown in \cite{Dittrich:2009fb} that the `gravitons' satisfy discrete second order evolution equations which, in some rough analogy, can be taken as a lattice wave equation.} 

In the present section we shall firstly discuss invariance under vertex displacement gauge symmetries. Being propagating observables, we expect the `gravitons' to be invariant under the action of the constraints $C^k_{vI}$ generating the vertex displacement gauge symmetry and, furthermore, to be associated to curvature. Indeed, by (\ref{Yeps}) we know that the deficit angles are invariant under the vertex displacements in flat directions. Additionally, Barrett's `fundamental theorem of linearized Regge Calculus' \cite{barrett1987fundamental} shows (for topologically trivial triangulated manifolds) that the set of linearized edge length perturbations $y^e$ around a flat background, modulo the subset of linearized length deformations corresponding to vertex displacements in flat directions, is equivalent to the set of {\it linearized deficit angles} (obviously, ${}^{(0)}\epsilon_t=0$)
\ba\label{lineps}
\delta\epsilon_t=\varepsilon\,\frac{\partial \epsilon_t}{\partial l^e}\Big|_\text{flat}y^e+o(\varepsilon^2)\,,
\ea
satisfying the Bianchi identities. In other words, the physical (i.e.\ dynamical) content of linearized Regge Calculus is encoded in the space of linearized deficit angles. Accordingly, the `lattice gravitons' must be closely related to these curvature variables.
However, notice that {\it a priori} $\epsilon_t$ depends on the lengths of all edges in $\text{star}_{4D}(t)$, the 4D star of the bulk triangle $t$ (i.e.\ the collection of simplices which share $t$ as a subsimplex). 

We therefore wish to translate such deficit angles into canonical variables at step $k$ which are invariant under the gauge generators $C^k_{vI}$. To this end, consider a hypersurface $\Sigma_k$ and a bulk triangle $t$ such that $\partial(\text{star}_{4D}(t))\cap\Sigma_k\neq\emptyset$ and the boundary of the 4D star of $t$ touches $\Sigma_k$ `tangentially' (i.e., $\Sigma_k$ does not cut $\text{star}_{4D}(t)$ into two disconnected pieces). Next, we integrate out all edges which are bulk between $\Sigma_0$ and $\Sigma_k$. Employing the equations of motion for the internal edges,
\ba
\frac{\partial^2\,S_{k}}{\partial l^i_k\partial l^{i'}_k}y^{i'}_k+\frac{\partial^2\,S_{k}}{\partial l^i_k\partial l^e_k}y^e_k+\frac{\partial^2\,S_{k}}{\partial l^i_k\partial l^a_0}y^a_0=0\,,
\ea
and making use of (\ref{lineps}), one finds the corresponding linearized `effective' deficit angle
\ba\label{gravsplit}
\delta\tilde{\epsilon}_t=\varepsilon\left(\frac{\partial\, \tilde{\epsilon}_t}{\partial l^e_k}y^e_k+\frac{\partial\,\tilde{\epsilon}_t}{\partial l^a_0}y^a_0\right)+o(\varepsilon^2)\,,
\ea
where in analogy to (\ref{matrixdefs})
\ba
\frac{\partial\, \tilde{\epsilon}_t}{\partial l^e_k}&=&\frac{\partial\,\epsilon_t}{\partial l^e_k}-\frac{\partial\,\epsilon_t}{\partial l^i_k}\,T^i_A\left(T^{i_1}_A\frac{\partial^2\,S_{k}}{\partial l^{i_1}_k\partial l^{i_2}_k}T^{i_2}_{A'}\right)^{-1}T^{i'}_{A'}\,\frac{\partial^2\,S_{k}}{\partial l^{i'}_k\partial l^e_k}\,,\nn\\
\frac{\partial\,\tilde{\epsilon}_t}{\partial l^a_0}&=&\frac{\partial\,\epsilon_t}{\partial l^a_0}-\frac{\partial\,\epsilon_t}{\partial l^i_k}\,T^i_A\left(T^{i_1}_A\frac{\partial^2\,S_{k}}{\partial l^{i_1}_k\partial l^{i_2}_k}T^{i_2}_{A'}\right)^{-1}T^{i'}_{A'}\,\frac{\partial^2\,S_{k}}{\partial l^{i'}_k\partial l^a_0}\,.
\ea
Starting from (\ref{Yeps}), and in precise analogy to (\ref{appomegadeg}), it is straightforward to convince oneself that also the effective deficit angles are invariant under the `spatial' null vectors $Y^e_{vI},Y^a_{vI}$
\ba\label{lindepeps}
Y^e_{vI}\frac{\partial\, \tilde{\epsilon}_t}{\partial l^e_k}=0\,,\q\q Y^a_{vI}\frac{\partial\,\tilde{\epsilon}_t}{\partial l^a_0}=0\,.
\ea
Defining
\ba\label{yt}
y^t_k:=\frac{\partial\, \tilde{\epsilon}_t}{\partial l^e_k}y^e_k\,,
\ea
it is clear that
\ba
\{y^t_k,C^k_{vI}\}=Y^e_{vI}\,\frac{\partial\, \tilde{\epsilon}_t}{\partial l^e_k}=0\q\forall\,v,I\,.
\ea
Hence, those contributions $y^t_k$ of the linearized `effective' deficit angles (\ref{gravsplit}) which depend on data in $\Sigma_k$ constitute non--trivial configuration variables at step $k$ that are invariant under the action of all gauge generators $C^k_{vI}$. These $y^t_k$ admit a geometric interpretation as curvature degrees of freedom and are non--local quantities in that they involve effective expressions obtained after integrating out internal degrees of freedom. We shall see in the sequel that these $y^t_k$ are generally propagating degrees of freedom. Accordingly, we wish to call the $y^t_k$ `gravitons'.

\section{Counting `gravitons' via Pachner moves}\label{sec_gravcount}

Prior to constructing the momenta conjugate to the `gravitons' and analysing their propagation under the Pachner evolution moves, let us count and check whether the $y^t_k$ actually provide a complete set of propagating degrees of freedom. We just verified in section \ref{sec_lincanon} that there are $4V-10$ linearly independent vertex displacement gauge generators $C^k_{vI}$ at step $k$. Hence, if there are $E$ edges in $\Sigma_k$, we should find $E-4V+10$ independent such $y^t_k$ configuration `gravitons', i.e.\ $2(E-4V+10)$ phase space `graviton' degrees of freedom at $k$. We shall show momentarily that this is, indeed, the case for closed $\Sigma_k$.  

Before we do this, a few important comments are necessary to prevent confusion: first of all, as explained in \cite{Dittrich:2013jaa,Hoehn:2014aoa} and summarized in section \ref{sec_can}, the notion of a propagating degree of freedom in the discrete requires {\it two} time steps, say, $0$ and $k$, between which the degree of freedom can propagate. In the case of evolving phase spaces, as necessary for the discretization changing Pachner move dynamics, this strongly depends on these two time steps. 
The number of phase space degrees of freedom propagating from $\Sigma_0$ 
to $\Sigma_k$ reads in the present case \cite{Dittrich:2013jaa,Hoehn:2014aoa}
\ba
N_{0\rightarrow k}&=&2E-2\#(\text{pre--constraints at $0$})\nn\\
&=&2E-2\#(\text{post--constraints at $k$})\nn\\
&=&2E-2(4V-10)-2\#(\text{post--constraints ${}^+C^k$ at $k$ with ${}^+C^k\neq C^k_{vI}$})\nn\,.
\ea

On the other hand, the number $2(E-4V+10)$ at $k$ is clearly {\it independent} of $\Sigma_0$. Consequently, the number of $2(E-4V+10)$ `gravitons' which we are counting at step $k$ does {\it not} necessarily coincide with the `gravitons' that actually propagated from $\Sigma_0$ to $\Sigma_k$ or, likewise, that propagate from $\Sigma_k$ to some $\Sigma_{k+x}$. In particular, the number of post--constraints (or pre--constraints) at step $k$ differing from the gauge generators $C^k_{vI}$ depends, in general, strongly on $\Sigma_0$ (or $\Sigma_{k+x}$). Hence, the number of `gravitons' among the $2(E-4V+10)$ independent ones at $k$ that actually propagated from $\Sigma_0$ to $\Sigma_k$ is generically smaller than $2(E-4V+10)$ (and likewise for propagation from $\Sigma_k$ to $\Sigma_{k+x}$).

As a result, it is more appropriate to view these $2(E-4V+10)$ phase space `graviton' degrees of freedom rather as gauge invariant {\it potentially} propagating degrees of freedom; it depends strongly on initial and final hypersurfaces whether these `gravitons' from hypersurface $\Sigma_k$ actually propagate. But if there are propagating degrees of freedom, they will be contained in this set of `gravitons'. We shall come back to this below.

Let us now return to our attempt to count and show that the `gravitons' $y^t_k$, indeed, form a complete set of gauge invariant degrees of freedom. Viewing $\left(\frac{\partial\, \tilde{\epsilon}_t}{\partial l^e_k}\right)$ as an $E\times N_t$ matrix, where $N_t$ is the total number of bulk triangles $t$ whose $\partial(\text{star}_{4D}(t))$ touches $\Sigma_k$ `tangentially', the first condition in (\ref{lindepeps}) implies 
\ba
\text{rank}\left(\frac{\partial\, \tilde{\epsilon}_t}{\partial l^e_k}\right)\leq E-4V+10\,.\label{rankeps}
\ea 
Let us show that $N_t\geq E-4V+10$ and subsequently that the upper bound in (\ref{rankeps}) is saturated. From this it follows that the $y^t_k$ (\ref{yt}) constitute a complete set.

It is convenient to count the variables by means of the Pachner evolution moves. To this end, we recall the properties of the four Pachner moves from section \ref{sec_pachner} and, in particular, that the deficit angle around the new bulk triangle generated by means of a 2--3 Pachner move is an {\it a priori} free variable. All deficit angles resulting from the 2--3 moves are thus {\it a priori} independent. Denoting by $E_{23}$ the number of edges in $\Sigma_k$ produced by 2--3 moves, it is, therefore, sufficient to show $E_{23}\geq E-4V+10$, which we shall do momentarily. The total number $N_t$ is, of course, generically much larger than $E_{23}$ as a consequence of the 3--2 and 4--1 moves. However, the linearized deficit angles generated during the latter two moves are generally linearly dependent on the deficit angles from the 2--3 moves because there are no new edges introduced in these two types of moves. We will discuss this in more detail in section \ref{sec_linpach} below when studying the linearized Pachner moves. For simplicity, let us assume $\Sigma_k\cap\Sigma_0=\emptyset$ and that the hypersurfaces are closed.


\begin{Proposition}
For any closed triangulated 3D hypersurface $\Sigma_k$ with $\Sigma_k\cap\Sigma_0=\emptyset$ it holds
\ba\label{e23}
E_{23}\geq E-4V+10\,.
\ea
\end{Proposition}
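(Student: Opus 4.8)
The plan is to prove the bound by induction along a sequence of Pachner moves that realizes $\Sigma_k$, comparing $E_{23}$ with the combination $E-4V+10$ at every step. Since $\Sigma_k$ is a closed triangulated 3-manifold, Pachner's theorem connects it to a fixed minimal reference; for the $3$-sphere the natural choice is the boundary $\partial\sigma$ of a single $4$-simplex $\sigma$, with $V=5$ and $E=10$. First I would settle the base case: there $E_{23}=0$ and $E-4V+10=10-20+10=0$, so the inequality holds with equality. (The constant $10$ is just $4+\binom{4}{2}$, the dimension of the isometry group of flat $\mathbb{R}^4$ responsible for the $10$ global translations and rotations, consistent with a single flat simplex carrying no curvature.) For a hypersurface of different topology one starts instead from a minimal reference of that manifold, and the same strategy applies provided the inequality is checked there.

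Next I would record the incidence changes of section \ref{sec_pachner} under the four moves: a $1$--$4$ move has $\Delta V=+1,\,\Delta E=+4,\,\Delta E_{23}=0$; a $2$--$3$ move has $\Delta V=0,\,\Delta E=+1,\,\Delta E_{23}=+1$; a $3$--$2$ move has $\Delta V=0,\,\Delta E=-1$; and a $4$--$1$ move has $\Delta V=-1,\,\Delta E=-4$. The clean consequence is the bookkeeping identity $E-4V+10=n_{23}-n_{32}$, where $n_{23}$ and $n_{32}$ count the $2$--$3$ and $3$--$2$ moves in the sequence; equivalently $E-4V+10$ is invariant under $1$--$4$ and $4$--$1$ moves and shifts by $\pm1$ under $2$--$3$ and $3$--$2$ moves. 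Hence the claim reduces to $E_{23}\geq n_{23}-n_{32}$. Writing $E_{23}=n_{23}-D$ with $D$ the number of $2$--$3$-created edges that are later destroyed, and noting that edges can disappear only through $3$--$2$ moves (one edge each, so at most $n_{32}$ of them can be of $2$--$3$ type) or through $4$--$1$ moves (the four edges at the vanishing vertex), it suffices to show that $4$--$1$ moves remove no $2$--$3$-created edges, which gives $D\leq n_{32}$ and therefore $E_{23}\geq n_{23}-n_{32}$.

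The hard part will be exactly this last point. The four edges swept away by a $4$--$1$ move are those of the $4$-valent vertex $v^*$ whose vertex link is the boundary of a tetrahedron, and a priori one of them might have been born in an earlier $2$--$3$ move. Physically these four edges are the `lapse and shift' edges that a $1$--$4$ move would have created, not curvature-carrying `graviton' edges, so one expects them never to be of $2$--$3$ type; turning this expectation into a proof is the technical heart of the argument. I would resolve it by exploiting the freedom in the building sequence: because the rank of $\left(\partial\tilde\epsilon_t/\partial l^e_k\right)$ is intrinsic to $\Sigma_k$, any sequence realizing $\Sigma_k$ may be used, and I would argue that $\Sigma_k$ can always be reached from the reference using only $1$--$4$, $2$--$3$ and $3$--$2$ moves, i.e.\ with $n_{41}=0$, so that the $4$--$1$ subtlety never arises and $D\leq n_{32}$ holds trivially. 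Establishing that the coarsening $4$--$1$ moves can be dispensed with (or, failing that, a direct local argument that a $4$--$1$-removable vertex is never incident to a $2$--$3$-created edge) is where I expect the genuine difficulty to lie; the bookkeeping of the previous paragraph is then routine.
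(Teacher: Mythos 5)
Your proposal has a genuine gap, and it sits exactly where you located the ``technical heart.'' First, the global strategy is not available: $E_{23}$ is not intrinsic to $\Sigma_k$ --- it counts the edges of $\Sigma_k$ created by 2--3 moves in the \emph{given} evolution from $\Sigma_0$, so you may not ``exploit the freedom in the building sequence'' and pass to a 4--1-free rebuilding from a minimal reference. Different sequences realizing the same $\Sigma_k$ give different values of $E_{23}$ (the paper's Example 5.1 reaches a boundary-of-a-4-simplex configuration in which all ten edges stem from 2--3 moves, while the same combinatorial sphere taken as a bare $\partial\sigma$ has none), so proving the bound for one convenient sequence proves nothing about the actual one. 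Moreover, your bookkeeping identity $E-4V+10=n_{23}-n_{32}$ presupposes starting from a surface with $E-4V+10=0$; the actual evolution starts from an arbitrary closed $\Sigma_0$, for which Walkup gives only $E_0-4V_0+10\geq 0$, so in fact $E-4V+10=(E_0-4V_0+10)+n_{23}-n_{32}$ and your reduction would require the strictly stronger $D\leq n_{32}-(E_0-4V_0+10)$. A telling symptom is that your argument never uses the hypothesis $\Sigma_0\cap\Sigma_k=\emptyset$, which is essential: for the trivial evolution $\Sigma_k=\Sigma_0$ (no moves at all) one has $E_{23}=0$ while $E-4V+10>0$ whenever $\Sigma_0$ is not a stacked sphere. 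Second, the fallback local lemma is false: the four edges swept away with a 4-valent vertex $v^*$ need not be the four created at its birth. A 2--3 move whose two glued tetrahedra have $v^*$ as one of the two apices creates a new edge at $v^*$, and a subsequent 3--2 move can remove one of the original 1--4 edges there; the ensuing 4--1 move then annihilates a 2--3-created edge, so $D\leq n_{32}$ fails and no amount of ``lapse-and-shift versus graviton'' intuition rescues it.

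The paper's proof avoids all tracking of destroyed edges. Since $\Sigma_0\cap\Sigma_k=\emptyset$, every edge of $\Sigma_k$ was created by a 1--4 or 2--3 move, so $E=E_{14}+E_{23}$, and it suffices to bound $E_{14}\leq 4V-10$ directly. Only 1--4 moves create vertices, and every edge created by a 1--4 move is adjacent to the vertex it creates; hence only the $V$ moves giving birth to the vertices of $\Sigma_k$ can contribute to $E_{14}$, four edges each. Among these, the first must be glued to a tetrahedron all four of whose vertices lie off $\Sigma_k$ (in $\Sigma_{aux}$, possibly $\Sigma_0$), the second to one with at least three such vertices, and so on, so at least $4+3+2+1=10$ of the $4V$ new edges have an endpoint outside $\Sigma_k$ and therefore cannot lie in $\Sigma_k$ --- this is precisely where the disjointness hypothesis enters. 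That yields $E_{14}\leq 4V-10$ and hence $E_{23}\geq E-4V+10$ for \emph{every} admissible evolution, which is what the proposition asserts. If you want to salvage your invariant-counting route, you would have to prove a sequence-independent substitute for $D\leq n_{32}$ valid relative to the actual $\Sigma_0$, and at that point you are effectively reconstructing the paper's direct count.
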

\begin{proof}
Let $\Sigma_k$ be a closed connected hypersurface such that $\Sigma_0\cap\Sigma_k=\emptyset$. Denote by $E_{14}$ the number of edges in $\Sigma_k$ produced through 1--4 moves. It holds $E=E_{14}+E_{23}$ (the 3--2 and 4--1 moves do not introduce new edges). Given that $\Sigma_k$ is closed and $\Sigma_0\cap\Sigma_k=\emptyset$, there must exist some closed hypersurface $\Sigma_{aux}$ (it could be $\Sigma_0$), which does not intersect $\Sigma_k$ but whose vertices are connected to the vertices of $\Sigma_k$. A sequence of Pachner moves can be glued onto $\Sigma_{aux}$ to produce $\Sigma_k$. Since the minimum number of vertices in a closed 3D triangulation is five (the boundary of a 4--simplex), we must glue at least five 1--4 Pachner moves in order to introduce the vertices of $\Sigma_k$. The first of these 1--4 moves must be glued to one tetrahedron with all four of its vertices in $\Sigma_{aux}$ since no other types of Pachner moves placed on $\Sigma_{aux}$ introduce new vertices. The second of the 1--4 moves must be glued to a tetrahedron with at least three vertices in $\Sigma_{aux}$ (the fourth one could be the new one from the previous 1--4 move). Likewise, the third of the 1--4 moves must be glued to a tetrahedron with at least two vertices in $\Sigma_{aux}$ and the fourth of the 1--4 moves must be glued on a tetrahedron with at least one vertex in $\Sigma_{aux}$. Consequently, there are at least $4+3+2+1=10$ edges generated during 1--4 moves, connecting the vertices of $\Sigma_k$ with those of $\Sigma_{aux}$. But these edges must be internal because $\Sigma_{aux}$ does not intersect with $\Sigma_k$. Hence, we conclude that necessarily
\ba
E_{14}\leq 4V-10\,.\nn
\ea
In conjunction with $E=E_{14}+E_{23}$, we thus obtain the desired result.
\end{proof}

Notice that each of the $E_{23}$ edges in $\Sigma_k$ from a 2--3 move is associated to a bulk triangle $t$ with $\partial(\text{star}_{4D}(t))\cap\Sigma_k\neq\emptyset$. That is, we indeed have $N_t\geq E-4V+10$. 

This is sufficient to argue that the bound in (\ref{rankeps}) is saturated for the following reason: if it was not saturated, there must exist a number larger than $4V-10$ of non--trivial vectors $V^e$ such that $V^e\left(\frac{\partial\, \tilde{\epsilon}_t}{\partial l^e_k}\right)=0$, $t=1,\ldots,N_t$. Any such $V^e$ (by (\ref{gravsplit})) defines a transformation in a flat direction. However, the displacements of the vertices in the triangulation already account for all possible transformations in flat directions. In conclusion, there exist precisely $4V-10$ non--trivial null vectors of the $E\times N_t$ matrix $\left(\frac{\partial\, \tilde{\epsilon}_t}{\partial l^e_k}\right)$ and since $N_t\geq E-4V+10$, the bound in (\ref{rankeps}) must be saturated.

That is, among the $N_t$ `gravitons' $y^t_k$ in (\ref{yt}) we can always choose exactly $E-4V+10$ linearly independent ones which we henceforth denote by $y^\alpha_k$, $\alpha=1,\ldots,E-4V+10$. In general, the linearized `effective' deficit angles therefore provide an over--complete set of gauge invariant `gravitons' at step $k$. 


There is a sequence of theorems by Walkup \cite{walkup} concerning characteristic lower bounds for numbers of subsimplices involved in triangulated 3--manifolds which, in our case, ensures that the number of `gravitons' cannot be negative and which can be summarized in the following form \cite{Ambjorn:1996ny}:

\begin{Theorem} \label{thm_walkup}{\bf(Walkup \cite{walkup})}\\
For any combinatorial 3--manifold the inequality
$
E-4V+10\geq0
$
holds with equality if and only if it is a stacked sphere.
\end{Theorem}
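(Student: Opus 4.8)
The plan is to read the stated inequality as the Lower Bound Theorem for closed triangulated 3--manifolds and to establish it by the rigidity method, reserving a Pachner--move induction for the equality case. Write $f_0=V,\ f_1=E,\ f_2=F,\ f_3=T$ for the numbers of vertices, edges, triangles and tetrahedra. First I would record the two linear relations that come for free: a closed 3--manifold has vanishing Euler characteristic, $V-E+F-T=0$, and, since every triangle lies in exactly two tetrahedra while every tetrahedron carries four triangles, $2F=4T$. Together these give $T=E-V$ and $F=2(E-V)$. It is worth stressing that such identities alone can never yield an inequality; the content of the theorem must come from an extra, rigidity--type input.

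For the inequality itself I would place the $V$ vertices generically in $\mathbb{R}^4$ and view the 1--skeleton as a bar--and--joint framework. Its rigidity matrix $R$ has one row per edge and $4V$ columns, so $\mathrm{rank}\,R\le E$, while $\mathrm{rank}\,R\le 4V-10$ holds always, the number $10=\binom{5}{2}$ being the dimension of the group of rigid motions of $\mathbb{R}^4$. The single nontrivial ingredient is the theorem of Fogelsanger (building on Kalai's rigidity approach to the Lower Bound Theorem) that the 1--skeleton of any closed triangulated 3--manifold is \emph{generically rigid} in $\mathbb{R}^4$, i.e.\ $\mathrm{rank}\,R=4V-10$ for a generic embedding. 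Then $4V-10=\mathrm{rank}\,R\le E$ gives $E-4V+10\ge 0$ immediately. This reduces the whole inequality to the generic 4--rigidity statement, which is the genuine analytic heart of the matter and which I would import as a black box; the minimal instance $\partial\Delta^4=K_5$ (five generic points of $\mathbb{R}^4$, with $V=5,\ E=10$, hence equality) fixes the normalization.

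For the equality case the two implications differ sharply in difficulty, and here the Pachner--move language of the paper is the natural tool. The easy direction is that every stacked sphere saturates the bound: stacked spheres are exactly those built from $\partial\Delta^4$ by a sequence of 1--4 moves, each of which adds one vertex and four edges and hence leaves $E-4V+10$ invariant, so its value stays at the $\partial\Delta^4$ value $0$. (In rigidity terms a 1--4 move cones a new joint onto the four vertices of a tetrahedron spanning $\mathbb{R}^4$, a 0--extension preserving isostaticity.) The hard direction is the converse: equality $E=4V-10$ forces the framework to be \emph{isostatic}, every edge independent in the 4--dimensional rigidity matroid, and from this I must deduce that the complex is a stacked sphere.

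My strategy for the converse is a downward induction by inverse stacking. Because a vertex link is a triangulated 2--sphere it has at least four vertices, so every vertex has degree at least four; a vertex of degree exactly four automatically has link $\partial\Delta^3$ (the unique four--vertex triangulated 2--sphere), whose vertices are mutually adjacent, so it may be removed by a 4--1 move, yielding a closed 3--manifold with $V-1$ vertices, $E-4$ edges, still at equality. Iterating down to five vertices forces $\partial\Delta^4$; reversing the moves then exhibits the original complex as a stacked 3--sphere, and since Pachner moves preserve the PL type this simultaneously excludes every non--spherical topology from the equality case. The step bearing all the weight is the existence of a degree--four vertex whenever $V>5$ and equality holds: this is the main obstacle, since the average degree $2E/V=8-20/V$ supplies no such vertex directly and one must instead exploit isostaticity --- the independence of all edges in the rigidity matroid --- to rule out complexes in which every vertex has degree at least five. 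This reducible--vertex lemma is the combinatorial core of Walkup's theorem and the part I expect to require the most care, following Barnette, Walkup and Kalai.
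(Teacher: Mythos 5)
Before comparing: the paper does not prove this theorem at all --- it is imported from Walkup \cite{walkup}, in the summarized form of \cite{Ambjorn:1996ny} --- so there is no internal proof to measure your argument against; the only honest benchmark is the literature. Your reconstruction follows the modern rigidity route (Kalai, Fogelsanger) rather than Walkup's original combinatorial analysis, and the inequality half is sound as an outline: granting generic $4$--rigidity of the $1$--skeleton of a closed combinatorial $3$--manifold, $E \geq \operatorname{rank} R = 4V-10$ is immediate, and that rigidity statement is a genuine, citable theorem (with the harmless proviso that $V\geq 5$, needed for $\operatorname{rank} R \leq 4V-10$, which holds automatically here). Your easy direction of the equality case is also complete, and in fact matches the paper's own Pachner bookkeeping: a 1--4 move has $\Delta E - 4\Delta V = 0$, so a stacked sphere inherits $E-4V+10=0$ from $\partial\Delta^4$ (where $V=5$, $E=10$).

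The genuine gap is the converse of the equality case. Your induction hinges on the claim that $E = 4V-10$ with $V>5$ forces a vertex of degree exactly four, and the proposal contains no argument for it: as you note, the average degree $2E/V = 8-20/V$ gives nothing, and ``exploit isostaticity'' is not a proof --- extracting a removable (degree-four) vertex from the absence of self-stresses is essentially the entire content of Walkup's equality analysis (equivalently, of Kalai's treatment of the $g_2=0$ case), not a routine consequence of edge-independence in the rigidity matroid. As written, the hard half of the theorem is therefore assumed rather than proved. Two smaller points in the same induction also need a sentence each: when performing the inverse stacking at a degree-four vertex you must check that the tetrahedron spanned by its four neighbours is not already a face of the complex (if it is, the complex is $\partial\Delta^4$ and the induction terminates), and you should confirm the 4--1 move keeps you in the class of combinatorial manifolds (it does, being a Pachner move, which also gives you the PL-sphere conclusion you want). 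Since the paper itself simply cites Walkup, your outline attains the same epistemic status as the paper only if you likewise cite the reducible-vertex lemma (or the equality case of Kalai's theorem) explicitly rather than presenting the argument as self-contained.
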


A {\it stacked 3--sphere} is a triangulation of the 3--sphere which can be obtained by performing a sequence of 1--4 Pachner moves on the 3D boundary surface of a single 4--simplex. It is not surprising that in this case the total number of `gravitons' must be zero, since the 1--4 moves only generate further boundary data, but do not introduce any internal triangles and thus also no curvature.

Note, however, that the configuration of a stacked sphere can also be obtained as the 3D boundary surface of a 4D triangulation involving internal triangles and edges. Nevertheless, any triangulation whose boundary configuration corresponds to a stacked sphere possesses flat solutions independently of the existence of internal triangles. The reason is that the boundary data of a stacked sphere arising from a single simplex and only 1--4 Pachner moves can be freely chosen (modulo generalized triangle inequalities) without changing flatness. In particular, it can be chosen to coincide with the boundary data of any triangulation whose boundary corresponds to a stacked sphere (of the same connectivity), yet which possesses internal edges and triangles.\footnote{In fact, there may exist special curved solutions as well for boundary data otherwise admitting flatness. However, in contrast to the flat solutions, these curved solutions are isolated in that there exists no continuous symmetry of the solutions and rather seem to constitute a discretization artifact \cite{Bahr:2009ku}. We shall ignore here such special isolated solutions.} Any stacked sphere configuration (involving internal triangles or not) is fully constrained in the canonical formalism with as many constraints as edges in the boundary (see also \cite{Dittrich:2011ke}).

\begin{Example}\label{ex_1}
\emph{Consider a single 4--simplex and perform five 1--4 gluing moves on the five tetrahedra of the boundary. Subsequently, carry out five further 4--1, ten 2--3 and ten 3--2 Pachner moves, in order to obtain a new 3D boundary which does not intersect with the boundary of the original 4--simplex. This is always possible. Thus, the 10 edges connecting the five new vertices in this surface all resulted from the ten 2--3 Pachner moves (the 3--2 and 4--1 moves do not introduce new edges). However, the new boundary configuration is identical to the one of a single 4--simplex and is, therefore, a stacked sphere with $E-4V+10=0$ `gravitons', despite the fact that all edges are associated to internal triangles. This is only possible if all contributions from the linearized deficit angles around the internal triangles vanish. Indeed, both $\frac{\partial\,\tilde{\epsilon}^\alpha}{\partial l^e}$ and $Y^e_{vI}$ are $10\times10$ matrices, where the latter is non--degenerate due to the presence of 10 linearly independent directions for displacing the 5 vertices in 4D flat space. Hence, (\ref{lindepeps}) implies $\frac{\partial\,\tilde{\epsilon}^\alpha}{\partial l^e}=0$ such that this configuration is devoid of `gravitons'. 
Clearly, the present configuration is a fully constrained one with 10 linearly independent vertex displacement generators $C_{vI}$ and 10 edges in the boundary. 
 }
\end{Example}

We stress that this example does {\it not} imply that every spherical triangulation or every fully constrained configuration is flat and devoid of `gravitons'. For instance, the fact that the phase space at a given time step in the evolution is fully constrained reflects the fact that no degrees of freedom propagate {\it through} that time step but it does {\it not} necessarily imply that no degrees of freedom propagate away from this step in only either `future' or `past' direction. This subtle issue has been thoroughly discussed in \cite{Dittrich:2013jaa,Hoehn:2014aoa} -- also in the context of a discrete version of the `no boundary' proposal -- and we shall not repeat it here. It roots in the fact that in the discrete one always has to specify {\it two} time steps in order to discuss propagation and different pairs of time steps on a temporally varying discretization may yield different numbers of propagating degrees of freedom between them.

The example \ref{ex_1} of a stacked sphere (with internal edges) shows that it is possible that $E=E_{23}$. Hence, in combination with Walkup's theorem \ref{thm_walkup}, we have shown:
\begin{Theorem}
For any closed 3D hypersurface $\Sigma_k$ with $\Sigma_k\cap\Sigma_0=\emptyset$ the sequence of inequalities
\ba
E\geq E_{23}\geq E-4V+10\geq0
\ea
holds with equality in the last relation if and only if it is a stacked sphere.
\end{Theorem}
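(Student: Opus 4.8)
The plan is to obtain the stated chain by concatenating three facts, each of which is already at our disposal, and then to read off the equality condition from the last of them. Since the theorem only asserts an \emph{iff} for the final relation, no sharpness analysis of the first two inequalities is required (the saturation $E=E_{23}$ is in any case exhibited separately in Example \ref{ex_1}).

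First I would dispatch the leftmost inequality $E\geq E_{23}$. As already used in the proof of the preceding Proposition, every edge of a closed $\Sigma_k$ is produced either by a 1--4 or by a 2--3 move, since the 3--2 and 4--1 moves remove edges and introduce no new ones into the hypersurface; hence $E=E_{14}+E_{23}$ with $E_{14}\geq 0$, giving $E\geq E_{23}$ at once. The middle inequality $E_{23}\geq E-4V+10$ is nothing but the statement of the Proposition, already established through the counting of the $4+3+2+1=10$ internal edges forced by the five 1--4 moves needed to create the vertices of $\Sigma_k$ (equivalently, $E_{14}\leq 4V-10$). Finally, the rightmost inequality $E-4V+10\geq 0$, together with the claim that equality holds precisely for a stacked sphere, is exactly Walkup's Theorem \ref{thm_walkup} applied to the closed combinatorial 3--manifold $\Sigma_k$. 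Concatenating the three chains them into $E\geq E_{23}\geq E-4V+10\geq0$, and the equality characterization of the last relation is inherited verbatim from Walkup's theorem.

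There is no genuine obstacle here: the entire substance of the theorem lives in the Proposition and in the cited Walkup bound, so the remaining work is purely one of assembly. The one point I would verify explicitly is that the hypotheses line up. For Walkup's inequality to apply, $\Sigma_k$ must be a bona fide closed combinatorial 3--manifold, which is guaranteed by the assumption; and for the Proposition's counting argument the ten edges connecting the vertices of $\Sigma_k$ to those of the auxiliary surface must be genuinely internal, which is exactly what $\Sigma_k\cap\Sigma_0=\emptyset$ secures. With both hypotheses in force the three inequalities compose without any gap, and the equality case in the last relation reduces to the stacked-sphere characterization of Theorem \ref{thm_walkup}.
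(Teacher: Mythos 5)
Your proposal is correct and matches the paper's own argument essentially verbatim: the theorem is assembled from $E=E_{14}+E_{23}$ (giving $E\geq E_{23}$), the preceding Proposition for the middle inequality, and Walkup's Theorem \ref{thm_walkup} for the final bound together with its stacked-sphere equality characterization. Your hypothesis checks (closed combinatorial $3$--manifold for Walkup, $\Sigma_k\cap\Sigma_0=\emptyset$ for the internality of the ten counted edges) are exactly the points on which the paper's reasoning also relies.
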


From the considerations thus far, we can already predict the role played by each of the Pachner moves in the `generation' and `annihilation' of `gravitons' and gauge degrees of freedom on the {(non--extended) evolving phase spaces}. At the configuration space level, the number of `gravitons' at $k$ is $E-4V+10$, while it follows from the analysis in \cite{Dittrich:2013jaa,Hoehn:2014aoa} that to each of the $4V-10$ independent vectors $Y_{vI}$ there is associated one gauge degree of freedom (we shall exhibit these gauge variables more explicitly below in section \ref{sec_tmatrix}). Denote the changes in the number of edges and vertices when going from $\Sigma_k$ to $\Sigma_{k+1}$ by means of any of the Pachner moves by $\Delta E$ and $\Delta V$, respectively. Compute the net changes in the numbers of `gravitons' as $\Delta N_p=\Delta E-4\Delta V$ and the net changes in the numbers of gauge variables via $\Delta N_g=4\Delta V$. Using the geometric properties of the Pachner moves exhibited in section \ref{sec_pachner}, we conclude:
\begin{description}
\item[1--4 Pachner move:] generates 4 new gauge modes: $\,\Delta V=+1$, $\,\,\Delta E=+4$ $\Rightarrow$ $\Delta N_p=0$ and $\Delta N_g=+4$
\item[2--3 Pachner move:] generates 1 new `graviton': $\,\q\,\,\,\Delta V=0$, $\,\,\,\,\,\,\Delta E=+1$ $\Rightarrow$ $\Delta N_p=+1$ and $\Delta N_g=0$
\item[3--2 Pachner move:] annihilates 1 old `graviton': $\,\q\,\,\Delta V=0$, $\,\,\,\,\,\Delta E=-1$ $\Rightarrow$ $\Delta N_p=-1$ and $\Delta N_g=0$
\item[4--1 Pachner move:] annihilates 4 old gauge modes: $\Delta V=-1$, $\Delta E=-4$ $\Rightarrow$ $\Delta N_p=0$ and $\Delta N_g=-4$
\end{description}

We emphasize, again, that the 
`gravitons' at each step $k$ are 
invariant under the vertex displacement gauge symmetry but only {\it potentially} propagate from or to $\Sigma_k$. 
For instance, the fact that the
\begin{description}
\item[2--3 move] evolving $\Sigma_k$ to $\Sigma_{k+1}$ `generates' a `graviton' is to be understood in the following sense: the new `graviton' of the 2--3 move is an {\it a priori} free variable at $k+1$ that cannot be predicted by the data at $k$. It, therefore, does {\it not} propagate {\it to} $\Sigma_{k+1}$. However, it {\it may} propagate from $\Sigma_{k+1}$ onwards to some $\Sigma_{k+x}$, $x>1$, depending on the particulars of the `future' triangulation. 
\item[3--2 move] from $\Sigma_k$ to $\Sigma_{k+1}$ `annihilates' a `graviton' means that the number of (configuration) degrees of freedom {\it potentially} propagating to or from $\Sigma_{k+1}$ is decreased by one as compared to $\Sigma_k$. The removed `graviton' is {\it a posteriori} free at $k$. Whether it actually propagated to $\Sigma_k$ 
depends on the `past' triangulation. 
\end{description}
We shall discuss these linearized Pachner move dynamics further in canonical language in section \ref{sec_linpach}.


\section{Disentangling gauge and `graviton' phase space variables}\label{sec_tmatrix}

It is convenient to perform linear canonical transformations on the linearized pairs $(y_k,\pi^k)$ in order to disentangle the gauge from the `graviton' degrees of freedom. We shall follow here the more general procedure developed in \cite{Hoehn:2014aoa} for arbitrary quadratic discrete actions. While in the general case one can distinguish among eight different types of degrees of freedom, it suffices for our purposes to only distinguish 
between the two broad types given by the `gravitons' as defined above and the gauge variables conjugate to the constraints. (It is the `gravitons' for which one could distinguish further cases, according to whether they actually propagate to or from $\Sigma_k$ or both or not at all \cite{Hoehn:2014aoa}.) 

We proceed as follows: choose $4V-10$ linearly independent displacement vectors $(Y_k)^e_{vI}$, $vI=1,\ldots,4V-10$ (henceforth we include a time step label $k$) and $E-4V+10$ linearly independent $\frac{\partial\, \tilde{\epsilon}^\alpha}{\partial l^e_k}$, $\alpha=1,\ldots,E-4V+10$. We construct an invertible transformation matrix $(T_k)^e_\Gamma$, where the index set $\Gamma$ runs over both $vI$ and $\alpha$, by firstly setting
\ba\label{Tcond2}
(T_k)^e_{vI}=(Y_k)^e_{vI}
\ea
and 
\ba\label{Tcond1}
(T_k^{-1})^\alpha_e=\frac{\partial\, \tilde{\epsilon}^\alpha}{\partial l^e_k}\,.
\ea
Clearly, $(T_k)^e_{vI}(T_k^{-1})^\alpha_e=0$. Next, we choose $E-4V+10$ linearly independent $(T_k)^e_\alpha$ such that
\ba
(T_k)^e_\alpha(T_k^{-1})^{\beta}_e=\delta^\beta_\alpha\,.
\ea
Certainly, these conditions to {\it not} uniquely determine the matrix $(T_k)^e_\Gamma$. However, any choice satisfying the above conditions is sufficient for our purposes. Assume, therefore, that such a choice has been made at step $k$. 

Notice that each such $(T_k)_\alpha$ defines a variation of the edge lengths in $\Sigma_k$ such that only a single (independent) `effective' deficit angle is changed, since by construction $(T_k)^e_\alpha\frac{\partial\,\tilde{\epsilon}^\beta}{\partial l^e_k}=\delta^\beta_\alpha$. That is, in contrast to the $4V-10$ $(Y_k)_{vI}$ which leave the geometry invariant, the $E-4V+10$ $(T_k)_\alpha$ actually define geometry changing directions. As a consequence, these $(T_k)_\alpha$ will generically {\it not} define degenerate directions of (effective) Hessians such as (\ref{p2SR}).\footnote{We cannot preclude, in general, that there exist geometry changing directions $(T_k)_\alpha$ which, nevertheless, are null vectors of the Hessian, $(T_k)^e_\alpha H_{ee'}=0$, and leave the Regge action (expanded to second order around the flat background) invariant. The accompanying transformation would need to generate geometry changes that lead to variations of the action which cancel each other; this could only occur in special situations.} We therefore choose to label the non--degenerate directions of the Hessian by $\alpha$. We shall see that the $(T_k)_\alpha$ may still define degenerate directions of the Lagrangian two--forms.

Using this transformation matrix $(T_k)^e_\Gamma$, 
we perform a linear canonical transformation \cite{Hoehn:2014aoa}
\ba
y^\Gamma_k=(T_k^{-1})^\Gamma_e\,y^e_k\,,\q\q\q\q p^k_\Gamma=(T_k)^e_\Gamma\,\pi^k_e\,,\q\q\q\q e=1,\ldots,E\,.\label{variablesplit}
\ea
(For notational simplicity and assuming momentum matching, we henceforth drop the ${}^+,{}^-$ at the momenta.) In particular, we now have the $E-4V+10$ `gravitons'
\ba
y^\alpha_k=(T_k^{-1})^\alpha_ey^e_k=\frac{\partial\, \tilde{\epsilon}^\alpha}{\partial l^e_k}y^e_k\,.\nn
\ea
However, their conjugate momenta $p^k_\alpha$ thus defined are generally not invariant under the vertex displacement gauge symmetry generated by (\ref{CvI}) because
\ba\label{lala}
\{p^k_\alpha,C^k_{vI}\}=-(T_k)^e_\alpha N^k_{ee'}(Y_k)^{e'}_{vI}
\ea
may generally not vanish. As in \cite{Hoehn:2014aoa}, it is thus useful to perform a second linear canonical transformation. 

Beforehand, let us simplify the notation for the sequel and  define the following transformed matrices
\ba
\Omega^k_{\alpha\beta}&:=&(T_0)^e_\alpha \Omega^k_{ae}(T_k)^{e}_\beta\,,\nn\\
N^k_{\alpha\beta}&:=&(T_k)^e_\alpha N^k_{ee'}(T_k)^{e'}_\beta\,,\nn\\
N^k_{\alpha vI}&:=&(T_k)^e_\alpha N^k_{ee'}(Y_k)^{e'}_{vI}\,,\nn\\
N^k_{vIwJ}&:=&(Y_k)^e_{vI}N^k_{ee'}(Y_k)^{e'}_{wJ}\,,\label{Tcond3}
\ea
where $\Omega^k_{ae},N^k_{ee'}$ are given in (\ref{reggelinmoms}, \ref{matrixdefs}). 





To cleanly disentangle the gauge from the `graviton' variables, 
we carry out a second canonical transformation. Obviously, there exist many possible choices for such transformations. We choose one which leaves the configuration data $y^{vI}_k,y^\alpha_k$ invariant,
\ba
y^{vI}_k&\rightarrow&y^{vI}_k\,,\q\q\q p^k_{vI}\,\,\rightarrow\,\,\pi^k_{vI}:=(Y_k)^e_{vI}\,\pi^k_e-N^k_{vI\alpha}\,y^\alpha_k\,.\nn\\
y^{\alpha}_k&\rightarrow&y^{\alpha}_k\,,\q\q\q\,\,\,\,
p^k_{\alpha}\,\,\rightarrow\,\,\pi^{k}_\alpha\,\,:=(T_k)^e_\alpha\,\pi^{k}_e-N^{k}_{\alpha vI}\,y^{vI}_{k}\,.\label{ncantrans}
\ea
It is straightforward to check that this defines a canonical transformation with new canonical pairs $(y^{vI}_k,\pi^k_{vI})$ and $(y^\alpha_k,\pi^k_\alpha)$. In particular, the vertex displacement generators (\ref{CvI}) now appear in a form devoid of `graviton' variables
\ba\label{14purecons}
C^k_{vI}=\pi^k_{vI}-N^k_{vIwJ}\,y^{wJ}_k\,.
\ea
The $y^{vI}_k$, therefore, constitute the $4V-10$ gauge variables conjugate to the constraints. We may interpret them as four linearized `lapse and shift' variables associated to each vertex in $\Sigma_k$. Their conjugate momenta $\pi^k_{vI}$ are constrained by (\ref{14purecons}). Furthermore, we have 
\ba
\{y^\alpha_k,C^k_{vI}\}=0\,,\q\q\q\q\{\pi^k_\alpha,C^k_{vI}\}=0\,,\nn
\ea
such that $(y^\alpha_k,\pi^k_\alpha)$ form $E-4V+10$ canonical `graviton' pairs which are invariant under the vertex displacement symmetry of linearized Regge Calculus. 

It is desirable to have a proper geometric interpretation of the `graviton' momenta $\pi^k_\alpha$. For example, in linearized continuum general relativity, the gauge invariant momenta conjugate to the graviton variables are related to the extrinsic curvature. In the present discrete formulation it is a bit more difficult to give a precise geometric interpretation in terms of extrinsic geometry, in part because (1) clearly the $\pi^k_\alpha$ are not unique once the `gravitons' $y^\alpha_k$ are chosen, and (2) the matrices $(T_k)^e_\alpha$ and $N^k_{\alpha vI}$ cloud the information about the background geometry. For the time being, we leave the geometric interpretation of the `graviton' momenta as an open problem and content ourselves with the observation that the $\pi^k_\alpha$ are generators of geometry (i.e.\ linearized deficit angle) changing transformations $\{y^\alpha_k,\pi^k_\beta\}=\delta^\alpha_\beta$.

Let us now study their dynamics as generated by the Pachner moves.

\section{Pachner moves in 4D linearized Regge Calculus}\label{sec_linpach}

The Pachner moves locally evolve the `spatial' hypersurface $\Sigma$ forward in discrete time. When integrating out any new internal edges produced by these local evolution moves, the sequence of Pachner moves $k\rightarrow k+1$, $k+1\rightarrow k+2$, ... is equivalent to the sequence of {\it global} evolution moves $0\rightarrow k$, $0\rightarrow k+1$, $0\rightarrow k+2$, ... In other words, the Pachner moves update the global evolution moves and one implicitly considers the propagation of data from $\Sigma_0$ onto the evolving hypersurface $\Sigma_k$. Since the number of `gravitons' propagating from $\Sigma_0$ to $\Sigma_k$ cannot increase with growing $k$, the rank of the symplectic form on the evolving slice can only remain constant or decrease. Indeed, theorem 4.1 in \cite{Dittrich:2013jaa} shows for full non-perturbative Regge Calculus that the 1--4 and 2--3 Pachner moves preserve the symplectic form restricted to the post--constraint surfaces, whereas the 3--2 and 4--1 moves can reduce the rank of the symplectic form restricted to the post--constraint surface (depending on the specifics of the Regge triangulation). 


In section \ref{sec_gravcount} we have already revealed the general role assumed by each of the Pachner moves in the evolution. We shall study these roles now in detail. In particular, we shall see that the 1--4 moves generate the vertex displacement gauge generators (\ref{CvI}) and gauge variables $x^{vI}_k$ at each vertex which then are preserved by all Pachner moves until a 4--1 move renders the corresponding vertex internal and trivializes the associated constraints. The preservation of the four constraints per vertex in each hypersurface has to be expected as we will always work on solutions which inherit the gauge symmetry of the background. In fact, it directly follows from the general theorem 4.1 in \cite{Dittrich:2013jaa}. Nevertheless, we shall use it as an important consistency check of the formalism. In addition, the 2--3 moves `generate' gravitons, while the 3--2 moves `annihilate' them. As the Pachner moves constitute an elementary and ergodic set of evolution moves from which {\it all} other (topology preserving) evolution moves can be constructed, the end product of the present section will be a completely general account of the linearized (canonical) dynamics of Regge Calculus in 4D. 
For better readability we shall move technical details to appendix \ref{app_linpach}.

The decomposition of the canonial transformation matrix $(T_k)^e_\Gamma$ of section \ref{sec_tmatrix} changes on solutions to equations of motion because, in particular, the coarse graining or lattice shrinking evolution moves change the dynamical content of the system \cite{Hoehn:2014aoa}. After all, the classification of degrees of freedom is spacetime region dependent for temporally varying discretizations. It will therefore not come as a great surprise that we will not only have to extend or reduce, but also transform this matrix along the way of the Pachner evolution---after all, each $\Sigma_k$ is equipped with a different set of degrees of freedom. Specifically, a non-trivial transformation of $(T_k)^e_\Gamma$ will happen during the 3--2 moves which provide the {\it only} non--trivial equations of motion of the linearized theory. 

Finally, as regards notation: 
we continue to label {\it all} edges in $\Sigma_k\cap\Sigma_{k+1}$ by $e$, while, newly introduced edges are indexed by $n$ and old edges which are rendered internal are labeled by $o$. The linearized Pachner moves involve second derivatives of the action $S_\sigma$ of the newly glued 4-simplex $\sigma$. To simplify the notation in the sequel, we define:

%
%
%
%
%
\ba
S^\sigma_{ee'}&:=&\,\,\,\frac{\partial^2\,S_\sigma}{\partial l^e_k\partial l^{e'}_k}\,,\nn\\
S^\sigma_{\alpha\beta}&:=&(T_k)^e_\alpha S^\sigma_{ee'}(T_k)^{e'}_\beta\,,\nn\\
S^\sigma_{\alpha vI}&:=&(T_k)^e_\alpha S^\sigma_{ee'}(Y_k)^{e'}_{vI}\,,\nn\\
S^\sigma_{vIwJ}&:=&(Y_k)^e_{vI}S^\sigma_{ee'}(Y_k)^{e'}_{wJ}\,.\nn
\ea

\subsection{The `linearized' 1--4 Pachner move}\label{sec_14}

A 1--4 Pachner move performed on $\Sigma_k$ yields four new edges labeled by $n$, a new vertex $v^*\subset\Sigma_{k+1}$ and the corresponding four new `lapse and shift' gauge variables enumerated by $v^*I$, but does not introduce new bulk triangles (see figure \ref{14m}). The momentum updating (\ref{4c6}, \ref{4c6c}) reads in linearized form
\ba\label{lin14}
y^e_{k+1}&=&y^e_k\,,\q\q\pi^{k+1}_e=\pi^k_e+S^\sigma_{ee'}\,y^{e'}_{k+1}+S^\sigma_{en}\,y^{n}_{k+1}\,,\nn\\
\pi^k_n&=&0\,,\q\q\,\,\,\pi^{k+1}_n=S^\sigma_{nn'}\,y^{n'}_{k+1}+S^\sigma_{ne}\,y^{e}_{k+1}\,.
\ea
There are no linearized equations of motion involved in this move. Notice that the very last equation constitutes four post--constraints, one for each edge $n$.

Since $y^e_{k+1}=y^e_k$ and clearly $y^\alpha_{k+1}=y^\alpha_k$ (the previous deficit angles do not change under the addition of a new boundary simplex), we would also like to maintain the same linearized `lapse and shift' for the old vertices, $y^{vI}_{k+1}=y^{vI}_k$ ($y^n_{k+1}$ are not needed in order to determine the embedding of the old vertices). 

In appendix \ref{app_14}, using a suitable extension at step $k+1$ of the transformation matrix $(T_k)^e_\Gamma$, we show:
\begin{itemize}
\item[(i)] The vertex displacement generators of all already existing $v\subset\Sigma_k\cap\Sigma_{k+1}$ are preserved under (\ref{lin14}) as
\ba
C^k_{vI}=\pi^k_{vI}-N^k_{vIwJ}\,y^{wJ}_k=0\q\q\q\Rightarrow\q\q\q C^{k+1}_{vI}=\pi^{k+1}_{vI}-N^{k+1}_{vI\tilde{v}J}\,y^{\tilde{v}J}_{k+1}=0\,,\nn
\ea
where $\tilde{v}$ now includes both $v$ and the new $v^*$.

\item[(ii)] Similarly, the four new post--constraints in the last equation of (\ref{lin14}) become, upon contraction with the new $Y_{v^*I}$, the vertex displacement generators of the new vertex $v^*\subset\Sigma_{k+1}$. The latter read
\ba
C^{k+1}_{v^*I}:=\pi^{k+1}_{v^*I}-S^\sigma_{v^*I\tilde{v}J}\,y^{\tilde{v}J}_{k+1}\label{a14momgaup2}
\ea
and are conjugate to the four new `lapse and shift' gauge degrees of freedom $y^{v^*I}_{k+1}$ associated to $v^*$.

\item[(iii)] No new `gravitons' are produced. The momenta of the existing `gravitons' evolve under (\ref{lin14}) as
\ba
\pi^{k+1}_\alpha=\pi^k_\alpha+S^\sigma_{\alpha\beta}\,y^\beta_{k}\,.\nn
\ea
These are generally proper evolution equations and will be called {\it `graviton' momentum updating}. 
\end{itemize}

As before, the sets  $(y^{\tilde{v}I}_{k+1},\pi^{k+1}_{\tilde{v}I})$ and $(y^\alpha_{k+1},\pi^{k+1}_\alpha)$ are canonically conjugate pairs of gauge variables and gauge invariant `graviton' degrees of freedom, respectively.

\subsection{The `linearized' 2--3 Pachner move}\label{sec_23}

Performing a 2--3 Pachner move on $\Sigma_k$ introduces one new edge labeled by $n$ and a new {\it a priori} free deficit angle $\epsilon^{\alpha^*}$ (see figure \ref{23m}). The linearized momentum updating (\ref{4c6}, \ref{4c6c}) corresponding to the 2--3 move is in shape identical to (\ref{lin14})---with the sole difference that $n$ now labels only one new edge. Accordingly, no linearized equation of motion arises (no new internal edge is created) and there is now one post--constraint labeled by $n$. Again, $y^e_{k+1}=y^e_k$, $y^\alpha_{k+1}=y^\alpha_k$ and we also choose to keep $y^{vI}_{k+1}=y^{vI}_k$.

We show in appendix \ref{app_23}, using a suitable extension of the transformation matrix $(T_k)^e_\Gamma$, that:

\begin{itemize}
\item[(i)] The vertex displacement generators of all $v\subset\Sigma_k\cap\Sigma_{k+1}$ are preserved under the 2--3 Pachner move
\ba
C^k_{vI}=\pi^k_{vI}-N^k_{vIwJ}y^{wJ}_k=0\q\q\q\Rightarrow\q\q\q C^{k+1}_{vI}=\pi^{k+1}_{vI}-N^{k+1}_{vI{w}J}y^{{w}J}_{k+1}=0\,.\label{a23newcon4}
\ea

\item[(ii)] The momenta of the old `gravitons' evolve according to the {\it `graviton' momentum updating}
\ba
\pi^{k+1}_\alpha=\pi^k_\alpha+S^\sigma_{\alpha\tilde{\beta}}\,y^{\tilde{\beta}}_{k}\,\nn
\ea
under the 2--3 move, where $\tilde{\alpha}$ now runs over the old `gravitons' $\alpha$ and the new $\alpha^*$.

\item[(iii)] The momentum conjugate to the newly generated `graviton' $y^{\alpha^*}_{k+1}$ reads
\ba\label{23momobupnew}
\pi^{k+1}_{\alpha^*}=S^\sigma_{\alpha^* \tilde{\beta}}\,y^{\tilde{\beta}}_{k+1}\,.
\ea
This is the post--constraint of the 2--3 move of the linearized theory and a consequence of the vector $(T_{k+1})_{\alpha^*}$ being a right null vector of the effective Lagrangian two-form $\Omega^{k+1}$ at step $k+1$.
\end{itemize}

In contrast to the four new post--constraints (\ref{a14momgaup2}) produced during the 1--4 Pachner move which are generators of the vertex displacement gauge symmetry and only contain gauge variables and their conjugate momenta, the single post--constraint (\ref{23momobupnew}) of the 2--3 move constrains the momentum of the new `graviton' and generically does not constitute a gauge generator. It manifests the fact that the new `graviton' $y^{\alpha^*}_{k+1}$ is an {\it a priori} free variable that cannot be predicted by the data on $\Sigma_k$ (or $\Sigma_0$), i.e.\ that this `graviton' $y^{\alpha^*}_{k+1}$ did {\it not} propagate from $\Sigma_{k}$ (or $\Sigma_0$) to $\Sigma_{k+1}$. However, if $y^{\alpha^*}_{k+1}$ does not turn out to be also {\it a posteriori} free, it may propagate from $\Sigma_{k+1}$ onwards in an evolution $k+1\rightarrow k+x$ specified by initial data at $k+1$ (see also section \ref{sec_gravcount}). The post--constraint (\ref{23momobupnew}) can be interpreted as a {\it refinement consistency condition} which ensures that the fewer data of the `coarser' hypersurface $\Sigma_k$ can be consistently embedded in the larger phase space of the `finer' $\Sigma_{k+1}$.

 Finally, we note that $(y^{vI}_{k+1},\pi^{k+1}_{vI})$, $(y^{\tilde{\alpha}}_{k+1},\pi^{k+1}_{\tilde{\alpha}})$, again, define canonically conjugate pairs of gauge and `graviton' variables, respectively; all $(y^{\tilde{\alpha}}_{k+1},\pi^{k+1}_{\tilde{\alpha}})$, $\tilde{\alpha}=\alpha,\alpha^*$, Poisson commute with the vertex displacement gauge symmetry generators (\ref{a23newcon4}).

\subsection{The `linearized' 3--2 Pachner move}\label{sec_32}

A 3--2 Pachner move renders an old edge labeled by $o$ internal (see figure \ref{23m}) and, as we shall see shortly, removes a `graviton' which we label by $\alpha^*$ (the 3--2 move does not affect the number of vertices). This move causes quite a bit more trouble. The momentum updating (\ref{c14b}, \ref{c14c}) of the 3--2 move reads as follows in linearized form:
\ba\label{lin32}
y^e_{k+1}&=&y^e_k\,\q\q\q\pi^k_e=\pi^{k+1}_e-S^\sigma_{ee'}\,y^{e'}_k-S^\sigma_{eo}\,y^{o}_k\,,\nn\\
\pi^{k+1}_o&=&0\,,\q\q\q\pi^{k}_o=-S^\sigma_{oo'}\,y^{o'}_k-S^\sigma_{oe}\,y^{e}_k\,.
\ea
The very last equation constitutes the linearized pre--constraint of the 3--2 Pachner move. Using (\ref{reggelinmoms}) and (\ref{matrixdefs}) for $\pi^k_o$, we can write it as the equation of motion of the new internal edge labeled by $o$,
\ba\label{32geom}
\left(N^k_{oo}+S^\sigma_{oo}\right)\,y^o_{k}+\left(N^k_{oe}+S^\sigma_{oe}\right)\,y^e_{k}-\Omega^k_{ao}\,y^a_0=0\,.
\ea
In appendix \ref{app_32} it is shown that this equation of motion can be written entirely in terms of `gravitons'
\ba\label{a32gravlindep}
\left(N^k_{o\alpha^*}+S^\sigma_{o\alpha^*}\right)\,y^{\alpha^*}_k+\left(N^k_{o\alpha}+S^\sigma_{o\alpha}\right)\,y^{\alpha}_k+\Omega^k_{\gamma o}\,y^\gamma_0=0.
\ea
($y^\gamma_0$ is a `graviton' on $\Sigma_0$.) This pre--constraint of the 3--2 Pachner move yields the only non-trivial equation of motion for linearized Regge Calculus (the linearized equations of motion of the 4--1 move below are automatically satisfied). Being a pre--constraint, it can lead to the following situations:\footnote{It is difficult to preclude that a third situation may occur in which the pre--constraint is dependent on the post--constraints and thus automatically satisfied. However, in this case, the constraint would be a gauge generator \cite{Hoehn:2014aoa} despite only involving curvature degrees of freedom. This case is not plausible and we shall therefore assume it not to occur.}
\begin{itemize}
\item[(a)] It is independent of the post--constraints at $k$ and does {\it not} fix any of the {\it a priori} free data at $k$. That is, it is first class and restricts the space of solutions (space of initial data) leading to $\Sigma_k$.\footnote{In this case the 3--2 move reduces the rank of the symplectic form restricted to the post--constraint surface by two in the evolution from $k$ to $k+1$ \cite{Dittrich:2013jaa}.} The pre--constraint prevents one (configuration) `graviton' that propagated to $\Sigma_k$ from propagating further to $\Sigma_{k+1}$. This can be seen from (\ref{a32gravlindep}) which {\it cannot} contain any {\it a priori} free `graviton' in this case (otherwise it would become fixed) and thus `annihilates' one independent propagating `graviton' at $k$ by linear dependence with the others. 
\item[(b)] It is independent of the post--constraints but fixes one {\it a priori} free datum via (\ref{a32gravlindep}) which thus must be an {\it a priori} free `graviton'. In this case, the pre--constraint will be second class\footnote{Accordingly, it cannot further reduce the rank of the symplectic form restricted to the post--constraint surface \cite{Dittrich:2013jaa}.} and {\it not} prevent an actually propagating `graviton' from propagating further to $\Sigma_{k+1}$ because only an {\it a priori} free `graviton' that did {\it not} propagate to $\Sigma_k$ gets fixed. (Recall that the set of $E-4V+10$ {\it potentially} propagating `gravitons' at each step may contain {\it a priori} free modes.) 
\end{itemize}


In both cases, the equation of motion (\ref{a32gravlindep}) will require a non-trivial transformation of the matrix when going from $(T_k)$  to $(T_{k+1})$ as shown in appendix \ref{app_32}. We may choose $y^{\alpha^*}_k$ to be the `graviton' that either gets `annihilated' as in (a) or fixed as in (b). This change of canonical transformation matrix at $k+1$ entails a shift in the remaining `lapse and shift' and `graviton' variables
\ba
y^{vI}_{k+1}=(T_{k+1}^{-1})^{vI}_e\,y^e_{k+1}\neq y^{vI}_k \,,\q\q  y^\alpha_{k+1}=(T_{k+1}^{-1})^\alpha_e\,y^e_{k+1}\neq y^\alpha_k\,,\nn
\ea 
despite $y^e_{k+1}=y^e_k$. From step $k+1$ onwards we will employ $y^\alpha_{k+1}$ and $y^{vI}_{k+1}$ as `graviton' and gauge variables, respectively. As mentioned earlier, this does not come as a great surprise because the 3--2 move can be considered as a coarse graining or lattice shrinking move which changes the dynamical content of the system at a given time step. The pre--constraint (\ref{a32gravlindep}) may be interpreted as a {\it coarse graining consistency condition}, ensuring that the larger amount of dynamical data on the `finer' $\Sigma_k$ can be mapped to the smaller phase space of the `coarser' $\Sigma_{k+1}$, thereby reducing the amount of dynamical data at $k+1$. Ultimately, on each hypersurface $\Sigma_k$ we consider a different set of degrees of freedom: at step $k+1$ we now consider the propagation $0\rightarrow k+1$ and no longer $0\rightarrow k$. 

The necessity for the shift in the `graviton' modes may also be directly seen from (\ref{gravsplit}) which gives the linearized `effective' deficit angles: the `gravitons' $y^t_k$ were only those contributions from these effective deficit angles that depend on the data at step $k$. When some of these data, in this case $y^o_k$, becomes internal, the corresponding equation of motion shifts part of the contribution to $y^t_k$ from $\frac{\partial\,\tilde{\epsilon}_t}{\partial l^o_{k}}y^o_{k}$ to $\frac{\partial\,\tilde{\epsilon}_t}{\partial l^a_0}y^a_0$ and $\frac{\partial\,\tilde{\epsilon}_t}{\partial l^e_{k}}y^e_{k}$. That is, after $y^o_k$ has been integrated out, the contribution from the data in $\Sigma_{k+1}$ to the effective deficit angles has shifted and, accordingly, the `graviton' modes become shifted too. On the other hand, it is also clear that the contributions of the various edges to the $y^{vI}$, i.e.\ the $(T^{-1})^{vI}_c$, must be transferred to different edges in the course of the Pachner move evolution since all edges which initially determined the embedding of the vertex (e.g.\ after a 1--4 move) may become internal before the vertex itself is rendered internal. It is therefore neither surprising that also the gauge modes---corresponding to the embedding coordinates of the vertices---experience a shift.

Furthermore, using the transformed $(T_{k+1})$ at $k+1$, it is demonstrated in appendix \ref{app_32} that:
\begin{itemize}
\item[(i)] The vertex displacement generators of all $v\subset\Sigma_k\cap\Sigma_{k+1}$ are preserved under the 3--2 Pachner move
\ba
C^k_{vI}=\pi^k_{vI}-N^k_{vIwJ}\,y^{wJ}_k=0\q\q\q\Rightarrow\q\q\q C^{k+1}_{vI}=\pi^{k+1}_{vI}-N^{k+1}_{vI{w}J}\,y^{{w}J}_{k+1}=0\,\nn
\ea

\item[(ii)] The pre--constraint of the 3--2 move trivializes into the momentum conjugate to the `annihilated' or fixed `graviton'
\ba
\pi^{k+1}_{\alpha^*}=0\,.\nn
\ea

\item[(iii)] On account of the change in the transformation matrix when going from $k$ to $k+1$ one cannot write the `graviton' momentum updating in the simple form as for the 1--4 and 2--3 moves. The equations taking its place are not illuminating and will not be reproduced here. Instead, the remaining `graviton' momenta can also be written as
\ba
\pi^{k+1}_\alpha=N^{k+1}_{\alpha\beta}\,y^\beta_{k+1}-\Omega^{k+1}_{\gamma\alpha} \,y^\gamma_0\,.\nn
\ea
That is, through the non--trivial equations of motion of the 3--2 move, the graviton momenta generally depend on the initial data---in contrast to the momenta conjugate to the gauge modes which are just constrained.

\end{itemize}

Similarly to before, also the shifted variables $(y^{vI}_{k+1},\pi^{k+1}_{vI})$ and $(y^\alpha_{k+1},\pi^{k+1}_\alpha)$ yield a canonically conjugate set of gauge and `graviton' degrees of freedom for step $k+1$.

Finally, one may wonder whether the three new bulk triangles generated during the 3--2 move yield any new `gravitons'. Indeed, these lead to new linearized `effective' deficit angles (\ref{yt}) and therefore to `gravitons'. However, these are linearly dependent on the ones already present at step $k$: it follows from section \ref{sec_gravcount} that the rank of the matrix $\f{\p\tilde{\epsilon}^\alpha}{\p l^e_{k+1}}$ is $(E-1)-4V+10$ after the 3--2 move because the number of vertices did not change. Since we had $E-4V+10$ independent such `gravitons' at step $k$ and there is only one non--trivial pre--constraint (\ref{a32gravlindep}) in the move, the old set of `gravitons' is simply reduced to precisely a set of $(E-1)-4V+10$ independent ones at $k+1$.

\subsection{The `linearized' 4--1 Pachner move}\label{sec_41}

A 4--1 Pachner move pushes an old vertex, labeled by $v^*$, and four old edges adjacent to it, indexed by $o$, into the bulk of the triangulation (see figure \ref{14m}). The momentum updating (\ref{c14b}, \ref{c14c}) of the 4--1 move in linearized form coincides in shape with (\ref{lin32}) of the 3--2 move, with the sole difference that $o$ in this case actually labels four new internal edges and that, accordingly, there are now four pre--constraints. In particular, rewritten as the four linearized equations of motion of the new bulk edges, they read
\ba
\left(N^k_{oo'}+S^\sigma_{oo'}\right)\,y^{o'}_{k}+\left(N^k_{oe}+S^\sigma_{oe}\right)\,y^e_{k}-\Omega^k_{ao}\,y^a_0=0\,,\q\q o=1,\ldots,4\,.\label{a41lineom}
\ea
As shown in appendix \ref{app_41} all coefficients in (\ref{a41lineom}) vanish identically such that the equations of motion of the 4--1 move, in contrast to the one from the 3--2 move, are trivially satisfied. The four pre--constraints of the 4--1 move coincide with the four post--constraints at the four--valent vertex $v^*$. This does not come as a great surprise because, as we already anticipated in section \ref{sec_lincanon}, the vertex displacement generators are constraints which are simultaneously pre-- and post--constraints. As an aside, from the discussion in \cite{Dittrich:2013jaa} it then follows that, thanks to this property, the 4--1 Pachner move of the linearized theory preserves and does {\it not} reduce the rank of the symplectic form (\ref{linomega}) restricted to the post--constraint surface on the evolving slice -- in agreement with the fact that it leaves the number of `gravitons' invariant.

As a result, and in contrast to the 3--2 move, for the 4--1 move the reduction of the transformation matrix $(T_k)$ to the new $(T_{k+1})$ turns out to be trivial. Specifically, this gives $y^e_{k+1}=y^e_k$, $y^\alpha_{k+1}=y^\alpha_k$ and $y^{vI}_{k+1}=y^{vI}_k$. This implies that the embedding of the remaining vertices $v$ does not depend on the edges adjacent to $v^*$ and that the `gravitons' do not depend on the linearized lengths of the removed edges $y^o_{k}$.

Employing the reduction of the transformation matrix, it is further shown in appendix \ref{app_41} that:

\begin{itemize}

\item[(i)] The vertex displacement generators of all surviving $v\subset\Sigma_k\cap\Sigma_{k+1}$ are preserved under the 4--1 Pachner move
\ba
C^k_{vI}=\pi^k_{vI}-N^k_{vI\tilde{w}J}\,y^{\tilde{w}J}_k=0\q\q\q\Rightarrow\q\q\q C^{k+1}_{vI}=\pi^{k+1}_{vI}-N^{k+1}_{vI{w}J}\,y^{{w}J}_{k+1}=0\,,\nn
\ea
where $\tilde{w}$ runs over both $v$ and $v^*$. The `annihilated' `lapse and shift' $y^{v^*J}_{k}$ drop out after the move.

\item[(ii)] The four vertex displacement generators of the removed $v^*$ become trivialized
\ba
C^k_{v^*I}=\pi^k_{v^*I}-N^k_{v^*I\tilde{w}J}\,y^{\tilde{w}J}_k=0\q\q\q\Rightarrow\q\q\q C^{k+1}_{v^*I}=\pi^{k+1}_{v^*I}=0\,.\nn
\ea

\item[(iii)] The `gravitons' experience a {\it `graviton' momentum updating} during the 4--1 move
\ba
\pi^{k+1}_\alpha=\pi^k_\alpha+S^\sigma_{\alpha\beta}\,y^{\beta}_{k}\,.\nn
\ea
(The new `graviton' momenta are determined entirely with the new $(T_{k+1})$.)

\end{itemize}

As in the other moves, the remaining $(y^{vI}_{k+1},\pi^{k+1}_{vI})$ and $(y^\alpha_{k+1},\pi^{k+1}_{\alpha})$ are canonically conjugate pairs of gauge and `graviton' degrees of freedom, respectively; the $(y^\alpha_{k+1},\pi^{k+1}_{\alpha})$ Poisson commute with the surviving $C^{k+1}_{vI}$ and are thus invariant under the vertex displacement gauge symmetry.

Lastly, we note that, since the number of independent `gravitons' is left invariant by the 4--1 move, the six new bulk triangles produced during the move yield six new `gravitons' that are linearly dependent on the already present ones.

%
%
%
%
%

\section{Discussion}\label{sec_disc}

In this article we have given a comprehensive and systematic account of the canonical dynamics of 4D linearized Regge Calculus by means of the elementary and ergodic Pachner evolution moves. The origin of the vertex displacement gauge symmetry of the linearized sector was clarified and the abelian generators of this symmetry were derived. We have identified `lattice gravitons' as gauge invariant and {\it potentially} propagating curvature degrees of freedom. The temporally varying number of linearly independent such `gravitons' on the evolving phase spaces can be systematically counted using the `spatial' triangulation changing Pachner moves. We have elucidated the distinct role of each of the four Pachner evolution moves in the linearized theory and showed that the constraints generating the vertex displacement symmetry are consistent with the dynamics and preserved under all moves. This stands in stark contrast to direct discretizations of continuum constraints which usually result in second class constraints that are not automatically preserved by evolution \cite{Piran:1985ke,Friedman:1986uh,Loll:1997iw}.

One may wonder what happens to the dynamics at first non-linear order in the perturbation around flat Regge solutions. At least for the so-called `tent moves' \cite{Barrett:1994ks,Sorkin:1975ah,Tuckey:1993yf} this has been analyzed in \cite{Dittrich:2009fb}: to second order in the perturbation, the gauge symmetries of Regge Calculus become broken. Consistency conditions arise which can be interpreted as the first (in terms of orders of expansion) equations of motion of the background gauge modes which must propagate once the symmetries get broken. As a result, linearized solutions can generally not be extended to higher order solutions---unless the consistency conditions on the background {\it can} be solved. As shown in \cite{Dittrich:2012qb}, this may generally not be feasible such that perturbative expansions can become inconsistent. This appears analogous to the linearization instabilities in general relativity \cite{fischer,Moncrief:1975un,Moncrief:1976un}. The difference is, however, that in perturbative general relativity consistency conditions arise on the first order physical degrees of freedom, while in perturbative Regge Calculus the consistency conditions constrain the background gauge modes. Furthermore, as shown in \cite{Dittrich:2009fb}, the vertex displacement generators at first non--linear order turn into pseudo constraints \cite{Bahr:2011xs,Piran:1985ke,Friedman:1986uh,Gambini:2002wn,Gambini:2005sv,Gambini:2005vn} with dependence on background data from different time steps.

The abelian Poisson algebra of the vertex displacement generators may be interpreted as the discrete analogue of Dirac's hypersurface deformation algebra in Regge Calculus. These generate 4D symmetry deformations of the hypersurfaces (i.e.\ 4D lattice diffeomorphisms), however, do not generate the dynamics. This is as good as it gets in Regge Calculus because the symmetries become broken to higher order and the generators turn into pseudo constraints such that a hypersurface deformation algebra cannot exist in full 4D Regge Calculus. A consistent hypersurface deformation algebra can only exist in simplicial gravity if the diffeomorphism symmetry is preserved \cite{Bonzom:2013tna}. To this end, one may attempt to change the discretization by coarse graining techniques in order to improve the action order by order such that the symmetry is preserved to higher orders \cite{Dittrich:2014ala,Bahr:2009qc,Bahr:2011uj,Dittrich:2012qb,Dittrich:2011vz,Bahr:2010cq,Dittrich:2013voa}.

The Regge action also emerges in the semiclassical limit of spin foam models for quantum gravity \cite{Perez:2012wv,Conrady:2008mk,Barrett:2009gg} for which, moreover, proposals for a construction of a suitable `graviton propagator' have been made \cite{Rovelli:2005yj,Alesci:2007tg}. The hope is that the results of the present article, through offering a detailed classical understanding of the `lattice graviton' dynamics in linearized Regge Calculus, can likewise contribute to a better understanding of the `graviton' dynamics in spin foam models. The hope is also that the present work may provide novel insights into connecting the covariant spin foam with the canonical loop quantum gravity (LQG) dynamics. For example, a regularization of the LQG Hamiltonian constraint, motivated from a spin foam perspective, has been put forward in \cite{Alesci:2010gb} and was shown to generate 1--4 Pachner moves in the spin network, but the other moves were left open. This article suggests that all four Pachner evolution moves should, indeed, be considered in order to get a non-trivial and complete dynamics -- at least in Regge Calculus, and therefore, presumably, also in spin foams and LQG. As seen, starting from a single simplex, a pure 1--4 move evolution generates stacked spheres with trivial dynamics. The other moves are needed in order to get `gravitons' into the picture.

Finally, we note that a consistent framework for a quantization of linearized Regge Calculus on a flat background already exists. The quantum formalism in \cite{Hoehn:2014wwa} has been designed precisely for a quantization of the Pachner moves that directly yields an equivalence between the canonical and covariant dynamics also in the quantum theory. This quantization framework has thus far been spelled out in detail for variational discrete systems with flat Euclidean configuration spaces $\cq=\mathbb{R}^N$. But this is all that is needed for linearized Regge Calculus: while in the full theory one has $\cq=\mathbb{R}^N_+$ (lengths cannot be negative), in the linearized theory we can, in principle, have $y^e_k\in\mathbb{R}$ (if the expansion parameter $\varepsilon$ is sufficiently small). In particular, a generalized form of local evolution moves has been quantized and their distinct coarse graining and refining roles on evolving Hilbert spaces has been studied in \cite{Hoehn:2014wwa} in close analogy to the present classical investigations. These results are directly applicable to the linearized Pachner evolution moves. To end with a specific example, these results imply that the quantum linearized 4--1 move always produces divergences in a state sum because of the vertex displacement symmetry.

\appendix

\section{Degeneracies of Hessian and Lagrangian two--form}\label{app_deg}

In this appendix we shall demonstrate the statements made in section \ref{sec_deg} according to which the `spatial' components $Y^e_{vI}$ for $v\subset\Sigma_k$ define degeneracies of `effective' Hessians and Lagrangian two-forms on flat Regge triangulations. 

Denote by $S=S_{k}+S_{\text{rest}}$ the action contribution of the flat triangulation depicted in figure \ref{4dstarcomp}, where $S_{\text{rest}}$ denotes the action contribution from $\ct_{\text{rest}}$. We use the notation explained in section \ref{sec_deg} and, furthermore, label any edges adjacent to $v$ in $\ct_{\text{rest}}$ which do {\it not} lie in $\Sigma_k$ by $q$ with $l^q_k$ denoting their corresponding lengths. 

Next, choose a vector $Y_{vI}$, for the specific $v\subset\Sigma_k$, whose 4D star has been completed. (\ref{bia3}, \ref{p2SR}) imply
\ba\label{appYvan}
&&Y^e_{vI}\frac{\partial^2\,S}{\partial l^e_k\partial l^i_k}+Y^{i'}_{vI}\frac{\partial^2\,S}{\partial l^{i'}_k\partial l^i_k}+Y^{q}_{vI}\frac{\partial^2\,S}{\partial l^{q}_k\partial l^i_k}=Y^e_{vI}\frac{\partial^2\,S_{k}}{\partial l^e_k\partial l^i_k}+Y^{i'}_{vI}\frac{\partial^2\,S_{k}}{\partial l^{i'}_k\partial l^i_k}=0\,,\nn\\
&&Y^e_{vI}\frac{\partial^2\,S}{\partial l^e_k\partial l^a_0}+Y^{i'}_{vI}\frac{\partial^2\,S}{\partial l^{i'}_k\partial l^a_0}+Y^{q}_{vI}\frac{\partial^2\,S}{\partial l^{q}_k\partial l^a_0}=Y^e_{vI}\frac{\partial^2\,S_{k}}{\partial l^e_k\partial l^a_0}+Y^{i'}_{vI}\frac{\partial^2\,S_{k}}{\partial l^{i'}_k\partial l^a_0}=0\,.
\ea
Let us explain the first equalities. The last terms on the left hand sides of (\ref{appYvan}) vanish since the edges labeled by $q$ will not share any 4--simplex with any of the edges labeled by $a,i$ and, hence, second derivatives of $S$ with respect to a pair of length variables associated to such a pair of edges must vanish. Additionally, since the only simplices which contain pairs of edges from the set labeled by $a,i$ or pairs of edges from both the set labeled by $a,i$ and $e$ already occur in the triangulation at step $k$, we can restrict the second partial derivatives of $S$ in the remaining terms to the second partial derivatives of $S_{k}$ and the expressions on the right hand sides of (\ref{appYvan}) are obtained. 

Using (\ref{lagrange}), the second line directly implies (see also appendix A of \cite{Dittrich:2013jaa} for $\Omega^k$)
\ba
\Omega^{k}_{ae}Y^e_{vI}+\Omega^k_{ai}Y^i_{vI}=-Y^e_{vI}\frac{\partial^2\,S_{k}}{\partial l^e_k\partial l^a_0}-Y^{i'}_{vI}\frac{\partial^2\,S_{k}}{\partial l^{i'}_k\partial l^a_0}=0\,.
\ea

Let us now show that the `spatial' components of $Y^e_{vI}$ associated to the edges $e\subset\Sigma_k$ define also right null vectors of the `effective' Lagrangian two--form $\tilde{\Omega}^{k}$, corresponding to the `effective' action $\tilde{S}_{k}$ with the bulk lengths $l^i_k$ integrated out.
It is straightforward to check (e.g., see appendix A of \cite{Dittrich:2013jaa}) that the `effective' Lagrangian two--form (\ref{lagrange2}) reads
\ba\label{appeff-omega2}
\frac{\partial^2\,\tilde{S}_{k}}{\partial l^e_k\partial l^a_0}=\frac{\partial^2\,S_{k}}{\partial l^e_k\partial l^a_0}-\frac{\partial^2\,S_{k}}{\partial l^e_k\partial l^i_k}\,T^i_A\left(T^{i_1}_A\frac{\partial^2\,S_{k}}{\partial l^{i_1}_k\partial l^{i_2}_k}T^{i_2}_{A'}\right)^{-1}T^{i'}_{A'}\,\frac{\partial^2\,S_{k}}{\partial l^{i'}_k\partial l^a_0}\,,
\ea
where $T^i_A$ are linearly independent non--degenerate directions of the Hessian $\frac{\partial^2\,S_{k}}{\partial l^i_k\partial l^{i'}_k}$ of the bulk between $\Sigma_0$ and $\Sigma_k$. We need to project the latter matrix with these $T^i_A$ in order to factor out the degenerate directions and render the resulting matrix invertible (see also \cite{Hoehn:2014aoa} for a related discussion). We note that $Y^i_{vI}$ for $v\subset\Sigma_k$ will generally {\it not} define a null vector of $\frac{\partial^2\,S_{k}}{\partial l^i_k\partial l^{i'}_k}$ because the degenerate directions of the latter will correspond to displacements in flat directions of vertices in the bulk of the triangulation between $\Sigma_0$ and $\Sigma_k$ and not in the boundary surface $\Sigma_k$. Otherwise, the first equation in (\ref{appYvan}) would imply $Y^e_{vI}\frac{\partial^2\,S_{k}}{\partial l^e_k\partial l^i_k}=0$ which is generally not possible. Accordingly, we may choose the four vectors $Y^i_{vI}$ to be contained in the set $T^i_A$.

Using the right hand sides of both equations in (\ref{appYvan}), one finds 
\ba\label{appomegadeg}
\tilde{\Omega}^{k}_{ae}Y^e_{vI}&=&-Y^e_{vI}\frac{\partial^2\,\tilde{S}_{k}}{\partial l^e_k\partial l^a_0}=-Y^e_{vI}\frac{\partial^2\,S_{k}}{\partial l^e_k\partial l^a_0}-\underset{=\,\delta_{vI}^{A'}}{\underbrace{Y^{i'}_{vI}\,\f{\p^2\, S_{k}}{\p l^{i'}_k\p l^i_k}\,T^i_A\left(T^{i_1}_A\frac{\partial^2\,S_{k}}{\partial l^{i_1}_k\partial l^{i_2}_k}T^{i_2}_{A'}\right)^{-1}}}T^{j}_{A'}\,\frac{\partial^2\,S_{k}}{\partial l^{j}_k\partial l^a_0}\nn\\
&=&-Y^e_{vI}\frac{\partial^2\,S_{k}}{\partial l^e_k\partial l^a_0}-Y^{i'}_{vI}\frac{\partial^2\,S_{k}}{\partial l^{i'}_k\partial l^a_0}=0\,.
\ea

Finally, we shall briefly demonstrate that the $Y^e_{vI}$ constitute degenerate directions of the `effective Hessian' with edges labeled by both $i$ and $q$ integrated out. Namely, consider the completed 4D star of the vertex $v\subset\Sigma_k$ as given above with $\ct_{\text{rest}}$ glued onto $\Sigma_k$. Given that $Y_{vI}$ is a null vector of the (non--effective) Hessian and in analogy to (\ref{appYvan}), we must have 
\ba\label{appYvan2}
Y^e_{vI}\frac{\partial^2\,S}{\partial l^e_k\partial l^i_k}+Y^{i'}_{vI}\frac{\partial^2\,S}{\partial l^{i'}_k\partial l^i_k}&=&0\,,\nn\\
Y^e_{vI}\frac{\partial^2\,S}{\partial l^e_k\partial l^{e'}_k}\,\,\,+Y^{i}_{vI}\frac{\partial^2\,S}{\partial l_k^{i}\partial l_k^{e'}}&=&0\,,
\ea
where for notational simplicity we have here combined the two indices $i$ and $q$ into the single index $i$. Analogously to (\ref{appeff-omega2}), the `effective' Hessian of the effective action $\tilde{S}$ with $l^i_k,l^q_k$ integrated out reads
\ba\label{eff-hesse}
\tilde{H}_{ee'}:=\frac{\partial^2\,\tilde{S}}{\partial l^e_k\partial l^{e'}_k}=\frac{\partial^2\,S}{\partial l^e_k\partial l_k^{e'}}-\frac{\partial^2\,S}{\partial l_k^e\partial l_k^i}\,T^i_A\left(T^{i_1}_A\frac{\partial^2\,S}{\partial l_k^{i_1}\partial l_k^{i_2}}T^{i_2}_{A'}\right)^{-1}T^{i'}_{A'}\,\frac{\partial^2\,S}{\partial l_k^{i'}\partial l_k^{e'}}\,.
\ea
In conjunction with (\ref{appYvan2}), and in analogy to (\ref{appomegadeg}), one finds the desired result
\ba
Y^e_{vI}\tilde{H}_{ee'}=0\,.\nn
\ea

\section{Linearized Pachner moves}\label{app_linpach}

In this appendix we shall confirm the claims of section \ref{sec_linpach} concerning the linearized canonical Pachner move dynamics. As regards notation: sometimes we shall use an index $c$ to label both $e,n$ or $e,o$.

\subsection{The `linearized' 1--4 Pachner move}\label{app_14}

We shall now demonstrate the statements of section \ref{sec_14}.

Consider a hypersurface $\Sigma_k$ and assume the transformation matrix $(T_k)^e_\Gamma$ has been chosen according to the prescription in section \ref{sec_tmatrix}. That is, at step $k$ we have
\ba\label{14oldsplit}
y^e_k=(T_k)^e_{vI}y^{vI}_k+(T_k)^e_\alpha y^\alpha_k\,,\q\q\q y^{vI}_k=(T_k^{-1})^{vI}_ey^e_k\,,\q\q
\q y^\alpha_k=(T_k^{-1})^\alpha_ey^e_k\,.
\ea
Perform a 1--4 Pachner move on $\Sigma_k$ yielding a new vertex $v^*$ (see figure \ref{14m}). Since we now have four new edges labeled by $n$ and four new gauge variables enumerated by $v^*I$ (the 1--4 move does not introduce new bulk triangles) we must extend the transformation matrix at step $k+1$ suitably. This extended matrix must be in agreement with the prescription in section \ref{sec_tmatrix} and should yield the new decomposition
\ba\label{14newsplit}
y^e_{k+1}&=&(T_{k+1})^e_{vI}y^{vI}_{k+1}+(T_{k+1})^e_\alpha y^\alpha_{k+1}+(T_{k+1})^e_{v^*I}y^{v^*I}_{k+1}\,,\nn\\
 y^n_{k+1}&=&(T_{k+1})^n_{vI}y^{vI}_{k+1}+(T_{k+1})^n_\alpha y^\alpha_{k+1}+(T_{k+1})^n_{v^*I}y^{v^*I}_{k+1}\,,\nn\\
y^{vI}_{k+1}&=&(T_{k+1}^{-1})^{vI}_ey^e_{k+1}+(T_{k+1}^{-1})^{vI}_ny^n_{k+1}\,,\nn\\
y^\alpha_{k+1}&=&(T_{k+1}^{-1})^\alpha_ey^e_{k+1}+(T_{k+1}^{-1})^\alpha_ny^n_{k+1}\,,\nn\\
 y^{v^*I}_{k+1}&=&(T_{k+1}^{-1})^{v^*I}_ey^e_{k+1}+(T_{k+1}^{-1})^{v^*I}_ny^n_{k+1}\,.
\ea

In order to maintain the same linearized configuration coordinates for the old vertices, $y^{vI}_{k+1}=y^{vI}_k$ ($y^e_{k+1}=y^e_k$ and $y^n_{k+1}$ are not needed in order to determine the embedding of the old vertices), we set
\ba\label{14Toldchoice}
(T_{k+1})^e_\Gamma=(T_k)^e_\Gamma\,,\q\q (T_{k+1}^{-1})^\Gamma_e=(T_k^{-1})^\Gamma_e\,,
\ea
where $\Gamma$ runs over the old $vI$ and $\alpha$ (but does not include the $v^*I$). Include both $\Gamma$ and the four $v^*I$ in a new index $\Lambda$, and $e$ and $n$ in the index $c$. Using
\ba
(T_{k+1})^c_\Lambda(T_{k+1}^{-1})^\Lambda_{c'}=\delta^c_{c'}\,,\q\q\q\q (T_{k+1}^{-1})^\Lambda_{c}(T_{k+1})^c_{\Lambda'}=\delta^\Lambda_{\Lambda'}\,,\nn
\ea
it is straightforward to convince oneself that the new components of the transformation matrix at $k+1$ can accordingly be chosen as (note that $y^n_{k+1}$ do not contribute to any `gravitons')
\ba\label{14Tsol}
(T_{k+1})^n_{vI}&=&(Y_{k+1})^n_{vI}\,,\q\q (T_{k+1})^n_{v^*I}=(Y_{k+1})^n_{v^*I}=\delta^n_I\,,\nn\\
(T_{k+1})^n_\alpha&=&0\,,\q\q\q\q\q\,\, (T_{k+1})^e_{v^*I}=(Y_{k+1})^e_{v^*I}=0\,,
\ea
with inverse\footnote{The precise form of $(T_{k+1}^{-1})^{v^*I}_e$ is not relevant for us. Notice, however, that it cannot vanish, since, apart from the $y^n_{k+1}$, some of the $y^e_{k+1}$ are necessary in order to specify the embedding of the new vertex (the position of a vertex $v$ also depends on edges in the boundary of $\text{star}_{4D}(v)$). }
\ba
(T_{k+1}^{-1})^{vI}_n&=&0\,,\q\q \,(T_{k+1}^{-1})^{v^*I}_n=\delta^I_n\,,\nn\\
(T_{k+1}^{-1})^{v^*I}_e&\neq& 0\,,\q\q\q(T_{k+1})^\alpha_n=0\,.\nn
\ea

Given this choice of the new matrix $(T_{k+1})^c_\Lambda$, we may study the behaviour of the gauge variables and `gravitons' under the linearized momentum updating equations (\ref{lin14}) of the 1--4 move. 
Let us begin by considering the momenta conjugate to the old gauge degrees of freedom. Using (\ref{14Toldchoice}, \ref{14Tsol}, \ref{lin14}), we find (recall that $c$ runs over both $e$ and $n$)
\ba\label{14Ymomup}
(Y_{k+1})^c_{vI}\pi^{k+1}_c\hspace{-.15cm}&=&(Y_{k+1})^c_{vI}\left(\pi^k_c+S^\sigma_{cc'}y^{c'}_{k+1}\right)\nn\\
&\underset{\pi^k_n=0}{=}&\pi^k_{vI}+N^{k}_{vI\alpha}y^{\alpha}_{k}+S^\sigma_{vI\alpha}y^\alpha_{k+1}+S^\sigma_{vIwJ}y^{wJ}_{k+1}+S^\sigma_{vI v^*J}y^{v^*J}_{k+1}\,,
\ea
where in the last equation we have made use of (\ref{ncantrans}) and (\ref{14newsplit}). As a consequence of the absence of equations of motion for the 1--4 move, one finds in the present case $N^{k+1}_{cc'}=N^{k}_{cc'}+S^\sigma_{cc'}$ where $N^k_{ee'}$ is defined in (\ref{matrixdefs}).\footnote{Notice that $N^k_{en}=0$.} As in (\ref{ncantrans}), the new momenta conjugate to the gauge modes $y^{vI}_{k+1}$ are then (recall $y^\alpha_{k+1}=y^\alpha_k$)
\ba\label{14momgaup}
\pi^{k+1}_{vI}&:=&(Y_{k+1})^c_{vI}\pi^{k+1}_c-N^{k+1}_{vI\alpha}y^{\alpha}_{k+1}\nn\\
&=&\pi^k_{vI}+S^\sigma_{vI\tilde{v}J}y^{\tilde{v}J}_{k+1}\,,
\ea
where $\tilde{v}$ now includes both $v$ and $v^*$. Solving (\ref{14purecons}) for $\pi^k_{vI}$, inserting this into (\ref{14momgaup}) and noting that $y^{vI}_k=y^{vI}_{k+1}$, the previous apparent `evolution equations' rather transform into the new constraints at $k+1$ generating the vertex displacement of $v$ in $\Sigma_{k+1}$,
\ba\label{14newcon2}
C^{k+1}_{vI}=\pi^{k+1}_{vI}-N^{k+1}_{vI\tilde{v}J}y^{\tilde{v}J}_{k+1}\,,
\ea
which are, thus, preserved. Proceeding similarly with the new gauge modes $y^{v^*I}_{k+1}$, one finds the four new constraints introduced by the 1--4 move which generate the displacement of the new vertex $v^*$ in $\Sigma_{k+1}$ as
\ba\label{14momgaup2}
\pi^{k+1}_{v^*I}&:=&(Y_{k+1})^c_{v^*I}\pi^{k+1}_c-S^\sigma_{v^*I\alpha}y^{\alpha}_{k+1}\nn\\
&=& S^\sigma_{v^*I\tilde{v}J}y^{\tilde{v}J}_{k+1}\,,
\ea
where (\ref{14Tsol}) and $\pi^k_{v^*I}=(Y_{k+1})^n_{v^*I}\pi^k_n=0$ was used (recall that $\pi^k_n=0$).

Finally, let us examine the evolution of the `graviton' momenta. In analogy to (\ref{14Ymomup}),
\ba
(T_{k+1})^c_\alpha\pi^{k+1}_c&=&(T_{k+1})^c_\alpha\left(\pi^k_c+S^\sigma_{cc'}y^{c'}_{k+1}\right)\nn\\
&=&\pi^k_\alpha+N^{k}_{\alpha vI}y^{vI}_{k}+S^\sigma_{\alpha\beta}y^\beta_{k+1}+S^\sigma_{\alpha vI}y^{vI}_{k+1}+S^\sigma_{\alpha v^*I}y^{v^*I}_{k+1}\,,\nn
\ea
such that, using (\ref{ncantrans}) at $k+1$ and noting that $y^\alpha_k=y^\alpha_{k+1}$,
\ba\label{14momobup}
\pi^{k+1}_\alpha&:=&(T_{k+1})^c_\alpha\pi^{k+1}_c-N^{k+1}_{\alpha\tilde{v}I}y^{\tilde{v}I}_{k+1}\nn\\
&=&\pi^k_\alpha+S^\sigma_{\alpha\beta}y^\beta_{k}\,.
\ea
In contrast to (\ref{14momgaup}), these are generally {\it not} constraints. 


\subsection{The `linearized' 2--3 Pachner move}\label{app_23}

Next, we establish the statements of section \ref{sec_23}.

As before, take a hypersurface $\Sigma_k$ and assume $(T_k)^e_\Gamma$ has been chosen in accordance with the prescription of section \ref{sec_tmatrix} such that (\ref{14oldsplit}) holds. Now perform a 2--3 Pachner move on $\Sigma_k$ which introduces one new edge labeled by $n$ and a new {\it a priori} free deficit angle $\epsilon^{\alpha^*}$ (see figure \ref{23m}). The transformation matrix must be extended in a suitable way in order to incorporate the new degrees of freedom in the splitting between `graviton' and gauge variables, such that after the 2--3 move
\ba\label{23newsplit}
y^e_{k+1}&=&(T_{k+1})^e_{vI}y^{vI}_{k+1}+(T_{k+1})^e_\alpha y^\alpha_{k+1}+(T_{k+1})^e_{\alpha^*}y^{\alpha^*}_{k+1}\,,\nn\\
 y^n_{k+1}&=&(T_{k+1})^n_{vI}y^{vI}_{k+1}+(T_{k+1})^n_\alpha y^\alpha_{k+1}+(T_{k+1})^n_{\alpha^*}y^{\alpha^*}_{k+1}\,,\nn\\
y^{vI}_{k+1}&=&(T_{k+1}^{-1})^{vI}_ey^e_{k+1}+(T_{k+1}^{-1})^{vI}_ny^n_{k+1}\,,\nn\\
\q y^\alpha_{k+1}&=&(T_{k+1}^{-1})^\alpha_ey^e_{k+1}+(T_{k+1}^{-1})^\alpha_ny^n_{k+1}\,,\nn\\
 y^{\alpha^*}_{k+1}&=&(T_{k+1}^{-1})^{\alpha^*}_ey^e_{k+1}+(T_{k+1}^{-1})^{\alpha^*}_ny^n_{k+1}\,.
\ea
Again, $y^e_{k+1}=y^e_k$, $y^\alpha_{k+1}=y^\alpha_k$ and we also choose to keep $y^{vI}_{k+1}=y^{vI}_k$. The extension of $(T_k)^e_\Gamma$ at step $k+1$ can therefore be performed in complete analogy to the extension of the transformation matrix in the course of the 1--4 move in appendix \ref{app_14}---just replacing $v^*I$ by $\alpha^*$ in the equations and noting that $n$ now labels a single edge. In particular, we again keep (\ref{14Toldchoice}) and, in analogy to (\ref{14Tsol}), find
\ba\label{23Tsol}
(T_{k+1})^n_{vI}&=&(Y_{k+1})^n_{vI}\,,\q\q(T_{k+1})^n_{\alpha^*}=\frac{1}{(T_{k+1}^{-1})^{\alpha^*}_n}\,,\nn\\
(T_{k+1})^e_{\alpha^*}&=&0\,,\q\q\q\q\,\,\,\,\q(T_{k+1})^n_\alpha=-\frac{(T_{k+1}^{-1})^{\alpha^*}_e}{(T_{k+1}^{-1})^{\alpha^*}_n}(T_{k+1})^e_\alpha\neq0\,,
\ea
with inverse ($y^n_{k+1}$ does not contribute to the deficit angles inherited from step $k$)
\ba
(T_{k+1}^{-1})^{\alpha^*}_n&=&\frac{\partial\,\tilde{\epsilon}^{\alpha^*}}{\partial l^n_{k+1}}\,,\q\q(T_{k+1}^{-1})^{vI}_n=0\,,\nn\\
(T_{k+1}^{-1})^{\alpha^*}_e&=&\frac{\partial\,\tilde{\epsilon}^{\alpha^*}}{\partial l^e_{k+1}}\,,\q\q(T_{k+1}^{-1})^\alpha_n\,\,=0\,.\nn
\ea
We emphasise that, because the length $l^n_{k+1}$ of the new edge introduced in the 2--3 move determines the new deficit angle, we generically have $\frac{\partial\,\tilde{\epsilon}^{\alpha^*}}{\partial l^n_{k+1}}\neq0$ and so the components on the right hand side of (\ref{23Tsol}) are well defined. At this stage, the new matrix of step $k+1$ is chosen in agreement with section \ref{sec_tmatrix}. It should be noted that the only non--vanishing component of the vector $(T_{k+1})_{\alpha^*}$ is $(T_{k+1})^n_{\alpha^*}$ corresponding to the new {\it a priori} free edge of the 2--3 move. As can be easily checked, this vector is therefore a right null vector at step $k+1$, i.e.\ $\Omega^{k+1}_{ac}(T_{k+1})^c_{\alpha^*}=0$, where $c$ labels both $e,n$, in agreement with the fact that the new `graviton' $y^{\alpha^*}_{k+1}$ is an {\it a priori} free variable.

Using the new transformation matrix, let us now study the time evolution equations. The momentum updating of the 2--3 move in linearized form is in shape identical to (\ref{lin14}) except that $n$ now labels a single new edge. For the momenta conjugate to the gauge modes we find, in analogy to (\ref{14Ymomup}),
\ba
(Y_{k+1})^c_{vI}\pi^{k+1}_c&=&(Y_{k+1})^c_{vI}\left(\pi^k_c+S^\sigma_{cc'}y^{c'}_{k+1}\right)\nn\\
&=&\pi^k_{vI}+N^{k}_{vI\alpha}y^{\alpha}_{k}+S^\sigma_{vI\alpha}y^\alpha_{k+1}+S^\sigma_{vIwJ}y^{wJ}_{k+1}+S^\sigma_{vI \alpha^*}y^{\alpha^*}_{k+1}\,,\nn
\ea
such that (again, $N^{k+1}_{cc'}=N^k_{cc'}+S^\sigma_{cc'}$ because there is no new equation of motion)
\ba\label{23momgaup}
\pi^{k+1}_{vI}&:=&(Y_{k+1})^c_{vI}\pi^{k+1}_c-N^{k+1}_{vI\tilde{\alpha}}y^{\tilde{\alpha}}_{k+1}\nn\\
&=&\pi^k_{vI}+S^\sigma_{vIwJ}y^{wJ}_{k+1}\,,
\ea
where $\tilde{\alpha}$ runs over both $\alpha$ and $\alpha^*$. Solving (\ref{14purecons}) for $\pi^k_{vI}$, (\ref{23momgaup}), again, transforms into the new constraints, generating the displacement of vertices of $\Sigma_{k+1}$ in flat directions,
\ba\label{23newcon4}
C^{k+1}_{vI}=\pi^{k+1}_{vI}-N^{k+1}_{vIwJ}y^{wJ}_{k+1}\,,
\ea
which are thus preserved under the 2--3 move.

Likewise, for the momenta conjugate to the old `gravitons' one finds
\ba
(T_{k+1})^c_\alpha\pi^{k+1}_c&=&(T_{k+1})^c_\alpha\left(\pi^k_c+S^\sigma_{cc'}y^{c'}_{k+1}\right)\nn\\
&=&\pi^k_\alpha+N^{k}_{\alpha vI}y^{vI}_{k}+S^\sigma_{\alpha\beta}y^\beta_{k+1}+S^\sigma_{\alpha vI}y^{vI}_{k+1}+S^\sigma_{\alpha \alpha^*}y^{\alpha^*}_{k+1}\,,\nn
\ea
which yields as in (\ref{14momobup}),
\ba\label{23momobup}
\pi^{k+1}_\alpha&:=&(T_{k+1})^c_\alpha\pi^{k+1}_c-N^{k+1}_{\alpha vI}y^{vI}_{k+1}\nn\\
&=&\pi^k_\alpha+S^\sigma_{\alpha\tilde{\beta}}y^{\tilde{\beta}}_{k}\,.\nn
\ea
Similarly, using (\ref{23Tsol}), $\pi^k_n=0$ and the fact that the `old gravitons' satisfy $y^{{\beta}}_{k}=y^{{\beta}}_{k+1}$, the momentum conjugate to the newly generated `graviton' $y^{\alpha^*}_{k+1}$ reads
\ba
\pi^{k+1}_{\alpha^*}&:=&(T_{k+1})^n_{\alpha^*}\pi^{k+1}_n-S^\sigma_{\alpha^* vI}y^{vI}_{k+1}=(T_{k+1})^n_{\alpha^*}\pi^k_n+S^\sigma_{\alpha^* \tilde{\beta}}y^{\tilde{\beta}}_{k+1}=\pi^k_{\alpha^*}+S^\sigma_{\alpha^* \tilde{\beta}}y^{\tilde{\beta}}_{k+1}\nn\\
&=&S^\sigma_{\alpha^* \tilde{\beta}}y^{\tilde{\beta}}_{k+1}\,.\nn
\ea

\subsection{The `linearized' 3--2 Pachner move}\label{app_32}

Here we shall support the claims made in section \ref{sec_32}.

Consider a hypersurface $\Sigma_k$ on which we shall perform a 3--2 Pachner move which renders an old edge labeled by $o$ internal (see figure \ref{23m}). Assume the transformation matrix $(T_k)^c_\Gamma$, where $c$ runs over both $e$ and $o$ and $\Gamma$ runs over $vI,\alpha,\alpha^*$, is chosen according to the prescription of section \ref{sec_tmatrix} such that
\ba\label{32oldsplit}
y^e_{k}&=&(T_k)^e_{vI}y^{vI}_{k}+(T_k)^e_\alpha y^\alpha_{k}+(T_k)^e_{\alpha^*}y^{\alpha^*}_{k}\,,\nn\\
 y^o_{k}&=&(T_k)^o_{vI}y^{vI}_{k}+(T_k)^o_\alpha y^\alpha_{k}+(T_k)^o_{\alpha^*}y^{\alpha^*}_{k}\,,\nn\\
y^{vI}_{k}&=&(T_k^{-1})^{vI}_ey^e_{k}+(T_k^{-1})^{vI}_oy^o_{k}\,,\nn\\
 y^\alpha_{k}&=&(T_k^{-1})^\alpha_ey^e_{k}+(T_k^{-1})^\alpha_oy^o_{k}\,,\nn\\
 y^{\alpha^*}_{k}&=&(T_k^{-1})^{\alpha^*}_ey^e_{k}+(T_k^{-1})^{\alpha^*}_oy^o_{k}\,,
\ea
where $y^{\alpha^*}_k$ is such that the old edge has a non--vanishing contribution, i.e.\ $(T^{-1})^{\alpha^*}_o\neq 0$ (such $y^{\alpha^*}_k$ generically exists). This will be the `annihilated graviton'.

Let $E$ be the number of edges in $\Sigma_k$. The task is to appropriately reduce the $E\times E$ transformation matrix of step $k$ to a new $(E-1)\times(E-1)$ matrix at step $k+1$ which likewise disentangles the $4V-10$ gauges modes $y^{vI}_{k+1}$ from the $(E-1)-4V+10$ `gravitons' $y^\alpha_{k+1}$ in $\Sigma_{k+1}$ and agrees with the prescription of section \ref{sec_tmatrix}. After the move it should yield
\ba\label{32newsplit}
y^e_{k+1}&=&(T_{k+1})^e_{vI}y^{vI}_{k+1}+(T_{k+1})^e_\alpha y^\alpha_{k+1}\,,\q\q y^{vI}_{k+1}=(T_{k+1}^{-1})^{vI}_ey^e_{k+1}\,\nn\\ y^\alpha_{k+1}&=&(T_{k+1}^{-1})^\alpha_ey^e_{k+1}\,.
\ea
To this end, we must make use of the equation of motion (\ref{32geom}) or, equivalently, the pre--constraint of the 3--2 move. 
Thanks to the results of section \ref{sec_deg}, one may convince oneself that
\ba\label{32Yhesse}
(Y_{k})^c_{vI}\left(N^k_{co}+S^\sigma_{co}\right)=0\,.
\ea
In conjunction with the decomposition (\ref{32oldsplit}), this implies that (\ref{32geom}) can be written as
\ba\label{32gravlindep}
\left(N^k_{o\alpha^*}+S^\sigma_{o\alpha^*}\right)y^{\alpha^*}_k+\left(N^k_{o\alpha}+S^\sigma_{o\alpha}\right)y^{\alpha}_k+\Omega^k_{\gamma o}y^\gamma_0=0
\ea
(use of $\Omega^k_{ao}(Y_0)^a_{vI}=0$ and a similar decomposition for $\Sigma_0$ has been made).

We can employ this equation to produce the new $(T_{k+1})$ from $(T_k)$. This will require some work. Firstly, we choose $y^{\alpha^*}_k$ to be the `graviton' that either gets `annihilated' or fixed by (\ref{32gravlindep}) (it was chosen to depend on $y^o_k$). For this `graviton' we may keep the old decomposition and set
\ba
(T_{k+1}^{-1})^{\alpha^*}_c&:=&(T_{k}^{-1})^{\alpha^*}_c\,,\nn
\ea
such that $y^{\alpha^*}_k=y^{\alpha^*}_{k+1}$.

Next, solve the pre--constraint in the form (\ref{32geom}) for $y^o_k(y^e_k,y^a_0)$ (generically, $N^k_{oo}+S^\sigma_{oo}\neq0$) and insert the solution into (\ref{32oldsplit}), in order to rewrite the expressions for $y^{vI}_k,y^\alpha_k$. It gives
\ba\label{32split1}
y^{vI}_k&=&(T_{k+1}^{-1})^{vI}_ey^e_{k}+(T_{k+1}^{-1})^{vI}_ay^a_0=:y^{vI}_{k+1}+\delta y^{vI}_0\,,\nn\\
y^{\alpha}_k&=&(T_{k+1}^{-1})^\alpha_ey^e_{k}\,\,+\,(T_{k+1}^{-1})^\alpha_ay^a_0\,=:y^{\alpha}_{k+1}+\delta y^\alpha_0\,,
\ea
where it can be easily checked that the coefficients of the new (effective) inverse transformation matrix (with $y^o_k$ integrated out) read
\ba
(T_{k+1}^{-1})^{vI}_e&:=&(T_{k}^{-1})^{vI}_e-(T_{k}^{-1})^{vI}_o\left(N^k_{oo}+S^\sigma_{oo}\right)^{-1}\left(N^k_{oe}+S^\sigma_{oe}\right)\,,\nn\\
(T_{k+1}^{-1})^{vI}_o&:=&0\,,\nn\\
(T_{k+1}^{-1})^{vI}_a&:=&(T_{k}^{-1})^{vI}_o\left(N^k_{oo}+S^\sigma_{oo}\right)^{-1}\Omega^k_{ao}\,,\nn\\
(T_{k+1}^{-1})^\alpha_e&:=&(T_{k}^{-1})^\alpha_e-(T_{k}^{-1})^\alpha_o\left(N^k_{oo}+S^\sigma_{oo}\right)^{-1}\left(N^k_{oe}+S^\sigma_{oe}\right)\,,\nn\\
(T_{k+1}^{-1})^\alpha_o&:=&0\,,\nn\\
(T_{k+1}^{-1})^\alpha_a&:=&(T_{k}^{-1})^\alpha_o\left(N^k_{oo}+S^\sigma_{oo}\right)^{-1}\Omega^k_{ao}\,.\nn
\ea
Hence, using that by (\ref{lin32}) $y^e_{k+1}=y^e_k$ and dropping the terms $\delta y^{vI}_0,\delta y^\alpha_0$ depending on the initial data,
\ba
y^{vI}_{k+1}=(T_{k+1}^{-1})^{vI}_ey^e_{k+1}\neq y^{vI}_k \,,\q\q  y^\alpha_{k+1}=(T_{k+1}^{-1})^\alpha_ey^e_{k+1}\neq y^\alpha_k\,.\nn
\ea 
From step $k+1$ onwards we will employ these shifted $y^\alpha_{k+1}$ and $y^{vI}_{k+1}$ as `graviton' and gauge variables, respectively. 


We proceed by using (\ref{32gravlindep}) to solve for $y^{\alpha^*}_k$ as a function of $y^\alpha_k$ and $y^\gamma_0$, and rewrite the first equation in (\ref{32oldsplit}),
\ba\label{32ye}
y^e_{k}=(T_{k+1})^e_\alpha y^\alpha_k+(T_{k+1})^e_{vI}y^{vI}_k+(T_{k+1})^e_\gamma y^\gamma_0
\ea
which gives the components of the new (effective) $(T_{k+1})$ as follows
\ba
(T_{k+1})^e_\alpha&:=&(T_k)^e_\alpha-(T_k)^e_{\alpha^*}\left(N^k_{o\alpha^*}+S^\sigma_{o\alpha^*}\right)^{-1}\left(N^k_{o\alpha}+S^\sigma_{o\alpha}\right)\,,\nn\\
(T_{k+1})^e_{\alpha^*}&:=&0\,,\nn\\
(T_{k+1})^e_{vI}&:=&(Y_k)^e_{vI}\,,\nn\\
(T_{k+1})^e_\gamma&:=&(T_k)^e_{\alpha^*}\left(N^k_{o\alpha^*}+S^\sigma_{o\alpha^*}\right)^{-1}\Omega^k_{\gamma o}\,.\nn
\ea
Further using the new splitting (\ref{32split1}) and noting that by (\ref{lin32}) $y^e_{k+1}=y^e_k$, (\ref{32ye}) may be conveniently written solely in terms of the new `graviton' and gauge modes (one may convince oneself that the contributions from the initial data drop out)
\ba
y^e_{k+1}=(T_{k+1})^e_\alpha y^\alpha_{k+1}+(Y_{k+1})^e_{vI}y^{vI}_{k+1}\,.\nn
\ea

Finally, making the ansatz
\ba
y^o_{k+1}=(T_{k+1})^o_\alpha y^\alpha_{k+1}+(T_{k+1})^o_{\alpha^*}y^{\alpha^*}_{k+1}+(T_{k+1})^o_{vI}y^{vI}_{k+1}\,,\nn
\ea
one finds that
\ba
(T_{k+1})^o_\alpha&=&(T_k)^o_\alpha-\left((T_k)^o_{\alpha^*}-\frac{1}{(T_k^{-1})^{\alpha^*}_o}\right)\left(N^k_{o\alpha^*}+S^\sigma_{o\alpha^*}\right)^{-1}\left(N^k_{o\alpha}+S^\sigma_{o\alpha}\right)\,,\nn\\
(T_{k+1})^o_{\alpha^*}&=&\frac{1}{(T_k^{-1})^{\alpha^*}_o}\,,\nn\\
(T_{k+1})^o_{vI}&=&(Y_k)^o_{vI}\,,\nn
\ea
yields the remaining components of the new (effective) $(T_{k+1})$ which provides the new decomposition (\ref{32newsplit}), as desired. It is straightforward to check that the new transformation matrix follows the prescription of section \ref{sec_tmatrix} and is an invertible matrix that defines a canonical transformation (provided the old one did). In fact, the new matrix is now in shape analogous to the extended transformation matrix of the 2--3 Pachner move (\ref{23Tsol}) (with $n$ replaced by $o$).

With the transformed matrix in hand, we are in a position to determine the momenta conjugate to the new gauge and `graviton' variables via (\ref{ncantrans}). Noting that 
\ba
N^{k+1}_{ee'}=\left(N^k_{ee'}+S^\sigma_{ee'}\right)-\left(N^k_{eo}+S^\sigma_{eo}\right)\left(N^k_{oo}+S^\sigma_{oo}\right)^{-1}\left(N^k_{oe'}+S^\sigma_{oe'}\right)\,,\nn
\ea
(\ref{32Yhesse}) implies
\ba
(Y_{k+1})^e_{vI}N^{k+1}_{ee'}=(Y_k)^c_{vI}\left(N^k_{ce'}+S^\sigma_{ce'}\right)\,,\label{32YN}
\ea
which allows us to define (recall $(T_{k+1})^e_{\alpha^*}=0$)
\ba
N^{k+1}_{vI\alpha}&:=&(Y_{k+1})^e_{vI}N^{k+1}_{ee'}(T_{k+1})^{e'}_\alpha=(Y_{k+1})^c_{vI}\left(N^k_{ce'}+S^\sigma_{ce'}\right)(T_{k+1})^{e'}_\alpha\,,\nn\\
 N^{k+1}_{vI\alpha^*}&:=&(Y_{k+1})^e_{vI}N^{k+1}_{ee'}({T}_{k+1})^{e'}_{\alpha^*}=(Y_{k+1})^c_{vI}\left(N^k_{ce'}+S^\sigma_{ce'}\right)({T}_{k+1})^{e'}_{\alpha^*}=0\,,\nn\\
 N^{k+1}_{vIwJ}&:=&(Y_{k+1})^e_{vI}N^{k+1}_{ee'}(Y_{k+1})^{e'}_{wJ}=(Y_{k+1})^c_{vI}\left(N^k_{ce'}+S^\sigma_{ce'}\right)(Y_{k+1})^{e'}_{wJ}\, .\nn
\ea
As a result of $\pi^{k+1}_o=0$, this leads via (\ref{ncantrans}) to the new momenta at step $k+1$
\ba
\pi^{k+1}_{vI}&:=&(Y_{k+1})^e_{vI}\pi^{k+1}_e-N^{k+1}_{vI\alpha}y^\alpha_{k+1}\,,\nn\\
\pi^{k+1}_\alpha&:=&(T_{k+1})^e_\alpha\pi^{k+1}_e-N^{k+1}_{\alpha vI}y^{vI}_{k+1}\,,\nn
\ea
(Both new sets of momenta are computed entirely from variables and matrix components associated to $\Sigma_{k+1}$.) It is not difficult to verify that the shifted variables $(y^{vI}_{k+1},\pi^{k+1}_{vI})$ and $(y^\alpha_{k+1},\pi^{k+1}_\alpha)$ yield a canonically conjugate set of gauge and `graviton' modes. In particular, using that $(Y_{k+1})^e_{vI}=(Y_k)^e_{vI}$, (\ref{32YN}) and $\pi^{k+1}_o=0$, one easily checks that the vertex displacement generators (\ref{14purecons}) are preserved under the 3--2 Pachner moves (\ref{lin32}), yielding
\ba
C^{k+1}_{vI}=(Y_{k+1})^e_{vI}\left(\pi^{k+1}_e-N^{k+1}_{ee'}y^{e'}_{k+1}\right)=\pi^{k+1}_{vI}-N^{k+1}_{vIwJ}y^{wJ}_{k+1}\,.\nn
\ea
Furthermore, noting that $N^{k+1}_{vI\alpha^*}=0$ and $(T_{k+1})^e_{\alpha^*}=0$, the pre--constraint of the 3--2 move (last equation in (\ref{lin32})) trivializes into the momentum conjugate to the `annihilated' or fixed `graviton'
\ba
\pi^{k+1}_{\alpha^*}=0\,.\nn
\ea

For completeness, we mention that thanks to 
\ba
\pi^{k+1}_e=\pi^k_{e}+S^\sigma_{ec}y^c_k=\left(N^k_{ec}+S^\sigma_{ec}\right)y^c_k-\Omega^k_{ea}y^a_0\underset{(\ref{32geom})}{=}N^{k+1}_{ee'}y^{e'}_{k+1}-\Omega^{k+1}_{ea}y^a_0\,,\nn
\ea
the remaining `graviton' momenta can also be written as
\ba
\pi^{k+1}_\alpha=N^{k+1}_{\alpha\beta}y^\beta_{k+1}-\Omega^{k+1}_{\gamma\alpha} y^\gamma_0\,.\nn
\ea


\subsection{The `linearized' 4--1 Pachner move}\label{app_41}

Finally, we back up the claims made in section \ref{sec_41} concerning the linearized 4--1 move. 

Assume that a 4--1 Pachner move can be performed on a hypersurface $\Sigma_k$ and that $(T_k)^c_\Gamma$ has been chosen in conformity with the prescription in section \ref{sec_tmatrix}. The 4--1 move will move an old vertex, which we label by $v^*$ and four old edges adjacent to it, which we index by $o$, into the bulk of the triangulation (see figure \ref{14m}). Accordingly, the index $c$ runs over both $e,o$ and at step $k$ we have
\ba\label{41oldsplit}
y^e_{k}&=&(T_k)^e_{vI}y^{vI}_{k}+(T_k)^e_\alpha y^\alpha_{k}+(T_k)^e_{v^*I}y^{v^*I}_{k}\,,\nn\\
 y^o_{k}&=&(T_k)^o_{vI}y^{vI}_{k}+(T_k)^o_\alpha y^\alpha_{k}+(T_k)^o_{v^*I}y^{v^*I}_{k}\,,\nn\\
y^{vI}_{k}&=&(T_k^{-1})^{vI}_ey^e_{k}+(T_k^{-1})^{vI}_oy^o_{k}\,,\nn\\
 y^\alpha_{k}&=&(T_k^{-1})^\alpha_ey^e_{k}+(T_k^{-1})^\alpha_oy^o_{k}\,,\nn\\
 y^{v^*I}_{k}&=&(T_k^{-1})^{v^*I}_ey^e_{k}+(T_k^{-1})^{v^*I}_oy^o_{k}\,.
\ea
In analogy to the 3--2 move, we must appropriately reduce the old $E\times E$ canonical transformation matrix to a new $(E-4)\times(E-4)$ matrix at $k+1$ which disentangles the surviving gauge and `graviton' variables
\ba\label{41newsplit}
y^e_{k+1}=(T_{k+1})^e_{vI}y^{vI}_{k+1}+(T_{k+1})^e_\alpha y^\alpha_{k+1}\,,\q\,\,\,\, y^{vI}_{k+1}=(T_{k+1}^{-1})^{vI}_ey^e_{k+1}\,,\q\,\,\,\, y^\alpha_{k+1}=(T_{k+1}^{-1})^\alpha_ey^e_{k+1}\,.\nn
\ea
Fortunately, and in contrast to the 3--2 move, for the 4--1 move the reduction of $(T_k)$ to $(T_{k+1})$ turns out to be trivial---just like the linearized equations of motion of this move.

Clearly, at step $k$ we must have $(T_k)^e_{v^*I}=(Y_k)^e_{v^*I}=0$ (components of the $(Y_k)_{vI}$ corresponding to edges not adjacent to the vertex in question vanish). One easily checks that the condition of invertibility of the T--matrix then leads to the following two conditions that must be satisfied:
\ba
\delta^{vI}_{v^*J}=(T_k^{-1})^{vI}_o(T_k)^o_{v^*J}\overset{!}{=}0\,,\q\q \delta^{\alpha}_{v^*I}=(T_k^{-1})^{\alpha}_o(T_k)^o_{v^*I}\overset{!}{=}0\,.\nn
\ea
$(T_k)^o_{v^*I}=(Y_k)^o_{v^*I}$ is a non--degenerate $4\times4$ matrix (there are four linearly independent gauge directions and four edges adjacent to $v^*$). Hence, $(T_k^{-1})^{vI}_o=0$ and $(T_k^{-1})^{\alpha}_o=0$. The conjunction of these results, as one may convince oneself, implies that already the restriction of $(T_k)$ at $k$ to the $(E-4)\times(E-4)$ submatrix 
\ba\label{41Tnewchoice}
(T_{k+1})^e_\Lambda=(T_k)^e_\Lambda\,,\q \q(T_{k+1}^{-1})^\Lambda_e=(T_k^{-1})^\Lambda_e\,,\nn
\ea
where $\Lambda$ only runs over $vI$ and $\alpha$ (but not $v^*I$) yields the desired invertible $(T_{k+1})$ of step $k+1$. The matrix reduction of the 4--1 move is just the time reverse of the matrix extension of the 1--4 move. Specifically, this gives $y^e_{k+1}=y^e_k$, $y^\alpha_{k+1}=y^\alpha_k$ and $y^{vI}_{k+1}=y^{vI}_k$. 

Let us now study the time evolution equations. Firstly, the equations of motion read 
\ba
\left(N^k_{oo'}+S^\sigma_{oo'}\right)y^{o'}_{k}+\left(N^k_{oe}+S^\sigma_{oe}\right)y^e_{k}-\Omega^k_{ao}y^a_0=0\,.\label{41lineom}
\ea
However, for the 4--1 move these are trivial because the results in section \ref{sec_deg} entail that
 \ba\label{n0}
(Y_k)^o_{v^*I}\left(N^k_{oc}+S^\sigma_{oc}\right)=0\nn
\ea
and, as a consequence of $(Y_k)^o_{v^*I}$ being a non--degenerate $4\times4$ matrix, $N^k_{oc}+S^\sigma_{oc}=0$. Similarly, one finds $\Omega^k_{ao}=0$ such that all coefficients in (\ref{41lineom}) vanish. 

Next, we shall examine the consequences of the linearized momentum updating for the gauge and `graviton' variables. We recall that the linearized momentum updating for the 4--1 move is in shape identical to (\ref{lin32}) from the 3--2 move, except that $o$ now labels four edges. We begin by the momenta conjugate to the gauge variables that survive the move. Noting that $\pi^{k+1}_o=0$ and using (\ref{ncantrans}),
\ba
(Y_{k+1})^e_{vI}\pi^{k+1}_e&=&(Y_{k+1})^c_{vI}\pi^{k+1}_c=(Y_{k+1})^c_{vI}\left(\pi^k_c+S^\sigma_{cc'}y^{c'}_{k}\right)\nn\\
&=&\pi^k_{vI}+N^{k}_{vI\alpha}y^{\alpha}_{k}+S^\sigma_{vI\alpha}y^\alpha_{k}+S^\sigma_{vI\tilde{w}J}y^{\tilde{w}J}_{k}\,,\nn
\ea
where $\tilde{w}$ runs over both $v$ and $v^*$. As a consequence of $N^k_{oc}+S^\sigma_{oc}=0$ and despite new internal edges, $N^{k+1}_{ee'}=N^k_{ee'}+S^\sigma_{ee'}$. Since $y^\alpha_{k+1}=y^\alpha_k$, this gives
\ba\label{41momgaup}
\pi^{k+1}_{vI}&:=&(Y_{k+1})^c_{vI}\pi^{k+1}_c-N^{k+1}_{vI\alpha}y^{\alpha}_{k+1}=(Y_{k+1})^e_{vI}\pi^{k+1}_e-(Y_{k+1})^e_{vI}N^{k+1}_{ee'}(T_{k+1})^{e'}_\alpha y^{\alpha}_{k+1}\nn\\
&=&\pi^k_{vI}+S^\sigma_{vI\tilde{w}J}y^{\tilde{w}J}_{k}\,.\nn
\ea
The second equation in the first line shows that, as desired, the new momenta can be computed from the reduced $(T_{k+1})$. Again, solving (\ref{14purecons}) for $\pi^k_{vI}$ converts this apparent evolution equation into the vertex displacement generators of step $k+1$,
\ba\label{41newcon}
C^{k+1}_{vI}=\pi^{k+1}_{vI}-N^{k+1}_{vIwJ}y^{wJ}_{k+1}\,,\nn
\ea
where, by $(Y_{k+1})^e_{v^*I}=0$, $N^{k+1}_{vIv^*J}=0$ and the `annihilated' gauge modes $y^{v^*J}_{k}$ thus drop out. The vertex displacement generators (at neighbouring vertices) are therefore preserved under 4--1 moves as well.

On the other hand, for the momenta conjugate to the four gauge modes associated to $v^*$ one finds (recall that $(Y_{k+1})^e_{v^*I}=0$)
\ba
(Y_{k+1})^c_{v^*I}\pi^{k+1}_c&=&(Y_{k+1})^o_{v^*I}\pi^{k+1}_o=0=(Y_{k})^o_{v^*I}\left(\pi^k_o+S^\sigma_{oc}y^c_k\right)\nn\\
&=&\pi^k_{v^*I}+(Y_{k})^o_{v^*I}\left(\underset{=0}{\underbrace{(N^k_{o\alpha}+S^\sigma_{o\alpha})}}y^\alpha_k+S^\sigma_{o\tilde{v}J}y^{\tilde{v}J}_k\right)\,.\nn
\ea
Using (\ref{ncantrans}), this trivializes the four pre--constraints of the 4--1 move by transforming them into the new momenta conjugate to the `annihilated' gauge modes
\ba
\pi^{k+1}_{v^*I}=0=\pi^k_{v^*I}+S^\sigma_{v^*I\tilde{v}J}y^{\tilde{v}J}_{k}\,.\nn
\ea

Finally, let us address the evolution of the `graviton' momenta. We have
\ba
(T_{k+1})^e_\alpha\pi^{k+1}_e&=&(T_{k+1})^c_\alpha\pi^{k+1}_c=(T_{k})^c_\alpha\left(\pi^k_c+S^\sigma_{cc'}y^{c'}_{k}\right)\nn\\
&=&\pi^k_\alpha+N^{k}_{\alpha \tilde{v}I}y^{\tilde{v}I}_{k}+S^\sigma_{\alpha\beta}y^\beta_{k}+S^\sigma_{\alpha \tilde{v}I}y^{\tilde{v}I}_{k}\,.\nn
\ea
Making use of $N^k_{oc}+S^\sigma_{oc}=0$ and $(Y_{k+1})^e_{v^*I}=0$, one discovers that the `annihilated' gauge modes $y^{v^*I}_{k}$ drop out so that (recall $y^{vI}_{k+1}=y^{vI}_k$)
\ba\label{41momobup}
\pi^{k+1}_\alpha&:=&(T_{k+1})^c_\alpha\pi^{k+1}_c-N^{k+1}_{\alpha vI}y^{vI}_{k+1}=(T_{k+1})^e_\alpha\pi^{k+1}_e-(T_{k+1})^e_\alpha N^{k+1}_{ee'}Y^{e'}_{vI}y^{vI}_{k+1}\nn\\
&=&\pi^k_\alpha+S^\sigma_{\alpha\beta}y^{\beta}_{k}\,.\nn
\ea
The second equality in the first line demonstrates that the new graviton momenta can be determined entirely via $(T_{k+1})$, while the last equality constitutes the usual {\it `graviton' momentum updating}.

\section*{Acknowledgements}
It is a pleasure to thank Bianca Dittrich for discussion. This work has mostly been completed while the author was still at the Institute for Theoretical Physics of Universiteit Utrecht. Research at Perimeter Institute is supported by the Government of Canada through Industry Canada and by the Province of Ontario through the Ministry of Research and Innovation and by the John Templeton Foundation.

\bibliography{bibliography}{}
\bibliographystyle{utphys}

\end{document}